\documentclass[11pt]{article}
\pdfoutput=1 
\usepackage{amssymb}
\usepackage{amsmath}
\usepackage{amstext}
\usepackage{mathrsfs}
\usepackage{multirow}
\usepackage{graphicx}
\usepackage{epsfig}
\usepackage{verbatim} 
\usepackage{fancybox}
\usepackage{color}
\usepackage{mathdots}
\usepackage{anyfontsize}
\usepackage{ulem}
\usepackage{enumitem}
\usepackage{caption}
\usepackage{amsthm}
\usepackage{subfig}
\usepackage{mathtools}
\usepackage{youngtab}
\usepackage{braket}
\usepackage{bbm}
\usepackage{parskip}
\usepackage[numbers,sort&compress]{natbib}
\usepackage{ytableau}
\usepackage[utf8]{inputenc}
\usepackage{tikz}
\usetikzlibrary{arrows.meta}
\usepackage{simplewick}
\usepackage{overpic}
\usetikzlibrary{positioning, trees,decorations, decorations.markings, decorations.pathmorphing, arrows, shapes.geometric}

\newcommand{\Comment}[1]{{}}
\definecolor{darkblue}{rgb}{0.15,0.35,0.55}
\definecolor{reddish}{rgb}{0.65, 0.2, 0.2}
\usepackage[linktocpage=true]{hyperref}
\hypersetup{
colorlinks=true,
citecolor=darkblue,
linkcolor=reddish,
urlcolor=darkblue,
pdfauthor={},
pdftitle={},
pdfsubject={}
}

\usepackage{cleveref}

\newcommand{\pvec}[1]{\vec{#1}\mkern2mu\vphantom{#1}}

\newcommand{\be}{\begin{equation}}
\newcommand{\ee}{\end{equation}}

\renewcommand{\vec}[1]{\boldsymbol{#1}}
\renewcommand{\pvec}[1]{\vec{#1}}
\newcommand{\cutoff}{\mathsf{\Lambda}}
\newcommand{\D}{\mathcal{D}}

\numberwithin{equation}{section}

\makeatletter
\def\thickhline{%
  \noalign{\ifnum0=`}\fi\hrule \@height \thickarrayrulewidth \futurelet
   \reserved@a\@xthickhline}
\def\@xthickhline{\ifx\reserved@a\thickhline
               \vskip\doublerulesep
               \vskip-\thickarrayrulewidth
             \fi
      \ifnum0=`{\fi}}
\makeatother

\newlength{\thickarrayrulewidth}
\newtheorem{theorem}{Theorem}[section] 

\newtheorem{proposition}[theorem]{Proposition} 
\newtheorem{lemma}[theorem]{Lemma}
\theoremstyle{definition}

\theoremstyle{remark}

\tikzset{
  mgraviton/.style={decorate, draw=black,
    decoration={amplitude=3.5pt, segment length=5pt}},
    electron/.style={draw=blue, postaction={decorate},
        decoration={markings,mark=at position .55 with {\arrow[draw=blue]{>}}}}
}

\begin{document}

\renewcommand{\thefootnote}{\fnsymbol{footnote}}
~
\vspace{1.75truecm}
\thispagestyle{empty}
\begin{center}
{\LARGE \bf{
Bootstrapping Euclidean Lattices
}}
\end{center} 

\vspace{1cm}
\centerline{\begingroup
\renewcommand{\thefootnote}{$\Lambda$}
\Large
	Francesco Bertucci\footnote{\href{mailto:fbertucc@olemiss.edu}{\texttt{fbertucc@olemiss.edu}}}
\endgroup
\begingroup
\renewcommand{\thefootnote}{$\Gamma$}
\Large
	and James Bonifacio\footnote{\href{mailto:bonifacio@phy.olemiss.edu}{\texttt{bonifacio@phy.olemiss.edu}}}
\endgroup}
\vspace{.5cm}

\centerline{{\it Department of Physics and Astronomy,}}
 \centerline{{\it   University of Mississippi, University, MS 38677}} 
 \vspace{.25cm}
 
\vspace{1cm}
\begin{abstract}
\noindent
We derive spectral identities involving the Laplace spectrum and the integrals of products of eigenfunctions on flat tori and orbifolds. Using semidefinite programming, we derive upper bounds on sums of squares of triple products from these spectral identities, following the approach of the conformal bootstrap. Physically, these upper bounds give constraints on sums of squares of cubic coupling constants in toroidal compactifications of higher-dimensional field theories. 
For Euclidean lattices, one of the bootstrap bounds leads to an upper bound on the mean square number of minimal vectors that are minimal distance from each minimal vector, divided by the kissing number of the lattice. 
By finding exact functionals for the semidefinite programming problems, we prove that this bound is saturated in 2, 4, 8, and 24 dimensions by the hexagonal lattice, the $D_4$ lattice, the $E_8$ lattice, and the Leech lattice, respectively. 
\end{abstract}

\newpage
\setcounter{tocdepth}{2}
\tableofcontents
\renewcommand*{\thefootnote}{\arabic{footnote}}
\setcounter{footnote}{0}

\newpage 
\section{Introduction}

Conformal field theories (CFTs), hyperbolic manifolds, and Euclidean lattices have several deep connections. One of the key properties associated with these objects is their spectra: the spectrum of scaling dimensions of conformal primary operators of a CFT, the spectrum of eigenvalues of the Laplace operator on a hyperbolic manifold, and the spectrum of vector norms in a Euclidean lattice. In each case, determining the largest allowed value of the spectral gap is an important and well-studied problem that is amenable to linear and semidefinite programming techniques.

For CFTs, a powerful approach for constraining the spectrum is the conformal bootstrap. The idea of the conformal bootstrap is to use the operator product expansion (OPE) to expand four-point correlators in terms of conformal blocks in multiple ways. Equating these expansions gives consistency conditions that constrain the allowed values of the scaling dimensions and OPE coefficients. After being successfully applied to solve certain 2D CFTs in the 80s \cite{Belavin:1984vu}, this idea was more recently revitalized to find constraints on CFTs in dimensions $d>2$ by using linear and semidefinite programming to find numerical bounds on scaling dimensions and OPE coefficients  \cite{Rattazzi:2008pe, Rychkov:2009ij, Caracciolo:2009bx, Poland:2011ey}---see Ref.~\cite{Poland:2018epd} for a review. Perhaps the most successful application of the numerical conformal bootstrap has been to the study of the strongly coupled 3D Ising CFT, which led to the most accurate determination of its critical exponents \cite{ElShowk:2012ht,El-Showk:2014dwa,Kos:2014bka,Kos:2016ysd,Chang:2024whx}.

It has recently been shown that the approach of the conformal bootstrap can be adapted to study hyperbolic manifolds and orbifolds \cite{Bonifacio:2020xoc, Bonifacio:2021msa,Bonifacio:2021aqf,Kravchuk:2021akc, Radcliffe:2024jcg, Bonifacio:2023ban,Gesteau:2023brw, Adve:2025rvf, Adve:2025sld}. Applying spectral expansions to write integrals of products of functions in multiple ways leads to spectral identities or consistency conditions that can be used to derive bounds on the spectrum of Laplace operators and the integrals of products of eigenfunctions. This bootstrap approach has been applied in general dimensions to Einstein manifolds \cite{Bonifacio:2020xoc} and hyperbolic manifolds \cite{Bonifacio:2021msa}, to bound the spectral gap of the Laplace operator \cite{Bonifacio:2021aqf,Kravchuk:2021akc, Radcliffe:2024jcg} and the Dirac operator \cite{Gesteau:2023brw} on closed hyperbolic surfaces, and to bound eigenvalues of the curl operator on closed hyperbolic 3-manifolds \cite{Bonifacio:2023ban}. Other recent developments include using the spectral identities to derive a Weyl bound for triple product L-functions \cite{Adve:2025rvf} and a proof by Adve that any solution to the spectral identities in two dimensions with a discrete spectrum must come from a compact hyperbolic surface \cite{Adve:2025sld}.

Another powerful approach for constraining the Laplace spectra of hyperbolic manifolds and orbifolds is through the Selberg trace formula, which relates weighted sums of Laplace eigenvalues to weighted sums of lengths of closed geodesics.  For example, the trace formula has been a crucial tool in recent breakthroughs related to the spectral gap of typical hyperbolic surfaces of large volume, as reviewed in \cite{monk2026spectral}. The trace formula can also be combined with linear programming methods to bound the spectral gap and other quantities on hyperbolic manifolds \cite{bourque2019kissing, Bourque:2023woe, Bonifacio:2023ban}. 

The application of linear programming methods to the Selberg trace formula is inspired by the linear programming bounds for sphere packing densities by Cohn and Elkies \cite{cohn2003new}, which itself builds on Delsarte's linear programming bounds for codes \cite{MR314545}. The sphere packing bounds utilize the Poisson summation formula, which is the Euclidean analog of the Selberg trace formula. We show in Fig.~\ref{fig:bestpackings} a comparison between these linear programming bounds and the densest known sphere packings, for both lattice sphere packings and general sphere packings. As suggested by Fig.~\ref{fig:bestpackings}, the linear programming bounds are numerically extremely close to being saturated in 8 and 24 dimensions by lattice packings corresponding to the $E_8$ lattice and the Leech lattice, respectively.  It is an old result, due to Blichfeldt, that $E_8$ gives the densest lattice packing in 8 dimensions \cite{Blichfeldt1935}. The proof that the Leech lattice gives the densest lattice packing in 24 dimensions was given more recently by Cohn and Kumar and employed linear programming methods \cite{MR2600869}. The proof that these lattice packings are optimal amongst \textit{general} sphere packings was given by Viazovska for the $E_8$ lattice \cite{viazovska2017sphere} and by Cohn, Kumar, Miller, Radchenko, and Viazovska for the Leech lattice \cite{cohn2017sphere}. The proofs of \cite{viazovska2017sphere, cohn2017sphere} involve the construction of particular ``magic functions" that were conjectured in \cite{cohn2003new}: these functions, together with their Fourier transforms, have prescribed zeros at the vector lengths of the optimal lattices.
\begin{figure}[ht!]
	\begin{center}
		\epsfig{file=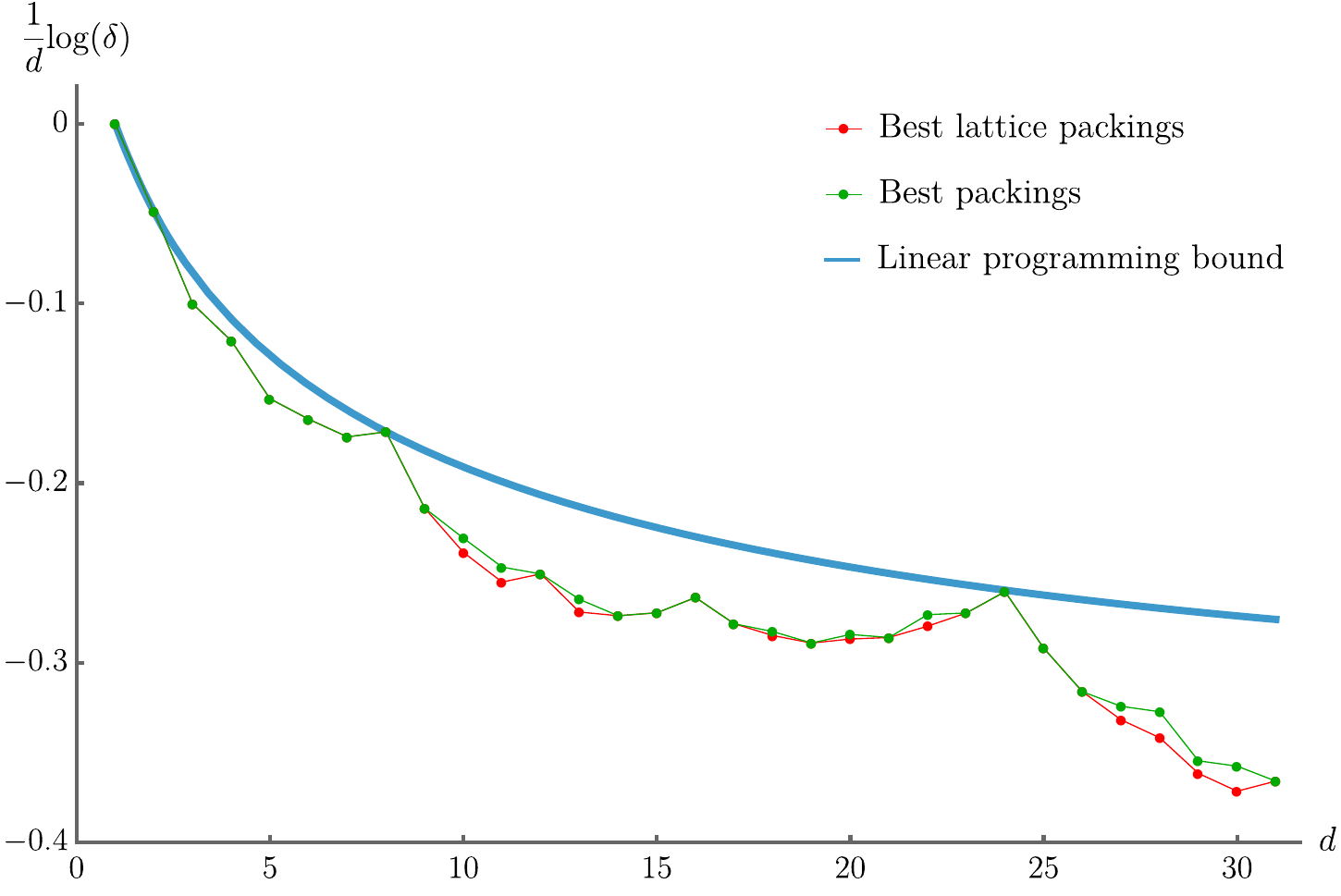,width=12cm}
	\end{center}
	\caption{Density of lattice and general sphere packings up to 31 dimensions, compared to the linear programming bounds by Cohn and Elkies \cite{cohn2003new}. The packing data is taken from Table 1.1(a) and Table 1.1(b) in Ref.~\cite{Conway:1988oqe}. The blue curve is obtained by interpolating the points in Table 3 of Ref.~\cite{cohn2003new}. The center density $\delta$ is related to the sphere packing density $\Delta$ through $\delta=\Delta/V_d$, where $V_d$ is the volume of the $d$-dimensional unit ball, $V_d=\pi^{d/2}/\Gamma (d/2+1)$.}
	\label{fig:bestpackings}
\end{figure}

There is a concrete relation between the conformal bootstrap and the sphere packing problem. In particular, the spinless modular bootstrap of 2D CFTs with central charge $c$ is related to the sphere packing problem for $2c$-dimensional Euclidean lattices \cite{Hartman:2019pcd} (see also \cite{Afkhami-Jeddi:2020hde}). In the spinless modular bootstrap, the invariance of the partition function under inversions naturally defines a linear optimization problem to put an upper bound on the scaling dimension of the first non-trivial primary operator. As shown in \cite{Hartman:2019pcd}, there is a precise relation between this linear optimization problem and the one by Cohn and Elkies \cite{cohn2003new}. In particular, for $c=4$ and $c=12$ the modular bootstrap extremal functionals found by Maz\'{a}\v{c} \cite{Mazac:2016qev} reproduce the magic functions of \cite{viazovska2017sphere, cohn2017sphere}.

A more direct connection between CFTs and Euclidean lattices is that lattices can be used to define certain classes of 2D CFTs known as lattice (or code) CFTs. These constructions are due to the work of Dolan, Goddard, and Montague \cite{Dolan:1989vr,Dolan:1989kf,Dolan:1994st}. Given a lattice in $d$ dimensions, this procedure gives a chiral CFT with central charge $d$. Alternatively, we can think of this as $d$ free bosons compactified on a $d$-dimensional lattice, where we only retain the holomorphic sector of the theory. Orbifold (or twisted) versions of these constructions also exist, as in the famous Monster CFT \cite{Frenkel:1988xz}. 
These constructions produce chiral free theories, with the scaling dimensions and OPE coefficients completely determined by the lattice. Some recent studies considered non-chiral CFTs that arise from Lorentzian lattices (or quantum codes) \cite{Dymarsky:2020qom}, as well as the construction of code CFTs on surfaces of higher genus \cite{Henriksson:2021qkt}.

In this paper, we explore another connection between CFTs and Euclidean lattices by applying conformal bootstrap methods to flat manifolds and orbifolds. Our approach can be viewed as the limit of the hyperbolic bootstrap in which the curvature goes to zero. In this way, it is analogous to studying flat space $S$-matrices as limits of AdS correlators.  Flat orbifolds are much simpler than hyperbolic orbifolds: many of their properties can be calculated analytically and, using physics terminology, their four-point correlators involve the exchange of a finite number of states. They are like integrable theories, whereas the high-energy eigenmodes of hyperbolic manifolds have imprints of chaos. Flat orbifolds thus provide a simple class of examples that are amenable to the bootstrap while still being nontrivial.  Any flat $d$-orbifold can be written as $\mathbb{R}^d / \Gamma$, where $\Gamma$ is a crystallographic group. This includes flat tori $\mathbb{R}^d / \Lambda$ as a special case, where $\Lambda$ is a Euclidean lattice. Our bootstrap bounds can thus be interpreted in terms of crystallographic groups and Euclidean lattices, which play important roles in various areas such as chemistry, number theory, and coding theory. 

The density of a sphere packing defined by a $d$-dimensional lattice $\Lambda$ is determined by the length of the shortest nonzero vectors in $\Lambda$, which are called minimal vectors. This shortest length also determines the spectral gap of the flat torus $\mathbb{R}^d/\Lambda^*$. It would be natural to search for bootstrap bounds on the spectral gap of tori with unit volume, which would bound the Hermite constant. However, we are only able to probe ratios of eigenvalues with our spectral identities, due to the absence of a curvature scale. We instead focus on triple products of eigenfunctions. Let $\phi_i$ for $i = 0, 1, 2, \dots$ be orthonormal eigenfunctions of the non-negative Laplace operator on a closed connected orientable Riemannian manifold with eigenvalues $\lambda_i$ in non-decreasing order, and let $\sigma_i$ denote the distinct eigenvalues,
$
0=\sigma_0 < \sigma_1 < \sigma_2 < \dots \rightarrow \infty.
$
 We define the following integrals of products of three eigenfunctions, which we refer to as triple products or triple overlap integrals:
\be
c_{ijk} \coloneqq \int \phi_i \phi_j \phi_k \, dV.
\ee
These appear as coefficients in the spectral decomposition of the product of two eigenfunctions,
$
\phi_i \phi_j = \sum_{k=0}^{\infty} c_{ijk} \phi_k,
$
where the sum has finitely many non-vanishing terms on a flat manifold.
For an incomplete selection of additional papers studying the triple products $c_{ijk}$, see \cite{Sarnak94, bernstein2010subconvexity, LU20193271, WYMAN2022109404}. 

Let us define
\be \label{eq:c_ij_def}
c_{ij}(\sigma_k) \coloneqq \sqrt{V  \sum_l c^2_{ijl} },
\ee
where the sum runs over the eigenspace with eigenvalue $\sigma_k$ and $V$ is the volume of the manifold. The factor of $\sqrt{V}$ makes this quantity invariant under constant rescalings of the metric. For a flat orbifold, we can define $c_{ij}(\sigma_k)$ by restricting to the invariant eigenmodes on its torus cover. 
 The main quantities that we bound for flat manifolds and orbifolds are $c_{11}(\sigma_1)$ and $c_{11}(\sigma_2)$. 
On a flat torus $\mathbb{R}^d/\Lambda$, the maximum value of $c_{11}(\sigma_1)$ is bounded from below by $\nu_1(\Lambda^*)/\sqrt{ N_1(\Lambda^*)}$, where $\nu_1(\Lambda^*)$ (see \eqref{eq:RMS_def})  is the root-mean-square number of minimal vectors that are minimal distance from each minimal vector in the dual lattice $\Lambda^*$ and $N_1(\Lambda^*)$ is the number of minimal vectors, i.e., the kissing number, of $\Lambda^*$. In terms of the sphere packing defined by $\Lambda^*$, $\nu_1(\Lambda^*)$ is the root mean square of the number of common neighbors of the central sphere and each sphere touching it.

As an example, we show in Fig.~\ref{fig:c11_plot} numerical upper bounds on $c_{11}(\sigma_1)$ for flat $d$-tori with $d=2, \dots, 31$, together with example tori defined by laminated lattices. 
We show rigorously that the bounds are saturated in dimensions $d=2,4,8,$ and $24$ by the tori $\mathbb{R}^d/\Lambda_d^*$, with $\Lambda_d$ equal to the $A_2$, $D_4$, $E_8$, and Leech lattices, by finding exact functionals for the semidefinite optimization problems. This gives the following
\begin{theorem} \label{thm:main_theorem}
For all full-rank positive definite lattices $\Lambda$ in dimensions 2, 4, 8, and 24, the quantity $\nu^2_1(\Lambda)/ N_1(\Lambda)$ is less than or equal to the value of this quantity attained by the $A_2$, $D_4$, $E_8$, and Leech lattices, respectively. Explicitly, these values are $\frac{2^2}{6}$, $\frac{8^2}{24}$, $\frac{56^2}{240}$, and $\frac{4600^2}{196560}$, respectively.
\end{theorem}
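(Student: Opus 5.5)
The plan is to deduce Theorem~\ref{thm:main_theorem} from an upper bound on $c_{11}(\sigma_1)$ for flat tori, proved with the bootstrap, combined with the lower bound $\max c_{11}(\sigma_1)\ge \nu_1(\Lambda)/\sqrt{N_1(\Lambda)}$ on the torus $T=\mathbb{R}^d/\Lambda^*$. The torus $T$ has spectrum $4\pi^2|v|^2$ for $v\in\Lambda$ and eigenfunctions $e^{2\pi i v\cdot x}/\sqrt V$. The first eigenspace is spanned by the minimal vectors of $\Lambda$, and has multiplicity $N_1=N_1(\Lambda)$. For $\phi_i,\phi_j$ in it, $\phi_i\phi_j$ decomposes into modes $u+w$ with $u,w$ minimal, so only finitely many eigenvalues are exchanged.

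\textbf{Step 1: the lower bound.} Choose real orthonormal combinations of the exponentials in the first eigenspace. The quantity $\sum_k c_{ijk}^2$ over the $\sigma_1$ eigenspace, averaged suitably over $i,j$, counts triples of minimal vectors $u,w,u+w$. Averaging over pairs gives $\nu_1^2/N_1$, so some $\phi_i$ attains $c_{ii}(\sigma_1)^2\ge\nu_1^2/N_1$.

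\textbf{Step 2: the crossing equations.} Derive the spectral identities by expanding $\int\phi_i\phi_j\phi_k\phi_l\,dV$ in the $s$- and $t$-channels. Weight these integrals by powers of the Laplacian, using
\[
\Delta(\phi_i\phi_j)=(\lambda_i+\lambda_j)\phi_i\phi_j-2\nabla\phi_i\cdot\nabla\phi_j .
\]
Iterating this gives moment identities of the form $\sum_k c_{11}(\sigma_k)^2P_n(\sigma_k/\sigma_1)=$ crossed terms. The blocks $P_n$ are polynomials determined by $d$ and by the flatness of the metric, for instance through the vanishing of the Ricci tensor in Bochner-type identities for $\nabla\phi\cdot\nabla\phi$. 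Because there is no curvature scale, only the ratios $x=\sigma_k/\sigma_1$ enter. The exchanged ratios $x$ are $0$ and values in $[1,4]$, since $|u+w|^2\le 4|u|^2$ for minimal $u,w$. I would write the problem as a semidefinite program: maximize $c_{11}(\sigma_1)^2$ subject to these identities and positivity of all $c^2$.

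\textbf{Step 3: exact dual functionals (the main obstacle).} In each of $d=2,4,8,24$, I would construct a linear functional $\alpha$ on the identities whose combined block $F(x)=\sum_n\alpha_nP_n(x)$ has three properties:
\begin{itemize}
\item $F(x)\ge0$ for every allowed exchanged ratio $x\neq1$;
\item its normalization yields $c_{11}(\sigma_1)^2\le B_d$;
\item it has double zeros exactly at the norm ratios that actually occur in the extremal lattice.
\end{itemize}
These ratios are $|u+w|^2/|u|^2\in\{1,2,3,4\}$, coming from the inner products $u\cdot w\in\{-\tfrac12,0,\tfrac12,1\}|u|^2$. This is the analog of the Cohn--Elkies magic functions, except that here the functionals are polynomial. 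I would first obtain them numerically from the semidefinite program and then rationalize them, imposing the zero conditions. The hardest part is checking rigorously that $F\ge0$ off the zeros, a polynomial positivity certificate. I would handle it via a sum-of-squares decomposition with exact rational coefficients.

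\textbf{Step 4: saturation.} The value $B_d$ must equal $\nu_1^2/N_1$ for $A_2$, $D_4$, $E_8$ and the Leech lattice. I would check this directly from their inner-product distributions among minimal vectors. For each minimal $u$, the number of minimal $w$ with $u\cdot w=\tfrac12|u|^2$ is $2,8,56,4600$, and this is constant in $u$ because these lattices' minimal vectors form spherical designs, so $\nu_1$ is exactly this count. This gives the stated values $2^2/6,\dots,4600^2/196560$. Finally, for any $\Lambda$ in these dimensions, Steps 1 and 3 give $\nu_1^2(\Lambda)/N_1(\Lambda)\le\max c_{11}(\sigma_1)^2\le B_d$.
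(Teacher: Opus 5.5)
\textbf{The main gap: the crossing equations do not close on scalar exchanges.} Your skeleton is the paper's: the lower bound $\nu_1^2(\Lambda)/N_1(\Lambda)\le\max c_{11}^2(\sigma_1)$ on $\mathbb{R}^d/\Lambda^*$, an exact rational functional for the matching upper bound, and a sandwich. The gap is in Steps 2--3, where you assume the identities can be written as $\sum_k c_{11}^2(\sigma_k)P_n(\sigma_k/\sigma_1)=\cdots$, with flatness or Bochner fixing the blocks. They cannot.

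Once derivatives are redistributed among the four copies of $\phi_1$, one channel becomes a rank-$s$ tensor channel. For example, $\int\phi_1^2\nabla_a\nabla_b\phi_1\nabla^a\nabla^b\phi_1$ can be grouped as $(\phi_1\nabla_a\nabla_b\phi_1)(\phi_1\nabla^a\nabla^b\phi_1)$. The transverse-traceless parts of $\phi_1\D^s\phi_1$ carry information not determined by the scalar overlaps, and $\mathrm{Ric}=0$ does not remove them. This is why the identities \eqref{eq:sumrulesFin} contain the spinning terms $[c_{11}^{(s)}(\sigma_k^{(s)})]^2$ and the constant-tensor zero modes $[c_{11}^{(s)}(0)]^2$. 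It is also why the functional must satisfy $\vec\alpha\cdot\vec F_s(x)\ge0$ and $\vec\alpha\cdot\vec A_s\ge0$ for $s=2,4,\dots$.

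These terms are essential, already in $d=2$:
\begin{itemize}
\item In the $\cutoff=10$ list, the tensor-free identities are the first, third and fifth. They satisfy $\vec F_0(1)+\vec F_0(3)=-\vec A_0$ componentwise.
\item Normalization plus positivity at $x=3$ then force $-\vec\alpha\cdot\vec A_0\ge1$, which is weaker than the sharp $2/3$.
\item Reaching $2/3$ needs the even identities, where the spin-2 and spin-4 zero modes enter with $\vec\alpha\cdot\vec A_2=4/9$ and $\vec\alpha\cdot\vec A_4=32/9$. These cost nothing at $A_2$ only because its zero-mode overlaps vanish for $s=2,4$ (a $5$-design).
\end{itemize}

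In higher dimensions the tensor terms matter in the same way:
\begin{itemize}
\item The paper's $d=4$ and $d=24$ functionals work only with the torus tensor gap $x_0=1$.
\item Saturation requires $\vec\alpha\cdot\vec A_s=0$ exactly where the extremal lattice has $c_{11}^{(s)}(0)\neq0$, e.g.\ $s=8$ for $E_8$.
\item Saturation also requires double zeros of the tensor blocks at the exchanged tensor eigenvalues.
\end{itemize}
A scalar-only system has no room for any of this, so Step 3 as formulated cannot give the sharp bounds.

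\textbf{Step 1: averaging does not produce $\nu_1^2$.} The mean of $c_{11}^2(\sigma_1)$ over unit vectors in the first eigenspace is $2\bar n/(N_1+2)$. Averages over basis pairs are likewise linear in the counts $n(v)=|S_1(\Lambda)\cap S_1(v,\Lambda)|$. For $E_8$ this is about $0.46$, not $196/15$. You need the coherent choice $\phi_1\propto\sum_{u\in S_1(\Lambda)}e^{2\pi i u\cdot x}$, i.e.\ the uniform sum of cosines. The $e^{2\pi i v\cdot x}$-coefficient of $\phi_1^2$ is then proportional to $n(v)$, which gives $c_{11}^2(\sigma_1)=N_1^{-2}\sum_{v\in S_1}n(v)^2=\nu_1^2/N_1$.

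\textbf{Step 3: the zero set is wrong for the Leech lattice.} Its minimal vectors have $u\cdot w\in\{0,\pm\tfrac14,\pm\tfrac12,\pm1\}|u|^2$, so the exchanged ratios include $3/2$ and $5/2$, where $c_{11}$ is nonzero. Without double zeros there, the bound cannot be saturated; the paper uses $\{3/2,2,5/2,3,4\}$. For $A_2$, conversely, $x=2$ never occurs.

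\textbf{The $x=0$ term.} The identity contribution at $x=0$ is a separate constant $\vec\alpha\cdot\vec A_0=-B_d<0$. It is not a value $F(0)$ subject to positivity, and indeed $\vec A_0\neq\vec F_0(0)$ here.
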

\begin{figure}[h!t]
\begin{center}
	\epsfig{file=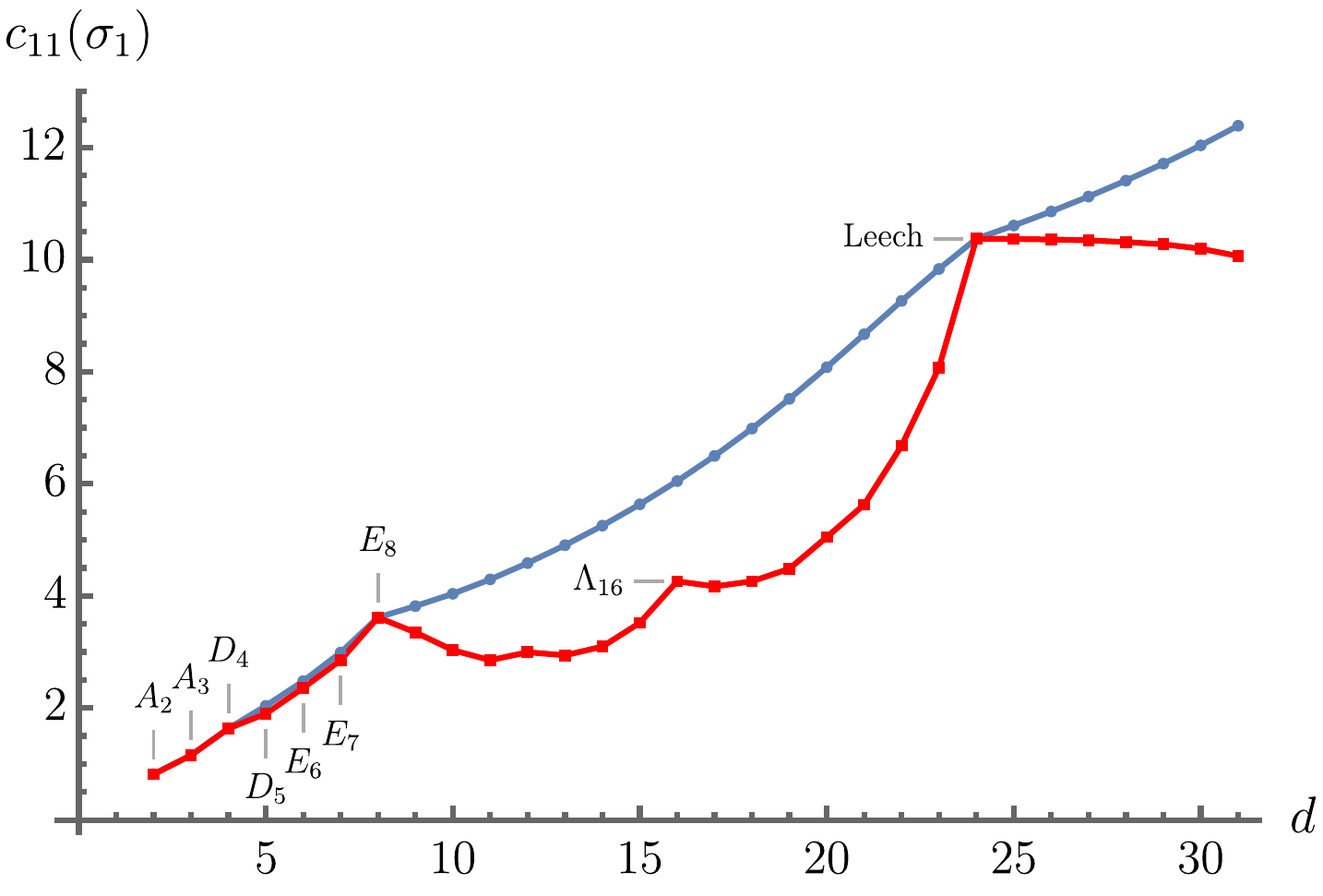, width=11.65cm}
\end{center}
\caption{The blue dots show the bootstrap upper bounds on $c_{11}(\sigma_1)$ for flat tori. The red dots show $\nu_1(\Lambda_d)/\sqrt{ N_1(\Lambda_d)}$, where $\Lambda_d$ are $d$-dimensional laminated lattices.
}
\label{fig:c11_plot}
\end{figure}

The Laplace eigenvalues and triple products $c_{ijk}$ admit a physical interpretation in the context of Kaluza--Klein theories. Suppose we start with a theory on a higher-dimensional product spacetime $\mathbb{R}^{1,3} \times \mathcal{M}$, where $\mathcal{M}$ is a closed manifold. Dimensional reduction provides an equivalent formulation of the theory in the lower-dimensional Minkowski spacetime where various towers of massive degrees of freedom appear. The masses of the new states are determined by the Laplace spectrum and their interactions are given by overlaps of eigenmodes. For example, in the dimensional reduction of general relativity, the Kaluza--Klein excitations of the graviton are spin-2 particles with squared masses $m^2_i=\lambda_i$ and cubic interactions proportional to $\sqrt{V} c_{ijk}/M_p$, where $M_p$ is the 4D Planck mass \cite{Hinterbichler:2013kwa, Bonifacio:2019ioc}. We can thus schematically represent $c^2_{ij}(\sigma_k)$ by the sum of exchange diagrams shown below, although the actual contribution to the four-point amplitude is more complicated \cite{Bonifacio:2019ioc, SekharChivukula:2019qih}.

\begin{figure}[!ht]
\begin{center}
\begin{tikzpicture}[thick, node distance=1.1cm and 1.1cm]
\coordinate (svertex1);

\coordinate[above left=of svertex1, label=left:$i$] (s1);
\coordinate[below left=of svertex1, label=left:$j$] (s2);
\coordinate[right=1.5cm of svertex1] (svertex2);
\coordinate[above right=of svertex2, label=right:$i$] (s3);
\coordinate[below right=of svertex2, label=right:$j$] (s4);

\draw[mgraviton] (s1) -- (svertex1);
\draw[mgraviton] (svertex1) -- (s2);
\draw[mgraviton] (s3) -- (svertex2);
\draw[mgraviton] (svertex2) -- (s4);
\draw[mgraviton] (svertex1) -- node[above=2pt] {$l$} (svertex2);

\node[left=2cm of svertex1] {$\displaystyle \sum_{m_l^2=\sigma_k}$};

\end{tikzpicture}
\end{center}
\end{figure}

The outline of the rest of this paper is as follows: we review aspects of Euclidean lattices and flat orbifolds in Section \ref{sec:lattices}. In Section \ref{sec:sumrules}, we derive expressions for the integrals of products of eigenfunctions that are needed to derive the spectral identities. The spectral identities and bootstrap bounds are then presented in Section \ref{sec:bounds} and we conclude in Section \ref{sec:conclusion}.

\section{Euclidean lattices} \label{sec:lattices}

In this section, we give a brief review of Euclidean lattices. For a comprehensive treatment, see the classic book by Conway and Sloane \cite{Conway:1988oqe}. An extensive online catalog of lattices is provided by Nebe and Sloane \cite{LatticeCatalog}.  

Two important motivations for studying lattices are the sphere packing problem and the kissing number problem. The sphere packing problem asks how densely we can pack identical non-overlapping spheres in $d$ dimensions. If the centers of the spheres form a lattice, then the packing is called a lattice packing. 
The kissing number problem asks for the greatest number of non-overlapping unit spheres that can touch a common unit sphere.

\subsection{Review of lattices}

Let $\{\vec{v}_i\}_{i=1}^d$ be a basis for $\mathbb{R}^d$. This basis generates a $d$-dimensional Euclidean lattice,
\be\label{eq:fundparallelotope}
	\Lambda = \left\{ \sum_{i=1}^d \xi_i \vec{v}_i :  \xi_i \in \mathbb{Z} \right\}.
\ee
The fundamental parallelotope of $\Lambda$  consists of the points
\be
	\theta_1 \vec{v}_1 + \hdots + \theta_d \vec{v}_d, \qquad \text{with} \qquad 0 \leqslant \theta_i < 1.
\ee
The generator matrix of $\Lambda$  is the matrix $A$ whose rows are the vectors $\vec{v}_i$.
The lattice thus consists of all vectors of the form $\vec{\xi} A$ with $\vec{\xi} \in \mathbb{Z}^d$ and the volume of the fundamental parallelotope is $|\! \det A|$.
The Gram matrix for $\Lambda$ is defined as
\be
	G =A  A^T,
\ee
where $A^T$ denotes the transpose of $A$. 
The Gram matrix is a symmetric matrix with components $G_{ij} =\vec{v}_i \cdot \vec{v}_j$, where $\cdot$ denotes the Euclidean inner product. 
The dual lattice of $\Lambda$  is 
\be
	\Lambda^* =\left\{ \vec{k} \in\mathbb{R}^d : \vec{k}\cdot \vec{x} \in\mathbb{Z} \text{ for all } \vec{x} \in\Lambda \right\}.
\ee
The generator matrix of $\Lambda^*$ is $(A^{-1})^T$ and the Gram matrix of $\Lambda^*$ is $G^{-1}$. 
Lattices $\Lambda$ and $\Lambda'$ with generator matrices $A$ and $A'$ are said to be equivalent, written $\Lambda \cong \Lambda'$,  if
\be
	A' =c\, U A B,
\ee
where $c$ is a nonzero constant, $U$ is a matrix with integer entries and $\mathrm{det}\,U =\pm 1$, and $B$ is a real orthogonal matrix. 

The norm of a lattice vector $\vec{x} \in\Lambda$ is defined as its length squared, 
$
 \vec{x} \cdot \vec{x}= |\vec{x}|^2 .
$
If we write $\vec{x}=\vec{\xi}  A$, then
\be
|\vec{x}|^2 =\sum_{i,j=1}^d \xi_i \xi_j \vec{v}_i \cdot \vec{v}_j =\vec{\xi} G \vec{\xi}^T.
\ee
The function $q_G(\vec{\xi}) \coloneqq \vec{\xi} G \vec{\xi}^T$ is the quadratic form associated with $\Lambda$.
The distinct norms of the lattice $\Lambda$ are denoted by $n_i(\Lambda)$, $i\in \mathbb{Z}_{\geq 0},$ with 
\be
n_0(\Lambda) =0 < n_1(\Lambda) < n_2(\Lambda) < \dots.
\ee 
We denote the $k$\textsuperscript{th} shell centered at the lattice point $\vec{x} \in \Lambda$ by
\be \label{eq:kth_shell_at_x}
S_k(\vec{x}, \Lambda) \coloneqq \{ \vec{y} \in \Lambda : |\vec{x} - \vec{y}|^2 = n_k(\Lambda) \}.
\ee
For the $k$\textsuperscript{th} shell centered at the origin, we write 
\be \label{eq:kth_shell}
S_k( \Lambda) \coloneqq S_k(\vec{0}, \Lambda).
\ee
The size of the $k$\textsuperscript{th} shell is denoted by $N_k(\Lambda)$,
\be
N_k(\Lambda) \coloneqq \left| \left\{ \vec{x} \in \Lambda \, : \,  \vec{x} \cdot \vec{x} = n_k(\Lambda) \right\}  \right| = |S_k( \Lambda)|.
\ee
The number of vectors with the smallest nonzero norm, $N_1(\Lambda)$, is called the kissing number of the lattice. 

A lattice is called integral if $\vec{x}\cdot \vec{y} \in\mathbb{Z}$ for all $\vec{x},\vec{y}\in\Lambda$. If $|\vec{x}|^2 $ is even for all $\vec{x}\in\Lambda$, then the lattice is called even. An integral lattice such that $\Lambda = \Lambda^*$ or, equivalently, with $\mathrm{det}\,\Lambda =1$, is called unimodular or self-dual.
The theta series associated to a lattice $\Lambda$  is
\be
	\Theta_\Lambda (z) =\sum_{\vec{x} \in\Lambda} q^{|\vec{x}|^2/2 }=\sum_{m=0}^{\infty}  N_{m}(\Lambda) q^{n_m(\Lambda)/2},
\ee
where $q=e^{2 \pi i z}$. The theta series of an even self-dual lattice is a modular form of weight $d/2$ for $SL(2,\mathbb{Z})$. Additionally, if $P$ is a homogeneous harmonic polynomial of degree $s>0$, then for an even self-dual lattice $\Lambda$ the weighted theta series,
\be \label{eq:weighted_theta}
\Theta_{\Lambda,P} (z) =\sum_{\vec{x} \in\Lambda} P(\vec{x} ) q^{|\vec{x}|^2/2 },
\ee
is a cusp form of weight $d/2+s$ for $SL(2,\mathbb{Z})$.

We can define complex lattices as linear combinations of complex basis vectors $\vec{w}_i \in \mathbb{C}^d$, $i=1, \dots d$,
\be
	\Lambda_{\mathbb{C}} = \left\{ \sum_{i=1}^d \zeta_i \vec{w}_i \, \bigg| \, \zeta_i \in J \right\},
\ee
where $J$ is the ring of integers of a number field.
For example, we can take $J$ as the Eisenstein integers,
\be \label{eq:Eisenstein_Integers}
		\mathscr{E} =\left\{ a+\omega b : a,b\in\mathbb{Z} \right\},
\ee
with $\omega \coloneqq e^{2 \pi i /3}$. A complex lattice $\Lambda_\mathbb{C}$ in $\mathbb{C}^d$ gives a corresponding real lattice in $\mathbb{R}^{2d}$.

\subsection{Examples} 
Here we list some important examples of lattices in various dimensions:
\begin{itemize}
	\item Cubic lattice $\mathbb{Z}^d$:
	the cubic lattice in $d$ dimensions is
	\be
		\mathbb{Z}^d =\{(x_1,\dots,x_d) :  x_i\in\mathbb{Z}\}.
	\ee
	The generator matrix is the $d\times d$ identity matrix. $\mathbb{Z}^d$ is self-dual. $\mathbb{Z}$ solves the sphere packing problem and the kissing number problem in one dimension.
	\item ${A_d}$  lattice: the lattice $A_d$ with $d\geqslant1$ is
	\be
		A_d =\{(x_0,x_1,\dots,x_d)\in\mathbb{Z}^{d+1} : x_0+\dots+x_d=0 \}.
	\ee
	We have $A_1^*\cong A_1\cong \mathbb{Z}$. $A_2 \cong A_2^*$ is the hexagonal lattice, which solves the sphere packing problem and the kissing number problem in two dimensions. $A_3$ is the face-centered cubic (fcc)  lattice and  $A_3^*$ is the body-centered cubic (bcc) lattice. The lattice $A_3$ solves the sphere packing and the kissing number problems in three dimensions.
	\item $D_d$ lattice: the checkerboard lattice $D_d$ with $d\geqslant3$ is
	\be
		D_d =\{(x_1,\dots,x_d)\in\mathbb{Z}^d :  x_1+\dots+x_d \, \, \,  \text{even} \}.
	\ee
	We have $D_3\cong A_3$,  $D_3^*\cong A_3^*$, and $D_4^*\cong D_4$. In four dimensions, $D_4$ gives the densest lattice packing and solves the kissing number problem \cite{Musin_D4_Kissing}, while $D_5$ gives the densest lattice packing in five dimensions.
	\item $E_6$ lattice: the $E_6$ lattice can be defined in terms of the $E_8$ lattice (see below) as
	\be
		E_6 =\left\{ (x_1,\dots,x_8) \in E_8 : x_6=x_7=x_8 \right\}.
	\ee
	It can also be defined as the real form of the three-dimensional complex lattice over the Eisenstein integers \eqref{eq:Eisenstein_Integers} generated by the vectors
	\be
		(\theta,0,0), \qquad (0,\theta,0), \qquad (1,1,1),
	\ee
	where $\theta\coloneqq\omega-\bar\omega$. In six dimensions, $E_6$ has the highest known density and kissing number and it is the densest lattice packing.
	\item $E_7$ lattice: the $E_7$ lattice can be defined in terms of the $E_8$ lattice as
	\be
		E_7 =\left\{ (x_1,\dots,x_8) \in E_8 : x_7=x_8 \right\}.
	\ee
	In seven dimensions, $E_7$ has the highest known density and kissing number and it is the densest lattice packing.
	\item $E_8$ lattice: the $E_8$ lattice is
	\be
		E_8 =\left\{ (x_1,\dots,x_8) : \text{all } x_i\in\mathbb{Z} \text{ or all } x_i\in\mathbb{Z}+\frac12, \, \sum_{i=1}^8 x_i \equiv 0 \,\, (\text{mod }2) \right\}.
	\ee
	It can also be constructed by applying Construction A  to the Hamming code $\mathscr{H}_8$. $E_8$ is even and self-dual. The first few terms of the theta series are
	\be
	\Theta_{E_8} (z) =1+240 q+2160 q^2+6720 q^3+ 17520 q^4 + \dots.
	\ee
  In eight dimensions, the $E_8$ lattice solves the sphere packing problem \cite{viazovska2017sphere} and the kissing number problem \cite{Levenstein1979, ODLYZKO1979210}. 
	\item Laminated lattice $\Lambda_{12}$: $\Lambda_{12}$ (called $\Lambda_{12}^{\textrm{max}}$ in \cite{LatticeCatalog}) is the laminated lattice in 12 dimensions with kissing number 648, which is the largest kissing number of the three inequivalent laminated lattices in 12 dimensions \cite{laminatedLattices}. The Gram matrix can be found in \cite{LatticeCatalog}.
	\item Coxeter--Todd lattice $K_{12}$:	the Coxeter--Todd lattice $K_{12} \cong K^*_{12}$ can be defined as the real form of the six-dimensional complex lattice over the Eisenstein integers generated by the vectors of the form
	\be
		\frac{1}{\sqrt{2}} (\pm\theta,\pm1^5),
	\ee
	where $\theta\coloneqq\omega-\bar\omega$ can be in any of the six positions and there are an even number of minus signs.	 $K_{12}$ gives the densest known sphere packing in 12 dimensions.
	 \item Barnes--Wall lattice $\Lambda_{16}$: the Barnes--Wall lattice $\Lambda_{16} \cong \Lambda^*_{16}$ can be constructed by applying Construction B (see \cite{Conway:1988oqe})  to the first-order Reed--Muller code of length 16. $\Lambda_{16}$ has the highest known density and kissing number in 16 dimensions.
	 \item Leech lattice $\Lambda_{24}$:	 the Leech lattice  is generated by vectors of the form
	 \be
	 	\frac1{\sqrt{8}}(\mp3,\pm1^{23}),
	 \ee
	 where $\mp3$ may be in any of the 24 positions and the upper signs are taken in correspondence with the $1$'s in the codewords of the binary Golay code $\mathscr{C}_{24}$ \cite{Conway:1988oqe}. $\Lambda_{24}$ is the unique even, self-dual lattice with minimal norm 4 in 24 dimensions. The Leech lattice solves the sphere packing problem \cite{cohn2017sphere} and the kissing number problem \cite{Levenstein1979, ODLYZKO1979210}  in 24 dimensions. The first few terms of the theta series for the Leech lattice are
	\be
	\Theta_{\Lambda_{24}} (z) =1+196 560 q^2+16 773 120 q^3+398 034 000 q^4+  \dots.
	\ee
\end{itemize}

\subsection{Spherical codes and designs}
Denote by  $\mathbb{S}^{d-1}$ the $(d-1)$-dimensional unit sphere, 
\be
 \mathbb{S}^{d-1}  = \{ \vec{x} \in \mathbb{R}^d : \, \vec{x} \cdot \vec{x} = 1\}.
\ee
Let $X$ be a non-empty finite subset of  $\mathbb{S}^{d-1}$. If $\phi$ is the largest angle such that for any two distinct points $\vec{x}, \vec{y} \in X$ we have
\be
\vec{x} \cdot \vec{y} \leq \cos \phi,
\ee
then $X$ is called a spherical code with minimal angle $\phi$. The maximum number of points in such a spherical code is denoted by $A(d, \phi)$.  The kissing number in $d$ dimensions is $A(d, \pi/3)$.

A spherical $t$-design \cite{Delsarte_Goethals_Seidel} is a finite set $X\subset  \mathbb{S}^{d-1} $ such that for all polynomials $f$ of degree at most $t$ we have
\be
\frac{1}{|X|} \sum_{\vec{x} \in X} f(\vec{x}) =  \frac{\int_{ \mathbb{S}^{d-1}} f(\vec{x}) \, dV }{\int_{ \mathbb{S}^{d-1}} dV}.
\ee
This is equivalent to requiring that
\be \label{eq:harmonic_zero}
\sum_{\vec{x} \in X} p(\vec{x}) = 0
\ee
for all non-constant homogeneous harmonic polynomials $p$ of degree at most $ t$. By the addition theorem for spherical harmonics, another equivalent condition for $d>2$ is
\be \label{eq:gegenbauer_vanish}
\sum_{\vec{x}, \, \vec{y} \in X} C_s^{(\frac{d}{2}-1)}(\vec{x}\cdot \vec{y}) = 0, \quad s \in \{1, 2, \dots ,t \},
\ee
where $C_s^{(\frac{d}{2}-1)}$ is a Gegenbauer polynomial.
When suitably rescaled, each shell of the $E_8$ lattice forms a spherical $7$-design and also satisfies \eqref{eq:harmonic_zero} for degree-10 harmonic polynomials, so each shell is said to form a spherical $7\frac{1}{2}$-design. Similarly, each shell of the Leech lattice forms a spherical $11$-design and also satisfies \eqref{eq:harmonic_zero} for degree-14 harmonic polynomials, so each shell is said to form a spherical $11\frac{1}{2}$-design \cite{MR752931, Pache_designs}. 

\subsection{Flat orbifolds}

We conclude this section with a brief review of flat orbifolds---for more detailed accounts, see \cite{thurstonBook, Bettiol2018}. 
A crystallographic group (or space group) $\Gamma$ is a discrete co-compact subgroup of the Euclidean isometry group, $E(d) \coloneqq O(d) \ltimes \mathbb{R}^d$. The quotient $\mathcal{O}=\mathbb{R}^d/\Gamma$ is a closed flat $d$-orbifold. If $\Gamma$ is torsion-free, then $\Gamma$ is a Bieberbach group and  $\mathbb{R}^d/\Gamma$ is a closed flat $d$-manifold \cite{Charlap:1986bgfm,wolf2011spaces}. For the special case of a lattice $\Lambda$, the quotient $\mathbb{R}^d /\Lambda$ is a flat torus.

Let $\mathrm{Aff}(\mathbb{R}^d) \coloneqq \mathrm{ GL}(d, \mathbb{R})  \ltimes \mathbb{R}^d$ be the group of affine transformations of $\mathbb{R}^d$ and let $r: \mathrm{Aff}(\mathbb{R}^d) \to \mathrm{ GL}(d, \mathbb{R})$ be the canonical projection. The subgroup $r(\Gamma) \subset O(d)$ is the holonomy group. The orbifold $\mathbb{R}^d/\Gamma$ is orientable if and only if $r(\Gamma) \subset SO(d)$. The translation subgroup defined by $\Lambda \coloneqq {\rm ker}(r) \cap \Gamma$ is a lattice and the torus $\mathbb{R}^d/\Lambda$ is a finite cover of the orbifold $\mathbb{R}^d/\Gamma$.

The number of affine equivalence classes of closed flat orbifolds of a given dimension is finite and known explicitly for $d \leq 6$.
In $d=2$, there are 17 classes, corresponding to the 17 wallpaper groups. Two of these are also manifolds, namely the 2-torus and the Klein bottle. Out of these 17, only five are orientable; they are commonly called (in crystallographic notation) $p1$, $p2$, $p3$, $p4$, and $p6$, with the group $p1 \simeq\mathbb{Z}^2$ corresponding to the 2-torus. Their actions on $\mathbb{R}^2$ as well as the underlying lattice are presented in \cite{10.1063/1.2735211,kaye_orbifold}.
In $d=3$, there are 219 classes of flat orbifolds. The sublist of flat manifolds has 10 elements. Of these, six are orientable: they are usually denoted by $E_i$ with $i=1,\dots,6$.
These manifolds have been studied in cosmology as possible spatial geometries of the universe---see, e.g., \cite{Peng:2019xoj, COMPACT:2023rkp}. 

\section{Overlap integrals} \label{sec:sumrules}

In this section, we derive expressions for the integrals of products of eigenfunctions that we use to derive spectral identities in the next section.

\subsection{Tensor fields}
Let $(\mathcal{M},g)$ be a closed $d$-dimensional Riemannian manifold with Levi--Civita connection $\nabla$.
Denote by $ {\rm Sym}^s(T^* \mathcal{M})$ the $s$\textsuperscript{th} symmetric power of the cotangent bundle of $\mathcal{M}$ and $S^s(\mathcal{M}) =\Gamma({\rm Sym}^s(T^* \mathcal{M}))$ the space of smooth sections of this bundle, so $\varphi^{(s)} \in S^s(\mathcal{M})$ is a smooth symmetric rank-$s$ tensor field. We often use abstract index notation so that, e.g., $\varphi^{(s)}$ is denoted by $\varphi^{(s)}_{a_1 \dots a_s}$. 

Define the $L^2$-inner product for symmetric tensor fields
\be \label{eq:inner_product}
\langle \varphi_1^{(s)}, \varphi_2^{(s)} \rangle = \int_{\mathcal{M}} g^{a_1 b_1} \dots g^{a_s b_s} \varphi^{(s)}_{1, a_1 \dots a_s} \bar{\varphi}^{(s)}_{2, b_1 \dots b_s} \, dV,
\ee
where the bar denotes complex conjugation.

For $s\geq 2$, we denote by $L$ the map which multiplies a tensor by the metric and symmetrizes the result,
\be
L: S^{s-2}(\mathcal{M}) \rightarrow S^{s}(\mathcal{M}), \qquad \varphi^{(s-2)}_{a_1 \dots a_{s-2}} \mapsto g_{(a_1 a_2} \varphi^{(s-2)}_{a_3 \dots a_s)}.
\ee
The formal adjoint of $L$ with respect to the inner product \eqref{eq:inner_product} is the contraction map on symmetric rank-$s$ tensor fields,
\be
L^\dagger: S^{s}(\mathcal{M}) \rightarrow S^{s-2}(\mathcal{M}), \qquad \varphi^{(s)}_{a_1 \dots a_s} \mapsto g^{a_1 a_2} \varphi^{(s)}_{a_1 \dots a_s}.
\ee
We denote by $S^s_0(\mathcal{M}) \subset S^s(\mathcal{M})$ the subspace of traceless symmetric tensor fields, which is the kernel of $L^\dagger$,
\be
S^s_0(\mathcal{M})  = \ker\left\{ L^\dagger: S^{s}(\mathcal{M}) \rightarrow S^{s-2}(\mathcal{M}) \right\}.
\ee
 For notational convenience, we also define $S^s_0(\mathcal{M})=S^s(\mathcal{M})$ for $s=0$ and $s=1$.

We denote by $\D$ the symmetrized covariant derivative on symmetric tensor fields,
\be \label{eq:d_def}
\D: S^{s}(\mathcal{M}) \rightarrow S^{s+1}(\mathcal{M}), \quad \varphi^{(s)}_{a_1 \dots a_{s}} \mapsto \nabla_{(a_1} \varphi^{(s)}_{a_2 \dots a_{s+1})}.
\ee
The formal adjoint of $\D$ is minus the divergence,
\be
\D^\dagger: S^{s+1}(\mathcal{M}) \rightarrow S^{s}(\mathcal{M}), \qquad \varphi^{(s+1)}_{a_1 \dots a_{s+1}} \mapsto - \nabla^{a_{1}}\varphi^{(s+1)}_{a_1 \dots a_{s+1}}.
\ee
The kernel of $\D^\dagger$ consists of divergence-free symmetric tensor fields. 

We also need to introduce the symmetrized traceless covariant derivative acting on traceless symmetric tensor fields. For $s >0$, we define
\be
P: S_0^{s}(\mathcal{M}) \rightarrow S_0^{s+1}(\mathcal{M}), \quad W^{(s)}_{a_1 \dots a_{s}} \mapsto \nabla_{(a_1} W^{(s)}_{a_2 \dots a_{s+1})}- \frac{s}{2s+d-2}{g}_{(a_1 a_2} \nabla^{b_1}  W^{(s)}_{a_3 \dots a_{s+1} )b_1}.
\ee
For $s=0$, we define $P=\D$.
The formal adjoint of $P$ is the same as $\D^\dagger$ restricted to $S_0^{s+1}(\mathcal{M})$, so the kernel of $P^{\dagger}$ consists of divergence-free traceless symmetric tensor fields. 

We can decompose a symmetric tensor field into a divergence-free traceless tensor, the symmetrized traceless derivative of a traceless tensor, and a trace \cite{Bonifacio:2021msa}
\be
\varphi^{(s)}= \varphi_0^{(s)} + P \psi^{(s-1)}+ L \chi^{(s-2)},
\ee
where $\varphi_0^{(s)} \in S^s_0(\mathcal{M})$ is divergence-free, $\psi^{(s-1)} \in S^{s-1}_0(\mathcal{M})$, and $ \chi^{(s-2)} \in S^{s-2}(\mathcal{M})$. 
This is an orthogonal decomposition with respect to the inner product \eqref{eq:inner_product}. For $s=1$, the decomposition  expresses a 1-form as the sum of a co-closed form and an exact form, 
\be
\varphi^{(1)}= \varphi_0^{(1)} + \D \psi^{(0)}.
\ee
By iteratively applying this decomposition to the lower-rank terms, we obtain a decomposition of a symmetric tensor into divergence-free traceless tensors, 
\be \label{eq:TTdecomposition}
\varphi^{(s)}= \sum_{p=0}^{\lfloor\frac{s}{2}\rfloor} \sum_{q=0}^{s-2p} L^p P^q \varphi_p^{(s-q-2p)},
\ee
where $\varphi_p^{(s)} \in  S^s_0(\mathcal{M})$ is divergence-free. The decomposition \eqref{eq:TTdecomposition} is orthogonal with respect to the inner product \eqref{eq:inner_product} on manifolds of constant curvature.

\subsection{Laplace spectrum}
For the rest of this section, we further assume that $(\mathcal{M},g)$ is flat.
Define the (non-negative) rough Laplacian on $S^s(\mathcal{M})$, 
\be
\Delta: S^s(\mathcal{M}) \rightarrow S^s(\mathcal{M}), \qquad \varphi^{(s)}_{ a_1 \dots a_s} \mapsto - \nabla_b \nabla^b  \varphi^{(s) }_{a_1 \dots a_s}.
\ee 
We consider the Laplace eigenvalue equation for symmetric tensor fields,
\be
\Delta \phi_i^{(s)} = \lambda_i^{(s)} \phi_i^{(s)}.
\ee
By standard results of spectral theory, the eigenvalues are discrete and non-negative, and the eigentensors form a basis for square-integrable symmetric tensor fields. The Laplacian preserves the decomposition \eqref{eq:TTdecomposition} (by the constant curvature assumption), so we can restrict to the eigenvalue equation for traceless divergence-free symmetric tensor fields.

We denote the Laplace eigentensors for traceless divergence-free symmetric rank-$s$ tensor fields by $\phi^{(s)}_i$ and their eigenvalues by $\lambda^{(s)}_i$, where the eigenvalues are in non-decreasing order and are repeated according to their multiplicity,
\be
 \lambda^{(s)}_0 \leq \lambda^{(s)}_1 \leq \lambda^{(s)}_2 \leq \lambda^{(s)}_3  \leq \dots \rightarrow \infty.
\ee
The distinct positive eigenvalues are denoted by $\sigma^{(s)}_i$ with $i \in \mathbb{Z}_{> 0}$,
\be \label{eq:distinct_eigenvalues}
0< \sigma_1^{(s)} < \sigma^{(s)}_2 < \sigma^{(s)}_3  < \dots \rightarrow \infty.
\ee
We also take the eigensections to be real and  orthonormal,
\be \label{eq:orthonormal}
\bar{\phi}^{(s)}_i= \phi^{(s)}_i, \qquad \langle \phi^{(s)}_i, \phi^{(s)}_j \rangle = \delta_{ij}.
\ee
For notational convenience, we define $\lambda_i \coloneqq \lambda_i^{(0)}$. We also set $\sigma_i \coloneqq \sigma_i^{(0)}$ for $i>0$ and $\sigma_0=0$.

\subsubsection{Flat tori}

The Laplace--Beltrami operator on a flat torus $\mathbb{R}^d/\Lambda$, with the metric induced from the Euclidean metric on $\mathbb{R}^d$, is given in Cartesian coordinates by
\be
	\Delta =-\sum_{a=1}^d \frac{\partial^2}{\partial x_a^2}.
\ee
There is a complex eigenfunction $\chi_{\vec{k}}$ for each vector $\vec{k}$ in the dual lattice,
\be
	\chi_{\vec{k}} (\pvec{x}) = \frac{1}{\sqrt{V}} e^{2\pi i \vec{k} \cdot \vec{x}}, \quad \vec{k}\in\Lambda^*,
\ee
where $V$ is the volume of $\mathbb{R}^d/\Lambda$. The corresponding eigenvalue is given by 
\be
\lambda_{\vec{k}} = 4\pi^2 |\vec{k}|^2.
\ee
The $k$\textsuperscript{th} distinct nonzero Laplace eigenvalue of $\mathbb{R}^d/\Lambda$, $\sigma_k$, is thus related to the $k$\textsuperscript{th} distinct nonzero norm of $\Lambda^*$,
\be \label{eq:norm_to_eigenvalue}
\sigma_k= 4 \pi^2 n_k(\Lambda^*).
\ee

Let $\Lambda^*_+ \subset \Lambda^*$ contain one representative from each pair $\{ \vec{k}, - \vec{k} \}$ with $\vec{k} \in \Lambda^*\backslash \{\vec{0}\}$, e.g., by choosing the vector whose first nonvanishing component is positive.
We can write an orthonormal basis of real eigenmodes as
\be\label{eq:sincosEigenmode}
\phi_0 = \frac{1}{\sqrt{V}}, \quad \phi^+_{\vec{k}} (\pvec{x}) = \sqrt{\frac{2}{V }} \cos (2\pi \vec{k} \cdot \vec{x}), \qquad \phi^-_{\vec{k}} (\pvec{x}) =  \sqrt{\frac{2}{V}} \sin (2\pi \vec{k}\cdot \vec{x}),
\ee
for $\vec{k} \in \Lambda^*_+$.
The Fourier expansion of a smooth real function $f$ is
\be \label{eq:Fourier_expand}
f(\vec{x})= \alpha_0 \phi_0 + \sum_{\vec{k} \in \Lambda^*_{+}  } \left( \alpha^+_{\vec{k} } \, \phi^+_{\vec{k}}(\vec{x})+ \alpha^-_{\vec{k} } \, \phi^-_{\vec{k}}(\vec{x}) \right),
\ee
where
\be
\alpha_0= \langle f, \phi_0\rangle, \quad \alpha_{\vec{k}}^{\pm} = \langle f, \phi_{\vec{k}}^{\pm} \rangle.
\ee

The eigentensors can be written as eigenfunctions multiplied by constant polarization tensors. A basis of traceless divergence-free rank-$s$ symmetric tensor eigenmodes is given by 
\be
\phi^{(s)}_{\rho, {0} }= e^{(s)}_{\rho}({0}), \quad \phi^{(s) \pm}_{\rho, \vec{k} } = e^{(s)}_{\rho}(\vec{k})  \phi_{\vec{k}}^{\pm}
\ee
where $e^{(s)}_{\rho}(\vec{k})$ is a normalized constant symmetric rank-$s$ polarization tensor satisfying
\be
0 = g^{a_1 a_2}\left(e^{(s)}_{\rho}(\vec{k}) \right)_{a_1 \dots a_s}=k^{a_1} \left(e^{(s)}_{\rho}(\vec{k})\right)_{a_1 \dots a_s },
\ee
and the index $\rho$ runs over a basis of such tensors.
The corresponding eigenvalue is
\be
\lambda^{(s)}_{\rho, \vec{k}}=4\pi^2 |\vec{k}|^2.
\ee
For flat tori with $d>2$, the tensor spectrum is thus the same as the scalar spectrum up to multiplicity.
The zero-modes $e^{(s)}_{\rho}(0)$ are constant symmetric traceless rank-$s$  tensors.
For later use, we define the projector for such tensors,
\be \label{eq:projector}
\pi^{(s)}_{a_1 \dots a_s,b_1 \dots b_s} \coloneqq V\sum_{\rho} e^{(s)}_{\rho}(0)_{ a_1 \dots a_s} e^{(s)}_{\rho}(0)_{ b_1 \dots b_s}.
\ee 

\subsubsection{Flat orbifolds}
For a flat orbifold $\mathbb{R}^d / \Gamma$, the eigenmodes correspond to the $\Gamma/\Lambda$-invariant eigenmodes on the torus cover $\mathbb{R}^d/\Lambda$, which can be obtained by summing over the orbits of the finite group $\Gamma / \Lambda$. For example, for the eigenmodes of the scalar Laplacian we consider combinations of plane waves of the form
\begin{equation}
	\phi(\vec{x})=\sum_{\gamma\in\Gamma/\Lambda} e^{2\pi i \vec{k} \cdot (\gamma \vec{x} )}.
\end{equation}

\subsubsection{Transverse-traceless spectral decomposition}
Combining the decomposition \eqref{eq:TTdecomposition} with the decomposition into eigensections gives
\be \label{eq:TTeigendecomposition}
\varphi^{(s)}= \sum_{p=0}^{\lfloor\frac{s}{2}\rfloor} \sum_{q=0}^{s-2p}{\vphantom{\sum}}' \sum_{i=0}^\infty C_{pqi} L^p P^q\phi_i^{(s-q-2p)},
\ee
where $C_{pqi}$ are constants and the prime denotes that the middle sum is restricted to $q=0$ for eigenmodes with vanishing eigenvalues. The different terms in this decomposition are orthogonal,
\be \label{eq:orthogonality}
	\langle L^p P^q \phi_i^{(s-q-2p)}, L^{p'} P^{q'} \phi_j^{(s-q'-2p')} \rangle = \delta_{i j} \delta_{p p'} \delta_{q q'} \left| L^p P^q \phi_i^{(s-q-2p)} \right|^2,
\ee
and so the coefficients in \eqref{eq:TTeigendecomposition} are given by
\be
C_{pqi} = \frac{\langle L^p P^q \phi_i^{(s-q-2p)}, \varphi^{(s)} \rangle}{\left| L^p P^q \phi_i^{(s-q-2p)} \right|^2}.
\ee

\subsection{Norms}

We now derive an explicit expression for the norms appearing on the RHS of \eqref{eq:orthogonality}, which we use below. Our approach is to first find the norm of $P^{q}  \phi_i^{(s)}$ and then to use a recursion relation to get the norm of $L^{p} P^{q}  \phi_i^{(s)} $. The final result is given in \eqref{eq:LpPq_norm} below.

First, note that we can write
\be \label{eq:Pq_expansion}
P^{q}  \phi_i^{(s)} = \sum_{p=0}^{\lfloor q/2 \rfloor} \alpha_{i,q,p}^{(s)} L^p \D^{q-2p} \phi_i^{(s)},
\ee
where $\alpha^{(s)}_{i,q,0}=1$ and tracelessness fixes
\be \label{eq:alpha}
\alpha^{(s)}_{i,q,p} = 
\frac{(-1)^{p+1}(q+1-2p) (q+2-2p)_{2p-1}}{ 2^{2p-1}(2s+d+2q-4)p! (3-q-d/2-s)_{p-1}}  \left(\lambda_i^{(s)}\right)^p, \quad 0<p\leq  q/2.
\ee
The Pochhammer symbol is defined by $(x)_n = x(x+1)\dots(x+n-1)$. 

Next, note that
\be \label{eq:norm_precursor_1}
	\langle \D^q \phi_i^{(s)}, L^p \D^{q-2p} \phi_i^{(s)} \rangle = \frac{(-1)^p s!q!}{(q+s)!}  \left(\lambda_i^{(s)}\right)^{q-p},
\ee
which follows from integration by parts. 
Now consider
\begin{align}
	\left| P^q\phi_i^{(s)} \right|^2 & =  \langle \D^q\phi_i^{(s)} , P^q\phi_i^{(s)} \rangle  =    \sum_{p=0}^{\lfloor q/2 \rfloor} \alpha_{i,q,p}^{(s)} \langle \D^q\phi_i^{(s)} , L^p \D^{q-2p} \phi_i^{(s)}\rangle.
\end{align}
Using \eqref{eq:norm_precursor_1} and \eqref{eq:alpha}, we obtain
\be \label{eq:norm_precursor_2}
	\left| P^q\phi_i^{(s)} \right|^2  =\frac{\sqrt{\pi } s! q! (d+2 s+q-3)! \left(\lambda_i^{(s)}\right)^q}{(s+q)!  2^{d+2 s+q-3}\Gamma
   \left(\frac{d-1}{2}+s\right) \Gamma \left(\frac{d}{2}+s+q-1\right)},
\ee
except when $d=2$, $s=q=0$, in which case we have \eqref{eq:orthonormal}. 

There is a recursion relation for the norms in \eqref{eq:orthogonality}. For $p\geq 1$, we have
\be
	\left| L^{p} P^{q}  \phi_i^{(s)} \right|^2 = \frac{2 p (d+2q+2s+2p-2)}{(q+s+2p)(q+s+2p-1) } \left| L^{p-1} P^{q}  \phi_i^{(s)} \right|^2,
\ee
which can be derived by peeling off one of the metric factors in the integral. Applying this $p$ times gives
\be \label{eq:recursion_solve}
	\left| L^p P^{q}  \phi_i^{(s)} \right|^2 = \frac{2^p p! (q+s)! (d+2q+2s+2p- 2)!!}{(q+s+2p)!(d+2q+2s- 2)!!} \left| P^{q} \phi_i^{(s)} \right|^2 .
\ee
Using \eqref{eq:norm_precursor_2}, we then have
\be \label{eq:LpPq_norm}
	\left| L^p P^{q}  \phi_i^{(s)} \right|^2 = \frac{ \sqrt{\pi } 2^{p+3-d-2s-q}s! p!  q! (d+2s+q-3)! (d+2q+2s+2p- 2)!! \left(\lambda_i^{(s)}\right)^{q}}{(q+s+2p)!(d+2q+2s- 2)!! \, \Gamma
   \left(\frac{d-1}{2}+s\right) \Gamma \left(\frac{d}{2}+q+s-1\right)},
\ee
except when $d=2$, $q=0$, in which case we can use \eqref{eq:recursion_solve} and \eqref{eq:orthonormal}.
 
\subsection{Triple overlap integrals}
We define the triple product integrals between two scalar eigenfunctions and a traceless divergence-free symmetric rank-$s$ eigentensor,
\be\label{eq:cijktensor}
	c_{ijk}^{(s)} \coloneqq  \langle  \phi_j \D^s \phi_i,  \phi^{(s)}_{k}\rangle =\int_{\mathcal{M}} \nabla^{(a_1} \cdots \nabla^{a_s)} \phi_i \, \phi_j \, \phi^{(s)}_{k, a_1 \dots a_s} dV.
\ee
We can write a general cubic integral involving derivatives of the eigenmodes $ \phi_i$, $\phi_j$, and $\phi_k^{(s)}$ in terms of $c_{ijk}^{(s)}$ by repeatedly integrating by parts. Explicitly, using the condensed notation $a(t)$ for a group of indices $a_1, \dots, a_t$ that are symmetrized with unit weight, we have
\begin{align}
	\int_{\mathcal{M}} & \nabla^{a(t_1)} \nabla^{b(t_2)} \nabla^{c(t_3)} \phi_i \nabla^{e(t_4)} \nabla^{f(t_5)}\nabla_{b(t_2)} \phi_j \nabla_{c(t_3)} \nabla_{e(t_4)} \phi^{(s)}_{k, \,a(t_1) f(t_5)} dV \nonumber \\
	& = \frac{(-1)^{t_5}}{2^{t_2+t_3+t_4}} \left( \lambda_i+\lambda_j-\lambda_k^{(s)}\right)^{t_2} \left( \lambda_i+\lambda_k^{(s)}-\lambda_j\right)^{t_3} \left( \lambda_j+\lambda_k^{(s)}-\lambda_i\right)^{t_4} c_{ijk}^{(s)}. \label{eq:cijk_reduction}
\end{align} 
Note that by the reality of the eigenmodes we have
\be \label{eq:c_reality}
\left( c_{ijk}^{(s)}\right)^2 \geq 0.
\ee
We also note that 
\be
c_{ijk}^{(s)} = (-1)^s c_{jik}^{(s)},
\ee
which implies that $c_{iik}^{(s)}$ vanishes if $s$ is odd.

Generalizing \eqref{eq:c_ij_def} from the introduction, for $i,j>0$ we define 
\be \label{eq:ctilde}
	{ c}_{ij}^{(s)} (\sigma ) \coloneqq   \sqrt{\frac{V}{ (\lambda_i \lambda_j)^{\frac{s}{2}} }\sum_{l}  \left( c_{ijl}^{(s)}\right)^2},
\ee
where the sum is over the eigenspace corresponding to the eigenvalue $\sigma$ and ${ c}_{ij}^{(s)} (\sigma )=0$ if the eigenspace is empty. Later we show how to constrain this quantity using bootstrap bounds. It is independent of the choice of basis for the eigenspace that is summed over and it is invariant under constant rescalings of the metric.
We also write $c_{ijk} \coloneqq c_{ijk}^{(0)}$, as in the introduction, and ${c}_{ij}(\sigma) \coloneqq {c}_{ij}^{(0)}(\sigma)$.

\subsubsection{${c}_{ij}(\sigma_k)$ for flat tori}

Consider the torus $\mathbb{R}^d/\Lambda$ defined by the lattice $\Lambda$.
Take a real Fourier eigenbasis as in \eqref{eq:sincosEigenmode}. Then the nonzero scalar triple overlaps have three cosines or one cosine and two sines, e.g.,
\be \label{eq:cijk_+++}
\langle \phi^+_{\vec{k}_i}  \phi^+_{\vec{k}_j}, \phi^+_{\vec{k}_k} \rangle= \frac{1}{\sqrt{2 V}} \left(\delta_{\vec{k}_i+\vec{k}_j-\vec{k}_k}+\delta_{\vec{k}_i-\vec{k}_j+\vec{k}_k}+\delta_{-\vec{k}_i+\vec{k}_j+\vec{k}_k} +\delta_{\vec{k}_i+\vec{k}_j+\vec{k}_k} \right)
\ee
and 
\be \label{eq:cijk_++-}
\langle \phi^+_{\vec{k}_i}  \phi^-_{\vec{k}_j}, \phi^-_{\vec{k}_k} \rangle = \frac{1}{\sqrt{2 V}} \left( \delta_{\vec{k}_i+\vec{k}_j-\vec{k}_k} +\delta_{\vec{k}_i-\vec{k}_j+\vec{k}_k}-\delta_{-\vec{k}_i+\vec{k}_j+\vec{k}_k} -\delta_{\vec{k}_i+\vec{k}_j+\vec{k}_k} \right),
\ee
where $\delta_{\vec{k}} = 0$ if $\vec{k} \neq \vec{0}$ and  $\delta_{\vec{0}} = 1$.

The triple products are basis-dependent due to the multiplicity of the eigenvalues. Suppose we rotate each eigenspace to obtain a new basis $ \phi'_{i} = \sum_{j} R_{ij} \phi_{j}$. Then the triple products in the new basis are
\be
c'_{ijk} = \sum_{l, m, n} R_{i l}R_{j m} R_{k n} c_{lmn}.
\ee

Using the expansion into Fourier modes \eqref{eq:Fourier_expand}, we can write $c_{ij}(\sigma_k)$ for a general basis in terms of the overlaps \eqref{eq:cijk_+++} and \eqref{eq:cijk_++-} in the Fourier basis. For a fixed non-constant eigenfunction $\phi_i$, we write
\be  \label{eq:Fourier_phi_i}
\phi_i(\pvec{x})= \sum_{\vec{k}_i \in \Lambda^*_{+} } \left( \alpha^+_{\vec{k}_i } \, \phi^+_{\vec{k}_i}(\vec{x})+ \alpha^-_{\vec{k}_i } \, \phi^-_{ \vec{k}_i}(\vec{x}) \right).
\ee
Since $\phi_i$ is an eigenfunction with eigenvalue $\lambda_i$, the coefficients $ \alpha^{\pm}_{\vec{k}_i }$ in \eqref{eq:Fourier_phi_i} can be non-vanishing only if $ |\vec{k}_i|^2=\lambda_i/4 \pi^2  $.
Similarly, for $\phi_j$ non-constant we write
\be \label{eq:Fourier_phi_j}
\phi_j(\pvec{x})= \sum_{\vec{k}_j \in \Lambda^*_{+} } \left( \beta^+_{\vec{k}_j } \, \phi^+_{\vec{k}_j}(\vec{x})+ \beta^-_{\vec{k}_j } \, \phi^-_{ \vec{k}_j}(\vec{x}) \right).
\ee
Orthonormality gives
\be \label{orthonormal}
\langle \phi_i, \phi_j \rangle= \sum_{\vec{k}_i \in \Lambda_+^*} \left(\alpha^+_{\vec{k}_i} \beta^+_{\vec{k}_i}  +\alpha^-_{\vec{k}_i} \beta^-_{\vec{k}_i} \right)  = \delta_{ij}.
\ee
For $\sigma=0$, we have
\be \label{eq:cij0}
c^2_{ij}(0) = V c^2_{ij0}= \delta_{ij}.
\ee
For $\sigma_k>0$, we get
\begin{align}
{c}^2_{ij}(\sigma_k) =V\sum_{\vec{k}_k\in \Lambda^*_{+,k}} \Bigg[ & \bigg[\sum_{\vec{k}_i, \vec{k}_j \in \Lambda^*_+ }  \left( \alpha_{\vec{k}_i}^+ \beta_{\vec{k}_j}^+ \langle \phi_{\vec{k}_i}^+\phi_{\vec{k}_j}^+,\phi_{\vec{k}_k}^+ \rangle +\alpha_{\vec{k}_i}^- \beta_{\vec{k}_j}^- \langle \phi_{\vec{k}_i}^-\phi_{\vec{k}_j}^-,\phi_{\vec{k}_k}^+ \rangle \right) \bigg]^2 \nonumber \\
& +\bigg[\sum_{\vec{k}_i, \vec{k}_j \in \Lambda^*_+ }  \left( \alpha_{\vec{k}_i}^+ \beta_{ \vec{k}_j}^- \langle \phi_{\vec{k}_i}^+\phi_{\vec{k}_j}^-,\phi_{\vec{k}_k}^- \rangle +\alpha_{\vec{k}_i}^- \beta_{\vec{k}_j}^+ \langle \phi_{\vec{k}_i}^-\phi_{\vec{k}_j}^+,\phi_{\vec{k}_k}^- \rangle \right) \bigg]^2  \Bigg], \label{eq:cij}
\end{align}
where we have defined 
\be
 \Lambda^*_{+,k}  \coloneqq \Lambda^*_{+} \cap S_k(\Lambda^*),
 \ee
 with $S_k(\Lambda^*)$ the $k$\textsuperscript{th} shell of $\Lambda^*$.

\subsubsection{${c}^{(s)}_{ij}(\sigma_k)$ for flat tori}
Now consider ${c}^{(s)}_{ij}(\sigma_k)$ for a flat torus $\mathbb{R}^d/\Lambda$. This quantity involves the overlaps of scalar eigenfunctions with tensor eigenmodes. We assume that $s$ is even. Writing $\phi_i$ and $\phi_j$ as in \eqref{eq:Fourier_phi_i} and \eqref{eq:Fourier_phi_j}, for $k>0$ we can write 
\begin{align}
\left( { c}_{ij}^{(s)}(\sigma_k) \right)^2 
& = \frac{(2 \pi)^{2s}  V}{(\lambda_i \lambda_j)^{\frac{s}{2}}} \sum_{\vec{k}_k\in \Lambda^*_{+,k }} \sum_{\rho} \sum_{v\in \{ +, - \} }  \Bigg[ \sum_{u  \in \{ +, - \}}\sum_{\vec{k}_i, \vec{k}_j \in \Lambda^*_+ }  k_i^{a_1}\dots k_i^{a_s} e_{\rho}^{(s)}(\vec{k}_k)_{a_1 \dots a_s} \alpha_{\vec{k}_i}^u \beta_{\vec{k}_j}^{uv} \langle \phi_{\vec{k}_i}^u \phi_{\vec{k}_j}^{uv},\phi_{\vec{k}_k}^v \rangle  \Bigg]^2 ,
\end{align}
where $uv$ stands for the multiplication of the signs $\pm$.
For the case $\sigma=0$, we have
\begin{align} \label{eq:zeromode_overlap}
\left({ c}_{ij}^{(s)}(0)\right)^2  & = \frac{(2 \pi)^{2s} }{ (\lambda_i \lambda_j)^{\frac{s}{2}}  } \sum_{\vec{k}_i,\vec{k}_j \in \Lambda_+^*} \left(\alpha^+_{\vec{k}_i} \beta^+_{\vec{k}_i}  +\alpha^-_{\vec{k}_i} \beta^-_{\vec{k}_i} \right)  \left(\alpha^+_{\vec{k}_j} \beta^+_{\vec{k}_j}  +\alpha^-_{\vec{k}_j} \beta^-_{\vec{k}_j} \right)\pi^{(s)} (\vec{k}_{i}, \vec{k}_{j}),
\end{align}
where we have defined the following contraction of the projector \eqref{eq:projector}:
\be
	\pi^{(s)} (\vec{k}_{i}, \vec{k}_{j}) \coloneqq \vec{k}_{i}^{a_1}\cdots \vec{k}_{i}^{a_s} \pi^{(s)}_{a_1 \dots a_s,b_1 \dots b_s} \vec{k}_{j}^{b_1}\cdots \vec{k}_{j}^{b_s}.
\ee
Note that this is a homogeneous harmonic polynomial of degree $s$ in each of $\vec{k}_i$ and $\vec{k}_j$.
For $d>2$, we have
\be
	\pi^{(s)} (\vec{k}_{i},\vec{k}_{j}) =\frac{s!}{2^s \left( \frac  d 2 - 1 \right)_s} |\vec{k}_{i}|^s |\vec{k}_{j}|^s  \, C_s^{\left( \frac d2 - 1 \right)}\left( \frac{ \vec{k}_{i} \cdot \vec{k}_{j} }{  |\vec{k}_{i}| |\vec{k}_{j}|} \right),
\ee
where $C^{(\alpha)}_s(x)$ is a Gegenbauer polynomial, and for $d=2$ and  $s\in \mathbb{Z}_{>0}$, we have
\be \label{eq:2D_projector_contraction}
\pi^{(s)} (\vec{k}_{i},\vec{k}_{j}) =\frac{|\vec{k}_{i}|^s |\vec{k}_{j}|^s  }{2^{s-1}} \, T_s\left( \frac{ \vec{k}_{i} \cdot \vec{k}_{j} }{ |\vec{k}_{i}| |\vec{k}_{j}|  } \right),
\ee
where $T_s$ is a Chebyshev polynomial of the first kind \cite{Taronna:2010qq,Dolan:2011dv,Costa_2016}.
For the special case where we take $\phi_i$ and $\phi_j$ to be  real Fourier modes \eqref{eq:sincosEigenmode}, we have
\be \label{eq:tensor_zero_mode_overlap}
	{ c}_{ij}^{(s)}(0) =  \sqrt{\frac{(d-2)_s}{2^s \left( \frac d2 - 1 \right)_s}}\delta_{ij},
\ee
where we used $C_s^{(\alpha)}(1)=(2 \alpha)_s/s!$ and $|\vec{k}_{i}|^2 =\lambda_i/(4\pi^2)$.

\subsubsection{Orbifolds}

We can compute the eigenmode overlaps $c_{ij}(\sigma)$ for flat orbifolds by restricting to the $\Gamma/\Lambda$-invariant eigenmodes on the torus cover. Let $\phi_i$ be a normalized orbifold eigenmode and $\tilde{\phi}_i$ its lift to the torus cover, rescaled by a normalization constant,
\be
 \phi_i = \sqrt{ |\Gamma/\Lambda|} \, \tilde{\phi}_i \Big|_{\mathbb{R}^d/\Gamma},
\ee
where we used $V(\mathbb{R}^d/\Lambda)/V(\mathbb{R}^d/\Gamma) = |\Gamma/\Lambda|$.
Then we have
\be
\sqrt{V(\mathbb{R}^d/\Gamma) } \int_{\mathbb{R}^d/\Gamma} {\phi}_i {\phi}_j {\phi}_k \, dV = \sqrt{V(\mathbb{R}^d/\Lambda) } \int_{\mathbb{R}^d/\Lambda} \tilde{\phi}_i \tilde{\phi}_j \tilde{\phi}_k \, dV.
\ee
As a result, we do not need an explicit orbifold fundamental domain to calculate the scale-invariant eigenmode overlaps, we can just use the invariant eigenmodes on the covering torus.

\subsection{Quartic overlap integrals}
\label{sec:quartic_overlaps}

Fix $s_1, s_2, s_3 \in \mathbb{Z}_{\geq 0}$. We consider the following quartic overlap integral of the scalar eigenfunction $\phi_i$:
\be \label{eq:quartic_overlap}
I_{s_1, s_2, s_3} \coloneqq \langle \D^{s_1+s_2+s_3} \phi_i, \D^{s_1} \phi_i \D^{s_2} \phi_i  \D^{s_3} \phi_i  \rangle,
\ee 
where $\D$ is the symmetrized covariant derivative \eqref{eq:d_def}. 
Then we trivially have 
\be \label{eq:s=t=u}
I_{s_1, s_2, s_3} =I_{s_2, s_3,s_1}=I_{s_3, s_1, s_2}.
\ee
Our goal is to write this quartic overlap integral as a sum of products of triple overlap integrals in multiple ways.

Define the symmetric tensor field $T^{(s_2+s_3)}$ obtained by contracting all indices of $\D^{s_1} \phi_i$ with $\D^{s_1+s_2+s_3} \phi_i$,
\be
T^{(s_2+s_3)}_{a_1 \dots a_{s_2+s_3}} \coloneqq \nabla_{(a_1} \cdots \nabla_{a_{s_1+s_2+s_3})} \phi_i \nabla^{(a_{s_2+s_3+1}} \cdots \nabla^{a_{s_1+s_2+s_3}) }\phi_i.
\ee
We can then write 
\be \label{eq:s-channel_precursor}
I_{s_1, s_2, s_3}  =\langle T^{(s_2+s_3)}, \D^{s_2} \phi_i  \D^{s_3} \phi_i  \rangle .
\ee
Decomposing each tensor in this inner product using \eqref{eq:TTeigendecomposition} gives the s-channel decomposition of the quartic overlap integral:
\be \label{eq:s-channel}
I_{s_1, s_2, s_3}  = \sum_{p=0}^{\lfloor\frac{s_2+s_3}{2}\rfloor} \sum_{q=0}^{s_2+s_3-2p}{\vphantom{\sum}}' \sum_{k=0}^\infty \frac{ \langle T^{(s_2+s_3)}, L^p P^q\phi_k^{(s_2+s_3-q-2p)} \rangle \langle L^p P^q\phi_k^{(s_2+s_3-q-2p)}, \D^{s_2} \phi_i  \D^{s_3} \phi_i  \rangle }{\left| L^p P^{q}  \phi_k^{(s_2+s_3-q-2p)} \right|^2}.
\ee

The inner products appearing in the numerator of the RHS of \eqref{eq:s-channel} can be expressed in terms of the triple overlap integrals $c_{ijk}^{(s)}$. Let us start with $\langle T^{(s_2+s_3)}, L^p P^q\phi_k^{(s_2+s_3-q-2p)} \rangle$. Using \eqref{eq:Pq_expansion} and then \eqref{eq:cijk_reduction} gives
\begin{align}
&\langle T^{(s_2+s_3)}, L^p P^q\phi_k^{(s_2+s_3-q-2p)} \rangle  = \sum_{r=0}^{\lfloor q/2 \rfloor} \alpha_{k,q,r}^{(s_2+s_3-q-2p)}  \langle T^{(s_2+s_3)}, L^{p+r} \D^{q-2r} \phi_k^{(s_2+s_3-q-2p)} \rangle \\
& = \sum_{r=0}^{\lfloor q/2 \rfloor} \frac{\alpha_{k,q,r}^{(s_2+s_3-q-2p)}}{2^{s_1+q-2r}}  (- \lambda_i)^{p+r}\left(2 \lambda_i-\lambda_k^{(s_2+s_3-q-2p)}\right)^{s_1}\left[ \lambda_k^{(s_2+s_3-q-2p)} \right]^{q-2r} c_{iik}^{(s_2+s_3-q-2p)}. \label{eq:IP1}
\end{align}
The other inner product in \eqref{eq:s-channel} is more difficult to evaluate. Using \eqref{eq:Pq_expansion}, we get
\begin{align}
 \langle \D^{s_2} \phi_i  \D^{s_3} \phi_i  , L^p P^q\phi_k^{(s_2+s_3-q-2p)}\rangle = \sum_{r=0}^{\lfloor q/2 \rfloor} \alpha_{k,q,r}^{(s_2+s_3-q-2p)} \langle \D^{s_2} \phi_i  \D^{s_3} \phi_i  ,  L^{p+r} \D^{q-2r} \phi_k^{(s_2+s_3-q-2p)}\rangle. \label{eq:IP2_a}
\end{align}
This can be written in terms of the triple overlap integrals \eqref{eq:cijktensor} using the following result, which we derive below:
\begin{align}
& \big\langle \D^{s_2} \phi_i  \D^{s_3} \phi_i , L^{p} \D^{q} \phi^{(l)}_{k}   \big\rangle \nonumber \\
& = \sum^{\min(s_2, l)}_{\gamma=\max(0, l-s_3)} \sum_{\beta=0}^{\min\left(\lfloor\frac{s_2-\gamma}{2} \rfloor, p \right)} \sum^{\min( \alpha_1, p-\beta,s_3+\gamma-l)}_{\alpha=\max(0,\alpha_{0})}   \frac{(-1)^{\gamma} s_2!s_3! p!l!q!(-\lambda_i)^{p-\alpha} \left(\lambda_k^{(l)}\right)^{q}\left(2 \lambda_i-\lambda_k^{(l)}\right)^{\alpha} c_{iik}^{(l)}}{2^{q}\alpha! \beta! \gamma! (s_2+s_3)! (l-\gamma)!(\alpha_1-\alpha)!(\alpha-\alpha_{0})!(p-\alpha-\beta)!}, \label{eq:IP2_b}
\end{align}
where $\alpha_{0} \coloneqq  2p-2\beta+l-\gamma-s_3$, $\alpha_1\coloneqq s_2-\gamma-2\beta$, and $2p+q+l=s_2+s_3$. 

By combining \eqref{eq:s-channel} with \eqref{eq:IP1}, \eqref{eq:IP2_a}, and \eqref{eq:IP2_b}, we obtain an explicit expression for $I_{s_1, s_2, s_3}$ in terms of eigenvalues and triple products. 

\subsubsection{Deriving \eqref{eq:IP2_b}}
Let us now show how to derive \eqref{eq:IP2_b}. We have
\begin{align}
 \big\langle \D^{s_2} \phi_i  \D^{s_3} \phi_i , L^{p} \D^{q} \phi^{(l)}_{k}   \big\rangle
& = \frac{1}{(s_2+s_3)!} \sum_{\sigma \in S_{s_2 +s_3}} \int I(\sigma)\, dV,
\end{align}
where 
\begin{align}
I(\sigma) = & \nabla^{a_1}\dots \nabla^{a_{s_2}} \phi_i \nabla^{a_{s_2+1}} \dots \nabla^{a_{s_2+s_3}} \phi_i \, \times \nonumber \\
&  g_{a_{\sigma(1)} a_{\sigma(2)}} \dots g_{a_{\sigma(2p-1)} a_{\sigma(2p)}} \nabla_{a_{\sigma(2p+1)}} \dots \nabla_{a_{\sigma(2p+q)}}\phi^{(l)}_{k, a_{\sigma(2p+q+1)}, \dots, a_{\sigma(2p+q+l)}},
\end{align}
and $S_{s_2+s_3}$ is the symmetric group consisting of all  permutations $\sigma$ of the set $\{1,2, \dots, s_2+s_3\}$. 
There is a right action of the subgroup $H=(\mathbb{Z}_2)^p\times S_p\times S_q \times S_{l}$ on $S_{s_2 +s_3}$ that leaves $I(\sigma)$ invariant, $I(\sigma h) = I(\sigma)$ for all $h\in H$, due to the slot symmetries of $ L^{p} \D^{q} \phi^{(l)}_{k}$.
We can thus write the inner product as a sum over cosets,
\begin{align} \label{eq:coset_sum}
 \big\langle \D^{s_2} \phi_i  \D^{s_3} \phi_i , L^{p} \D^{q} \phi^{(l)}_{k}   \big\rangle & = \frac{|H|}{(s_2+s_3)!} \sum_{ \sigma H \in  S_{s_2 +s_3}/H} \int I(\sigma) \, dV,
\end{align}
where $|H| = 2^pp!l!q!$.

We now classify the cosets according to how the indices of $\nabla^{a_1}\dots \nabla^{a_{s_2}} \phi_i$ are contracted. Let $n_{\alpha, \beta, \gamma}$ be the number of right cosets  $\sigma H \in S_{s_2 +s_3}/H$ satisfying the following conditions:
\begin{enumerate}
\item $\gamma$ of the indices $a_1, \dots, a_{s_2}$ contract with  $\phi_k^{(l)}$, i.e.,  $$ \lvert \left\{ i \in \{ 2p+q+1, \dots, 2p+q+l \} : \sigma(i) \in \{1, \dots, s_2\}\right\} \rvert=\gamma.$$
\item $2 \beta$ of the indices  $a_1, \dots, a_{s_2}$ contract in pairs with $\beta$ metric factors, i.e., $$|\left\{i \in \{ 1, \dots, p \} : \sigma(2i-1), \, \sigma(2i)\in \{1, \dots, s_2 \} \right\}| = \beta.$$
\item $\alpha$ of the indices  $a_1, \dots, a_{s_2}$ contract with one index from $\alpha$ distinct metric factors, i.e., $$|\left\{i \in \{ 1, \dots, 2p \} : \sigma(i)\in \{1, \dots,  s_2 \} \right\}| = \alpha+2\beta.$$
\end{enumerate}
This combinatorial problem has the following solution: 
\be \label{eq:n_abc}
n_{\alpha, \beta, \gamma}=\frac{s_2!s_3!}{2^{p-\alpha} \alpha! \beta!\gamma! (l-\gamma)! (p-\alpha-\beta)!(s_2-\gamma-2\beta-\alpha)!(s_3-l+\gamma-2(p-\beta)+\alpha)! }.
\ee
For  $\sigma H$ satisfying the above conditions, \eqref{eq:cijk_reduction} gives
\begin{align} \label{eq:I_abc}
\int I(\sigma) \, dV =\frac{(-1)^{l+p-\alpha-\gamma}\lambda_i^{p-\alpha}\left(2 \lambda_i -\lambda_k^{(l)}\right)^{\alpha} \left(\lambda_k^{(l)}\right)^q c_{iik}^{(l)}}{2^{q+\alpha}}.
\end{align}

We can sum over all cosets by summing over appropriate ranges of $\alpha$, $\beta$, and $\gamma$. The integer $\gamma$ satisfies the inequalities $\gamma \leq s_2$, $\gamma \leq l$, and $s_3 \geq l- \gamma$, to ensure that there are enough indices available for the contractions. Similarly, we have $2 \beta \leq s_2 - \gamma$ to ensure there are enough remaining indices to contract with the $\beta$ metric factors and $\beta \leq p$ to ensure there are enough metric factors. We also have $\alpha \leq s_2-\gamma-2 \beta$, $\alpha \leq p- \beta$, and $\alpha \leq s_3-(l-\gamma)$, to ensure there are enough remaining indices and metric factors for the contractions with the $\alpha$ metric factors. Lastly, we have $2(p-\alpha -\beta) \leq s_3-(l-\gamma) -\alpha$ to ensure that there are enough remaining indices on $\D^{s_3} \phi_i$ to contract with any leftover metric factors. We thus get
\begin{align}
& \big\langle \D^{s_2} \phi_i  \D^{s_3} \phi_i , L^{p} \D^{q} \phi^{(l)}_{k}   \big\rangle \nonumber \\
&= \frac{ p!l!q!2^p}{(s_2+s_3)!} \sum^{\min(s_2, l)}_{\gamma=\max(0, l-s_3)} \sum_{\beta=0}^{\min\left(\lfloor\frac{s_2-\gamma}{2} \rfloor, p \right)} \sum^{\min( \alpha_1, p-\beta,s_3+\gamma-l)}_{\alpha=\max(0,\alpha_{0})}  \frac{n_{\alpha, \beta, \gamma}(-1)^{l+p-\alpha-\gamma}\lambda_i^{p-\alpha}\left(2 \lambda_i -\lambda_k^{(l)}\right)^{\alpha} \left(\lambda_k^{(l)}\right)^q c_{iik}^{(l)}}{2^{q+\alpha}}, \label{eq:IP2_c}
\end{align}
where $\alpha_{0}\coloneqq 2p-2\beta+l-\gamma-s_3$ and $\alpha_1 \coloneqq s_2-\gamma-2\beta$. We then get \eqref{eq:IP2_b} by substituting \eqref{eq:n_abc} into \eqref{eq:IP2_c} and using $(-1)^l=1$, since $c_{iik}^{(l)}$ vanishes if $l$ is odd.

\section{Bootstrap bounds} \label{sec:bounds}

We now discuss spectral identities and use these to derive bootstrap bounds.

\subsection{Spectral identities}

We saw in Section \ref{sec:quartic_overlaps} how to write $I_{s_1, s_2, s_3}$ in terms of eigenvalues and triple products. We can similarly decompose $I_{s_2, s_3, s_1}$ and $I_{s_3, s_1, s_2}$, which gives the $t$-channel and $u$-channel decompositions of the quartic overlap integral, respectively.
By equating these different decompositions using \eqref{eq:s=t=u}, we obtain two spectral identities for the eigenvalues $\lambda_i^{(s)}$ and overlaps $c_{ijk}^{(s)}$. 

As an example, the first equality in \eqref{eq:s=t=u} with $s_1=s_3=0$ and $s_2=1$ gives
\be
\sum_{k=0}^{\infty} c_{iik}^2(4 \lambda_i-3 \lambda_k ) =0.
\ee
This is the simplest consistency condition for identical external states and it appears in \cite{Csaki:2003dt}. 
As a more complicated example, taking $s_1=0$ and $s_2=s_3=1$ gives
\be \label{eq:cc_011}
\sum_{k=0}^{\infty}\left(c_{iik}^{(2)}\right)^2 +\sum_{k=1}^{\infty}\frac{(16  \lambda_i^2 + 
  4 (d-3 ) \lambda_i \lambda_k + (4 - 3 d) \lambda_k^2)}{16 (d-1)} c_{iik}^2 + \frac{\lambda_i^2}{d}\frac{1}{V}=0,
\ee
where we used \eqref{eq:cij0}. Writing this in terms of the quantity $c_{ij}^{(s)}(\sigma)$ defined in \eqref{eq:ctilde} and multiplying by $V/\lambda_i^2$, we get
\be \label{eq:cc_011_v2}
\sum_{k=1}^{\infty}\left[{ c}_{ii}^{(2)}\left(\sigma^{(2)}_k\right) \right]^2 +\left[c_{ii}^{(2)}(0)\right]^2+ \sum_{k=1}^{\infty}\frac{(16 + 
  4 (d-3 )  \sigma_k\lambda^{-1}_i + (4 - 3 d) \sigma_k^2 \lambda_i^{-2})}{16 (d-1)} c^2_{ii}(\sigma_k) + \frac{c^2_{ii}(0)}{d} =0.
\ee

Different choices of $s_i$ can lead to different spectral identities, but these may not all be independent. Let $\cutoff$ be a positive even integer. If we consider quartic overlap integrals of the form \eqref{eq:quartic_overlap} with at most $\cutoff$ derivatives, so that $2(s_1+s_2+s_3) \leq \cutoff$, then we find that the number of independent spectral identities  is
\begin{align}
n_{\cutoff} \coloneqq \begin{cases} \frac{1}{72} \left(16 + 3  \cutoff (6 + \cutoff) - 16  \cos( \pi \cutoff/ 3) \right), \quad d>2, \\
  \cutoff/2, \quad d=2.
  \end{cases}
\end{align}
We can generate the $n_{\cutoff} $ spectral identities for a given $\cutoff$ when $d>2$ by considering the identities $I_{s_1, s_2, s_3} =I_{s_2, s_3,s_1}$ with the following triples $(s_1, s_2, s_3)$:
\be
	\left\{ (0,s_2,s_3)\in\mathbb{Z}^3 : s_2 \text{ odd, } 1\leqslant s_2 \leqslant \left\lfloor\frac{\cutoff+1}3\right\rfloor, \frac{s_2-1}2 \leqslant s_3 \leqslant \frac{\cutoff}{2}-s_2\right\}.
\ee
For $d=2$, we instead use the triples $\left\{ (0,s_2, 0 ) : s_2 \in \mathbb{Z}, \, 1 \leq s_2 \leq \cutoff/2 \right\}$.
For a given $\cutoff$ and $d$, we can construct the $n_{\cutoff}$ independent spectral identities and write them as a vector equation of the form
\be\label{eq:sumrulesFin}
  \sum_{\substack{s=0 \\ s \, \, \rm{even} } }^{\cutoff/2} \left[ \vec{A}_s \left[ { c}_{ii}^{(s)}(0)\right]^2 + \sum_{k=1}^{\infty} \vec{F}_s \! \left( \sigma^{(s)}_k \lambda^{-1}_i \right)  \left[ {  c}_{ii}^{(s)}\left(\sigma^{(s)}_k\right) \right]^2\right]=0,
\ee
where $\vec{F}_s \! \left( \sigma^{(s)}_k \lambda^{-1}_i \right)$ is a vector of length $n_{\cutoff}$ whose components are polynomials in $ \sigma^{(s)}_k \lambda^{-1}_i$ and $\vec{A}_s$ is a constant vector of length $n_{\cutoff}$. We include in ancillary files the vectors $\vec{A}_s$ and $\vec{F}_s(x)$ for the first 100 spectral identities for $d>2$, corresponding to $\cutoff=46$.

Let us make a few remarks about the spectral identities:
\begin{itemize}
\item We can have $\vec{A}_s \neq \vec{F}_s(0)$, as illustrated by the example in \eqref{eq:cc_011_v2}. This is unlike the spectral identities for hyperbolic manifolds, where the zero mode contributions of a given spin are equal to the nonzero mode contributions in the limit of vanishing eigenvalue. This difference arises because the zero modes for flat manifolds are constant and so their derivatives do not appear in the spectral decomposition.
\item The sums over $k$ have only finitely many nonzero terms, unlike for hyperbolic manifolds.
\item For $d=2$, we set $\vec{F}_s=\vec{0}$ when $s>0$, since the only tensor eigenmodes are zero modes. 
\item Our derivation assumed that $\mathcal{M}$ is a closed flat manifold. However, the  spectral identities also apply to quotients $ \Gamma \backslash \mathcal{M}$ by restricting to $\Gamma$-invariant eigenmodes on $\mathcal{M}$, where $\Gamma$ is any subgroup of the isometry group of $\mathcal{M}$. The spectral identities thus apply to all closed flat orbifolds, since any such orbifold can be written as a finite quotient of a flat torus.
\end{itemize}

\subsection{Semidefinite programming}

We can obtain bootstrap bounds for flat orbifolds by showing that certain choices of spectra and triple products are inconsistent with the spectral identities. To bound triple products, we follow the conformal bootstrap approach for bounding OPE coefficients \cite{Caracciolo:2009bx, Poland:2011ey}. To bound $c_{11}(\sigma_1)$, we use
\begin{proposition} \label{prop:c11_bound} 
Fix a constant $x_0 \geq 0$, an integer dimension $d\geq 2$, and a positive even integer $\cutoff$. Given the spectral identities \eqref{eq:sumrulesFin}, suppose that we have a vector $\vec\alpha\in\mathbb{R}^{n_\cutoff}$ such that the following conditions hold:
\vspace{-.25cm}
\begin{subequations}
\begin{align}
		\vec{\alpha}\cdot\vec{F}_{0}(1) &=1,\label{eq:alpha_norm_c111} \\
		\vec{\alpha}\cdot\vec{F}_0(x) &\geqslant0, \quad  \forall x \geqslant 1, \\
		\vec{\alpha}\cdot \vec A_s &\geqslant 0, \quad  s=2,4, \dots, 2 \lfloor \cutoff/4 \rfloor, \\
		\vec{\alpha}\cdot\vec{F}_s(x) &\geqslant0, \quad  \forall x\geqslant x_0, \quad  s =2, 4, \dots, 2 \lfloor \cutoff/4 \rfloor. \label{eq:tensor_gap}
\end{align}
\label{eq:numconstralpha} 
\end{subequations}
\noindent Then for all $d$-dimensional flat orbifolds with either i) $d=2$, or with ii) $d>2$ and $\sigma_1^{(s)}/\sigma_1 \geq x_0$ for $s =2, 4, \dots, 2 \lfloor \cutoff/4 \rfloor$, we have
\be\label{eq:objineq}
	c_{11}(\sigma_1) \leqslant \sqrt{-\vec\alpha\cdot\vec{A}_0}.
\ee
\end{proposition}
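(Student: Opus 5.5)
The plan is the standard linear-functional argument from the conformal bootstrap. Apply the spectral identities \eqref{eq:sumrulesFin} with the external eigenfunction $\phi_i=\phi_1$, so $\lambda_1=\sigma_1$. Take the dot product of the vector identity with $\vec\alpha$. This gives a single scalar identity:
\be
\sum_{\substack{s=0\\ s\ \rm even}}^{\cutoff/2}\left[ \vec\alpha\cdot\vec A_s\,\big[c_{11}^{(s)}(0)\big]^2+\sum_{k\geq1}\vec\alpha\cdot\vec F_s\!\left(\sigma_k^{(s)}/\sigma_1\right)\big[c_{11}^{(s)}(\sigma_k^{(s)})\big]^2\right]=0 .
\ee
Every $[c^{(s)}_{11}(\cdot)]^2$ is non-negative by \eqref{eq:c_reality} and the definition \eqref{eq:ctilde}. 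The strategy is to isolate the $s=0$, $k=1$ term and the $s=0$ zero-mode term, and to show that every other term is non-negative under the hypotheses.

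The terms are handled as follows.

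For $s=0$, $k\geq1$, the argument is $x=\sigma_k/\sigma_1\geq1$, so $\vec\alpha\cdot\vec F_0(x)\geq0$. The $k=1$ term has $x=1$, and by the normalization \eqref{eq:alpha_norm_c111} it equals exactly $c_{11}^2(\sigma_1)$.

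For $s\geq2$, the upper limit in \eqref{eq:sumrulesFin} is $\cutoff/2$, whereas the hypotheses only cover $s\le 2\lfloor\cutoff/4\rfloor$. These coincide: the largest even $s$ with $s\le\cutoff/2$ is $2\lfloor\cutoff/4\rfloor$. The zero-mode terms then have $\vec\alpha\cdot\vec A_s\geq0$ by hypothesis. The nonzero-mode terms have $\sigma^{(s)}_k/\sigma_1\geq\sigma^{(s)}_1/\sigma_1\geq x_0$ in case ii), so \eqref{eq:tensor_gap} applies. In case i), $d=2$, we have $\vec F_s=\vec 0$ for $s>0$, so these terms vanish and no gap assumption is needed.

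Finally, the $s=0$ zero-mode term is $\vec\alpha\cdot\vec A_0\,c_{11}^2(0)$, and $c_{11}^2(0)=1$ by \eqref{eq:cij0}.

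Dropping the non-negative terms then gives
\be
c_{11}^2(\sigma_1)+\vec\alpha\cdot\vec A_0\leq 0 ,
\ee
which rearranges to \eqref{eq:objineq}. In particular, $-\vec\alpha\cdot\vec A_0\geq0$ follows automatically, so the square root is real.

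The main point needing care is justifying that the identities hold for orbifolds with $\phi_1$ as external state. I will invoke the remark that the identities apply to $\Gamma$-invariant eigenmodes on the covering torus, with $c_{ij}(\sigma)$ defined via invariant modes. I also need to check that the tensor gap hypothesis $\sigma_1^{(s)}$ refers to the invariant tensor spectrum, which is consistent with how the terms appear.

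A second point to check is that the sums over $k$ are finite. This holds on flat manifolds, so no convergence issues arise in exchanging the functional with the sums.
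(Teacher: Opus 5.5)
Your proposal is correct and is essentially the paper's proof: set $i=1$ in \eqref{eq:sumrulesFin}, contract with $\vec\alpha$, use $\vec\alpha\cdot\vec F_0(1)=1$ together with $c_{11}^2(0)=1$, and drop the remaining terms, which are non-negative by \eqref{eq:numconstralpha} and \eqref{eq:c_reality}. Your extra checks make explicit two points the paper leaves implicit: that $2\lfloor\cutoff/4\rfloor$ is the largest even $s\le\cutoff/2$, and that $\vec F_s=\vec 0$ for $s>0$ when $d=2$.
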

\begin{proof}
Set $i=1$ in the spectral identities \eqref{eq:sumrulesFin}, so that $\lambda_i=\sigma_1$ is the smallest nonzero Laplace eigenvalue.
Taking the dot product with $\vec\alpha$ then gives
\begin{align}
 \vec{\alpha} \cdot \vec{A}_0 + \vec{\alpha} \cdot  \vec{F}_0 \! \left( 1\right)   {c}^2_{11}(\sigma_1) =&-  \sum_{\substack{s=2 \\ s \, \, \rm{even} } }^{\cutoff/2}  \vec{\alpha} \cdot \vec{A}_s \left[ { c}_{11}^{(s)}(0)\right]^2 -  \sum_{k=2}^{\infty}\vec{\alpha} \cdot  \vec{F}_0 \! \left( \sigma_k \sigma^{-1}_1 \right)   {c}^2_{11}\left(\sigma_k\right)  \nonumber \\
& - \sum_{\substack{s=2 \\ s \, \, \rm{even} } }^{\cutoff/2}  \sum_{k=1}^{\infty}\vec{\alpha} \cdot  \vec{F}_s \! \left( \sigma^{(s)}_k \sigma^{-1}_1 \right)  \left[ { c}_{11}^{(s)}\left(\sigma^{(s)}_k\right) \right]^2. \label{eq:expanded_sum_rule}
\end{align}
By \eqref{eq:numconstralpha} and \eqref{eq:c_reality}, all terms on the RHS of \eqref{eq:expanded_sum_rule} are negative or zero, so \eqref{eq:alpha_norm_c111} gives \eqref{eq:objineq}.
\end{proof}
The constant $x_0$ is the assumed lower bound for the ratio of the smallest positive eigenvalue for tensors to the smallest positive eigenvalue for scalars.
This ratio is always equal to one for tori when $d>2$, but it can be less than one for general orbifolds (since the eigenspaces on the torus cover may contain no trivial representations of the quotient group). 
For $x_0=0$, we thus obtain an upper bound on $c_{11}(\sigma_1)$ that is valid for all $d$-dimensional flat orbifolds, which we call an orbifold bound. For $x_0=1$, we obtain a (possibly stronger) upper bound on $c_{11}(\sigma_1)$ for flat $d$-dimensional tori, which we call a torus bound. For $d=2$, there is no distinction between these bounds.

We obtain the strongest bounds by maximizing the objective function $\vec\alpha\cdot\vec{A}_0$ subject to the conditions \eqref{eq:numconstralpha}. 
The search for such an $\vec\alpha$
can be formulated in the language of semidefinite optimization. The resulting semidefinite program can be solved numerically on a computer, as in the numerical conformal bootstrap \cite{Poland:2018epd}. We use the arbitrary-precision semidefinite program solver \texttt{SDPB} \cite{Simmons-Duffin:2015qma, Landry:2019qug}, which is designed to solve problems of this type.

\subsection{Flat tori examples}

Before presenting our bootstrap bounds, let us discuss how to compare these bounds to explicit examples of flat tori. Suppose we have a torus $\mathbb{R}^d/\Lambda$ with a fixed eigenfunction basis $\{ \phi_i \}_{i=0}^{\infty}$. The quantity ${c}_{ij}(\sigma_k)$ depends on the choice of basis. The bootstrap bounds that we derive from the spectral identities apply to any choice of basis. Thus, to compare our bounds to examples, we want to choose a basis that maximizes ${c}_{ij}(\sigma_k)$. We can do this by expanding a general basis in Fourier modes and then optimizing over the choice of Fourier coefficients.

Consider \eqref{eq:cij} with $i=j=1$, which we can write as
\begin{align}
{c}^2_{11}(\sigma_k) =V \sum_{\vec{k}_k\in \Lambda^*_{+,k }} \Bigg[ & \bigg[\sum_{\vec{k}_i, \vec{k}_j \in \Lambda^*_{+,1} }  \left( \alpha_{\vec{k}_i}^+ \alpha_{\vec{k}_j}^+ \langle \phi_{\vec{k}_i}^+\phi_{\vec{k}_j}^+,\phi_{\vec{k}_k}^+ \rangle +\alpha_{\vec{k}_i}^- \alpha_{\vec{k}_j}^- \langle \phi_{\vec{k}_i}^-\phi_{\vec{k}_j}^-,\phi_{\vec{k}_k}^+ \rangle \right) \bigg]^2 \nonumber \\
& +\bigg[\sum_{\vec{k}_i, \vec{k}_j \in \Lambda^*_{+,1} }  \left( \alpha_{\vec{k}_i}^+ \alpha_{ \vec{k}_j}^- \langle \phi_{\vec{k}_i}^+\phi_{\vec{k}_j}^-,\phi_{\vec{k}_k}^- \rangle +\alpha_{\vec{k}_i}^- \alpha_{\vec{k}_j}^+ \langle \phi_{\vec{k}_i}^-\phi_{\vec{k}_j}^+,\phi_{\vec{k}_k}^- \rangle \right) \bigg]^2  \Bigg]. \label{eq:cii}
\end{align}
For a given torus $\mathbb{R}^d/\Lambda$,  \eqref{eq:cii} defines a non-negative homogeneous polynomial of degree four in the $N_1(\Lambda^*)$ variables $\{ \alpha^{\pm}_{\vec{k}}: \vec{k} \in \Lambda^*_{+,1} \}$. These variables are constrained to lie on a sphere due to the normalization condition \eqref{orthonormal}, 
\be \label{eq:sphere}
\sum_{\vec{k} \in \Lambda_{+,1}^*} \left( \left(\alpha^+_{\vec{k}}\right)^2  + \left( \alpha^-_{\vec{k}} \right)^2 \right)  = 1.
\ee
Since \eqref{eq:cii} is a real continuous function, it is bounded on the compact space defined by \eqref{eq:sphere} and achieves its supremum. We denote this maximum value over all possible choices of the normalized Laplace eigenfunction $\phi_1$ by $\max_{\mathbb{R}^d/\Lambda} {c}^2_{11}(\sigma_k)$. 

The problem of finding the global maximum of a quartic form over a sphere is NP-hard \cite{NESTEROV}. 
We use the function \texttt{NMaximize} in \texttt{Mathematica} to numerically search for the maximum, but this becomes impractical for lattices with a large kissing number, e.g., for the Leech lattice we have a quartic form with 196560 variables. However, we can obtain a useful lower bound to the maximum value of ${c}^2_{11}(\sigma_k)$ on a given torus by making a particular choice for $\phi_1$: namely, we take $\phi_1$ to be a uniform combination of cosines from the first eigenspace.  

 To formulate the lower bound, we first make a definition. Recalling \eqref{eq:kth_shell_at_x}, the set of minimal vectors in $\Lambda$ that are in the first shell centered at $\vec{x}\in \Lambda$ is given by
\be \label{eq:intersection}
S_1(\Lambda) \cap S_1(\vec{x}, \Lambda) = \{ \vec{y} \in S_1(\Lambda) : \vec{x}- \vec{y} \in S_1(\Lambda)  \}.
\ee
We define $\nu_k(\Lambda)$ as the root mean square of the number of such vectors over the $k$\textsuperscript{th} shell $S_k(\Lambda)$, 
\be \label{eq:RMS_def}
\nu_k(\Lambda) \coloneqq \left( \frac{1} {N_{k}(\Lambda)}\sum_{\vec{x}\in S_k(\Lambda)} |S_1( \Lambda) \cap S_1(\vec{x}, \Lambda)|^2 \right)^{1/2}.
\ee
In particular, $\nu_1(\Lambda)$ is the root mean square of the number of minimal vectors that are minimal distance from each minimal vector.
 If the automorphism group of $\Lambda$ acts transitively on the $k$\textsuperscript{th} shell, then $ \nu_k(\Lambda)=|S_1(\Lambda) \cap S_1(\vec{x}, \Lambda)|$ for all $\vec{x}\in S_k(\Lambda)$, since $|S_1(\Lambda) \cap S_1(\vec{x}, \Lambda)|$ is constant on orbits of the automorphism group. 
We can now state the lower bound as follows:
\begin{proposition} \label{prop:lower_bound}
On the torus $\mathbb{R}^d/\Lambda$, we have
\be \label{eq:tori_lower_bound}
\frac{N_{k}(\Lambda^*) \nu^2_{{k}} (\Lambda^*)  }{N^2_{1}(\Lambda^*)}   \leq  \max_{\mathbb{R}^d/\Lambda} {c}^2_{11}(\sigma_k).
\ee
\end{proposition}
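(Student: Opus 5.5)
The plan is to exhibit one explicit normalized eigenfunction $\phi_1$ in the first eigenspace and evaluate $c^2_{11}(\sigma_k)$ for it exactly. Since $\max_{\mathbb{R}^d/\Lambda} c^2_{11}(\sigma_k)$ is a maximum over all such $\phi_1$, this value is a lower bound for it. As suggested in the text, take the uniform combination of cosines
\be
\phi_1 = \frac{1}{\sqrt{N_1(\Lambda^*)/2}} \sum_{\vec{k} \in \Lambda^*_{+,1}} \phi^+_{\vec{k}},
\ee
that is, $\alpha^+_{\vec{k}} = (N_1(\Lambda^*)/2)^{-1/2}$ and $\alpha^-_{\vec{k}}=0$. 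Because $|\Lambda^*_{+,1}| = N_1(\Lambda^*)/2$, this satisfies the normalization \eqref{eq:sphere}. It is convenient to rewrite it with complex exponentials: $\phi^+_{\vec{k}} = \sqrt{2/V}\cos(2\pi\vec{k}\cdot\vec{x})$, and summing over $\Lambda^*_{+,1}$ covers each pair $\pm\vec{k}$ once. This gives
\be
\phi_1(\vec{x}) = \frac{1}{\sqrt{V N_1(\Lambda^*)}} \sum_{\vec{k}\in S_1(\Lambda^*)} e^{2\pi i \vec{k}\cdot\vec{x}}.
\ee

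Next, compute $\phi_1^2$ directly:
\be
\phi_1^2 = \frac{1}{V N_1(\Lambda^*)} \sum_{\vec{m}\in\Lambda^*} M(\vec{m})\, e^{2\pi i \vec{m}\cdot\vec{x}},
\qquad M(\vec{m}) \coloneqq |\{(\vec{a},\vec{b}) \in S_1(\Lambda^*)^2 : \vec{a}+\vec{b}=\vec{m}\}|.
\ee
The map $\vec{b}\mapsto -\vec{b}$ preserves $S_1(\Lambda^*)$, so $M(\vec{m})$ equals the number of $\vec{a}\in S_1(\Lambda^*)$ with $\vec{a}-\vec{m}\in S_1(\Lambda^*)$. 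Because $\vec{m}-\vec{a}\in S_1$ is equivalent to $\vec{a}-\vec{m}\in S_1$, this count is exactly $|S_1(\Lambda^*)\cap S_1(\vec{m},\Lambda^*)|$ as in \eqref{eq:intersection}. Also $M(-\vec{m})=M(\vec{m})$.

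The quantity $\sum_l c_{11l}^2$ over the $\sigma_k$-eigenspace is the squared $L^2$ norm of the orthogonal projection of $\phi_1^2$ onto that eigenspace. This norm does not depend on the basis chosen for the eigenspace, so we can compute it in the orthonormal exponential basis $\chi_{\vec{m}} = V^{-1/2}e^{2\pi i\vec{m}\cdot\vec{x}}$ with $\vec{m}\in S_k(\Lambda^*)$. The coefficient of $\chi_{\vec{m}}$ is $M(\vec{m})/(\sqrt{V}N_1(\Lambda^*))$. Hence
\be
c^2_{11}(\sigma_k) = V\sum_{\vec{m}\in S_k(\Lambda^*)} \frac{M(\vec{m})^2}{V N_1^2(\Lambda^*)} = \frac{N_k(\Lambda^*)\,\nu_k^2(\Lambda^*)}{N_1^2(\Lambda^*)},
\ee
where the last step uses the definition \eqref{eq:RMS_def}. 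This is the left side of \eqref{eq:tori_lower_bound}, and taking the maximum over $\phi_1$ gives the inequality.

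The one step that needs care is the passage from the real formula \eqref{eq:cii} to the complex exponential computation. That passage requires the sum of $c_{11l}^2$ over the eigenspace to equal the squared norm of the projection of $\phi_1^2$, whichever orthonormal basis is used, real or complex. This holds because $\phi_1^2$ is real and the eigenspace is spanned equally well by the real basis $\phi^\pm_{\vec{m}}$ and the complex basis $\chi_{\vec{m}}$. As a consistency check, one can also verify the count directly from \eqref{eq:cijk_+++}. The four delta terms there reproduce the ordered pairs $(\pm\vec{k}_i,\pm\vec{k}_j)$ summing to $\pm\vec{m}$, and the factors of $2$ from $\Lambda^*_+$ versus $\Lambda^*$ combine to the same result.
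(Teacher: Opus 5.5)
Your proof is correct and follows the paper's argument: you use the same uniform-cosine test function \eqref{eq:phi_1} and reduce $c^2_{11}(\sigma_k)$ to counting ordered pairs of minimal vectors of $\Lambda^*$ that sum to each $\vec{m}\in S_k(\Lambda^*)$. The only difference is bookkeeping: you project $\phi_1^2$ onto the eigenspace in the complex exponential basis, whereas the paper substitutes into the real-basis formula \eqref{eq:cii} and uses \eqref{eq:cijk_+++} together with Lemma~\ref{lemma:S_k}.
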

\noindent For the proof of this proposition, we use the following lemma:
\begin{lemma} \label{lemma:S_k}
For a lattice $\Lambda$ and $\vec{x} \in \Lambda$, we have
\be \label{eq:Sk_overlaps}
 |S_1(\Lambda) \cap S_1(\vec{x}, \Lambda)|= \sum_{\vec{y}, \vec{z} \in S_1(\Lambda) } \delta_{\vec{x}+\vec{y}+\vec{z}}.
\ee
\end{lemma}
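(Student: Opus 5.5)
The plan is a direct counting argument: evaluate the double sum on the right-hand side of \eqref{eq:Sk_overlaps} by summing out one variable with the delta function, and then identify the surviving index set with the intersection defined in \eqref{eq:intersection}. No earlier result is needed beyond the definitions \eqref{eq:kth_shell_at_x}, \eqref{eq:kth_shell} and \eqref{eq:intersection}.

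\textbf{Step 1.} For fixed $\vec{x}\in\Lambda$ and fixed $\vec{y}\in S_1(\Lambda)$, the inner sum $\sum_{\vec{z}\in S_1(\Lambda)}\delta_{\vec{x}+\vec{y}+\vec{z}}$ has at most one nonzero term, namely $\vec{z}=-\vec{x}-\vec{y}$. Since $\vec{x},\vec{y}\in\Lambda$, this $\vec{z}$ lies in $\Lambda$. The term contributes $1$ precisely when $|\vec{x}+\vec{y}|^2=n_1(\Lambda)$, and $0$ otherwise. Hence the right-hand side equals
\be
\left|\{\vec{y}\in S_1(\Lambda): \vec{x}+\vec{y}\in S_1(\Lambda)\}\right|.
\ee

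\textbf{Step 2.} Next I would show that this set has the same size as $\{\vec{y}\in S_1(\Lambda):\vec{x}-\vec{y}\in S_1(\Lambda)\}$, which is exactly \eqref{eq:intersection}. The bijection is $\vec{y}\mapsto-\vec{y}$. It preserves $S_1(\Lambda)$ because lattices are centrally symmetric and norms are even under negation. It also carries the condition $|\vec{x}+\vec{y}|^2=n_1$ to $|\vec{x}-(-\vec{y})|^2=n_1$. Equivalently, $\vec{x}-\vec{y}\in S_1(\Lambda)$ is the statement $|\vec{x}-\vec{y}|^2=n_1(\Lambda)$, so the second set is $S_1(\Lambda)\cap S_1(\vec{x},\Lambda)$ by \eqref{eq:kth_shell_at_x}.

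Combining the two steps gives \eqref{eq:Sk_overlaps}.

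I expect no real obstacle here. The only point needing care is the sign convention, since the delta function is $\delta_{\vec{x}+\vec{y}+\vec{z}}$ rather than $\delta_{\vec{x}-\vec{y}-\vec{z}}$. The symmetry $S_1(\Lambda)=-S_1(\Lambda)$ handles this.
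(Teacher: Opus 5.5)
Your proof is correct and essentially matches the paper's: both reduce the delta sum to a count of minimal vectors and use the central symmetry $S_1(\Lambda)=-S_1(\Lambda)$ to fix the sign. The only difference is where the symmetry is applied: you negate the summation variable $\vec{y}\mapsto-\vec{y}$, while the paper uses $|S_1(\Lambda)\cap S_1(\vec{x},\Lambda)|=|S_1(\Lambda)\cap S_1(-\vec{x},\Lambda)|$ and counts ordered pairs $(\vec{y},\vec{z})$ with $\vec{x}+\vec{y}+\vec{z}=\vec{0}$.
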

\begin{proof}
From \eqref{eq:intersection}, we can equivalently write $|S_1(\Lambda) \cap S_1(\vec{x}, \Lambda)|$ as the number of ordered pairs of minimal vectors $(\vec{y}, \vec{z})$ that add to $\vec{x}$,
\be
|S_1(\Lambda) \cap S_1(\vec{x}, \Lambda) |=  |\left\{ (\vec{y}, \vec{z}) \in S_1(\Lambda) \times S_1(\Lambda) : \vec{y}+ \vec{z} = \vec{x} \right\}|.
\ee
Since $\Lambda$ is invariant under reflections, we have
\begin{align}
|S_1(\Lambda) \cap S_1(\vec{x}, \Lambda)| & = | S_1(\Lambda) \cap S_1(-\vec{x}, \Lambda) | \label{eq:reflection} \\
& =\left|\left\{ (\vec{y}, \vec{z}) \in S_1(\Lambda) \times S_1(\Lambda) : \vec{x}+ \vec{y}+ \vec{z} = \vec{0} \right\} \right|  \\
& =\sum_{\vec{y}, \vec{z} \in S_1(\Lambda) } \delta_{\vec{x}+\vec{y}+\vec{z}}.
\end{align}
\end{proof}

\begin{proof}[Proof of Proposition~\ref{prop:lower_bound}]
To obtain the lower bound, we evaluate $ {c}^2_{11}(\sigma_k)$ with a particular choice of $\phi_1$. In particular, take the non-vanishing Fourier coefficients in \eqref{eq:Fourier_phi_i} to be
\be \label{eq:uniform_cosines}
\alpha^+_{\vec{k}} = \sqrt{\frac{2}{N_1(\Lambda^*)}}, \quad \vec{k} \in \Lambda^*_{+,1},
\ee 
so that we have
\be \label{eq:phi_1}
\phi_1(\vec{x}) = \sqrt{\frac{4}{N_1(\Lambda^*) V}}    \sum_{\vec{k} \in \Lambda^*_{+,1} } \cos (2\pi \vec{k} \cdot \vec{x}).
\ee
Using this in \eqref{eq:cii} gives
\be
\frac{1}{N_1^2(\Lambda^*)}  \sum_{\vec{k}\in S_k(\Lambda^*)} \left[  \sqrt{2 V} \sum_{\vec{k}_i, \vec{k}_j \in \Lambda^*_{+,1} } \langle \phi_{\vec{k}_i}^+\phi_{\vec{k}_j}^+,\phi_{\vec{k}}^+ \rangle\right]^2   \leq  \max_{\mathbb{R}^d/\Lambda} {c}^2_{11}(\sigma_k) ,
\ee
where we replaced the sum over $ \Lambda^*_{+, k }$ by half the sum over $S_k(\Lambda^*)$.
By \eqref{eq:cijk_+++}, we have
\begin{align}
\sqrt{2 V} \sum_{\vec{k}_i, \vec{k}_j \in \Lambda^*_{+,1} } \langle \phi^+_{\vec{k}_i}  \phi^+_{\vec{k}_j}, \phi^+_{\vec{k}} \rangle & =  \sum_{\vec{k}_i, \vec{k}_j \in \Lambda^*_{+,1} }  \left(\delta_{\vec{k}_i+\vec{k}_j-\vec{k}}+\delta_{\vec{k}_i-\vec{k}_j+\vec{k}}+\delta_{-\vec{k}_i+\vec{k}_j+\vec{k}} +\delta_{\vec{k}_i+\vec{k}_j+\vec{k}} \right) \nonumber \\
& =  \sum_{\vec{k}_i, \vec{k}_j \in S_1(\Lambda^*) }  \delta_{\vec{k}_i+\vec{k}_j+\vec{k}}.
\end{align}
Using Lemma~\ref{lemma:S_k} and the definition \eqref{eq:RMS_def}, we get
\be
\frac{1}{N_1^2(\Lambda^*)}  \sum_{\vec{k}\in S_k(\Lambda^*)} |S_1(\Lambda^*) \cap S_1(\vec{k}, \Lambda^*)|^2 = \frac{N_{k}(\Lambda^*) \nu^2_k(\Lambda^*)}{N_1^2(\Lambda^*)}     \leq  \max_{\mathbb{R}^d/\Lambda} {c}^2_{11}(\sigma_k),
\ee
which completes the proof.
\end{proof}

For $k=1$, Proposition~\ref{prop:lower_bound} gives
\be \label{eq:tori_lower_bound_k=1}
\frac{ \nu^2_{{1}} (\Lambda^*)  }{N_1(\Lambda^*)}   \leq  \max_{\mathbb{R}^d/\Lambda} {c}^2_{11}(\sigma_1).
\ee
For lattices whose automorphism group acts transitively on the $1$\textsuperscript{st} shell, the quantity $\nu_1$ is sometimes called the ``necking number" \cite{ElkiesLectureNotes}; in terms of the lattice sphere packing, this is the number of spheres that simultaneously touch two spheres that are touching, i.e., the number of common neighbors of two touching spheres. As a simple example, in the hexagonal lattice the first shell contains six minimal vectors and the automorphism group, which is the dihedral group of order 12, acts transitively on these vectors. As illustrated in the figure below, there are two circles simultaneously touching any two touching circles in the hexagonal packing and so $\nu_1(A_2)=2$.
\begin{figure}[h]
\centering
\begin{tikzpicture}[scale=2.0]

\coordinate (O) at (0,0);
\coordinate (P1) at ( 1, 0);
\coordinate (P2) at ( 0.5, 0.866025);
\coordinate (P3) at (-0.5, 0.866025);
\coordinate (P4) at (-1, 0);
\coordinate (P5) at (-0.5,-0.866025);
\coordinate (P6) at ( 0.5,-0.866025);

\draw[thick] (P1)--(P2)--(P3)--(P4)--(P5)--(P6)--cycle;
\draw[thick] (O)--(P3);
\draw[thick] (O)--(P4);
\draw[thick] (O)--(P5);
\draw[thick] (O)--(P1);
\draw[thick] (O)--(P2);
\draw[thick] (O)--(P6);

\node[circle, fill=black, inner sep=1.2pt] at (O) {};
\node[circle, fill=black, inner sep=1.2pt] at (P1) {};
\node[circle, fill=black, inner sep=1.2pt] at (P2) {};
\node[circle, fill=black, inner sep=1.2pt] at (P3) {};
\node[circle, fill=black, inner sep=1.2pt] at (P4) {};
\node[circle, fill=black, inner sep=1.2pt] at (P5) {};
\node[circle, fill=black, inner sep=1.2pt] at (P6) {};

\draw[thick] (O)  circle[radius=0.5];
\draw[thick, red] (P1) circle[radius=0.5];
\draw[thick] (P2) circle[radius=0.5];
\draw[thick, red] (P3) circle[radius=0.5];

\end{tikzpicture}

\end{figure}
For the torus $\mathbb{R}^2/A_2^*$, we thus have
\be \label{A2_example}
\frac{2^2}{6} =\frac{2}{3} \leq \max_{\mathbb{R}^2/A_2^*} {c}^2_{11}(\sigma_1).
\ee
We will see below that $2/3$ is also an upper bound on $ {c}^2_{11}(\sigma_1)$ for any 2-dimensional torus.

As other examples, we have $\nu_1=56$ for the $E_8$ lattice and $\nu_1=4600$ for the Leech lattice. For a full-rank lattice in $\mathbb{R}^d$, the set of minimal vectors that are minimal distance from a given minimal vector produces, after rescaling, a spherical code on the unit sphere in $\mathbb{R}^{d-1}$ with minimal angle $\cos^{-1}(1/3)\approx 70.53^{\circ} $ \cite[Theorem~1, Ch.~14]{Conway:1988oqe}. For a $d$-dimensional lattice $\Lambda$, we thus have
\be
\nu_1(\Lambda) \leq A(d-1, \cos^{-1}(1/3)),
\ee 
where $A(d-1, \phi)$ is the maximum number of points in a spherical code on the unit sphere in $\mathbb{R}^{d-1}$ with minimal angle $\phi$. As a corollary, $\nu_1(\Lambda)$ is bounded above by $A(d-1, \pi/3)$, the kissing number in one lower dimension. Linear programming upper bounds on $A(d, \cos^{-1}(1/3))$ are given in Table~9.2 of \cite{Conway:1988oqe}, and these are saturated in some dimensions by spherical codes coming from lattices; for example, $A(7, \cos^{-1}(1/3)) = 56$ and $A(23, \cos^{-1}(1/3)) = 4600$, with the spherical codes coming from the set of spheres touching two touching spheres in the $E_8$ and Leech lattice sphere packings. 

We emphasize that \eqref{eq:tori_lower_bound} is a lower bound on the maximum value of $ {c}^2_{11}(\sigma_k)$ for a given torus $\mathbb{R}^d/\Lambda$. Below we derive bootstrap upper bounds on ${c}^2_{11}(\sigma_k)$ for all flat tori of a given dimension. We will see that some of these upper bounds are saturated by the lower bound \eqref{eq:tori_lower_bound} for particular tori, which implies that \eqref{eq:uniform_cosines} realizes the global maximum of the quartic form for these tori.

\subsubsection{${c}_{11}(\sigma_2)$}
\label{subsubsec:c112}
Let us comment on when ${c}_{11}(\sigma_2)$ can be non-vanishing for flat tori. From \eqref{eq:cii}, \eqref{eq:cijk_+++}, and \eqref{eq:cijk_++-}, ${c}_{11}(\sigma_2)$ can only be nonzero if we have three lattice vectors satisfying $\vec{k}_i+\vec{k}_j+\vec{k}_k=0$ with $k_i^2 = k_j^2=\sigma_1/4 \pi^2 $ and $k_k^2=\sigma_2/4 \pi^2 $. This gives
\be \label{eq:c112_relation}
	|\vec{k}_i+\vec{k}_j|^2 =\vec{k}_k^2 \quad \implies \quad 2 \sigma_1+8 \pi^2  \vec{k}_i \cdot \vec{k}_j=  \sigma_2 .
\ee
By \eqref{eq:distinct_eigenvalues} and \eqref{eq:norm_to_eigenvalue}, there can be no lattice vectors with length squared in the range $(0, \sigma_1/4 \pi^2 )$ or $( \sigma_1/4 \pi^2 , \sigma_2/4 \pi^2 )$, and thus $|\vec{k}_i - \vec{k}_j|^2$ cannot fall in these intervals. Using \eqref{eq:c112_relation}, we have
\be
	4 \pi^2 |\vec{k}_i - \vec{k}_j|^2 =2 \sigma_1 - 8 \pi^2  \vec{k}_i \cdot \vec{k}_j = 4\sigma_1 -\sigma_2.
\ee
So for a flat torus with ${c}_{11}(\sigma_2) > 0$, we must have
\be
	4\sigma_1 -\sigma_2 \notin(0, \sigma_1) \cup (\sigma_1, \sigma_2) \quad \implies \quad \sigma_2/\sigma_1 \notin (2, 3) \cup (3, 4).
\ee

\subsection{Triple product bounds}
We now present our bootstrap bounds and compare them to examples.

\subsubsection{$ c_{11}(\sigma_1)$}

For a fixed dimension $d$, we find an upper bound on ${c}_{11}(\sigma_1)$ for flat tori by considering the conditions \eqref{eq:numconstralpha} with $x_0=1$, which imposes that the smallest nonzero tensor eigenvalues are at least as large as the scalar spectral gap. This bound is shown in the introduction in Fig.~\ref{fig:c11_plot}. We obtain a bound that applies more generally to flat orbifolds by instead taking $x_0=0$. We plot both of these numerical bounds for $d=2, \dots,  31$ in Fig.~\ref{fig:c111}. 
We also show in this plot the quantity $\nu_{1} (\Lambda)/ \sqrt{N_1(\Lambda)}$ for the laminated lattices $\Lambda_d$,\footnote{The laminated lattices are not unique in $d=11, 12, 13,$ and for at least some $d\geq 25$. Our choice of $\Lambda_d$ corresponds to \texttt{Lattice("Lambda",d)} in \texttt{Magma} \cite{MR1484478}. These are the laminated lattices with the largest kissing number given in the catalog \cite{LatticeCatalog}.} which gives a lower bound on $ c_{11}(\sigma_1)$. 
We can see from this plot that the orbifold bound appears to be saturated by $\mathbb{R}^d/\Lambda_d^*$ for $d=2$ and $d=8$, where $\Lambda_2$ is the hexagonal lattice and $\Lambda_8$ is the $E_8$ lattice. Additionally, the bound for flat tori appears to be saturated by $\mathbb{R}^d/\Lambda_d^*$ for $d=4$ and $d=24$, where $\Lambda_4$ is the $D_4$ lattice and $\Lambda_{24}$ is the Leech lattice. We give a rigorous proof in Section~\ref{sec:functionals} that these bounds are indeed saturated in these dimensions. 
\begin{figure}[ht!]
\begin{center}
	\epsfig{file=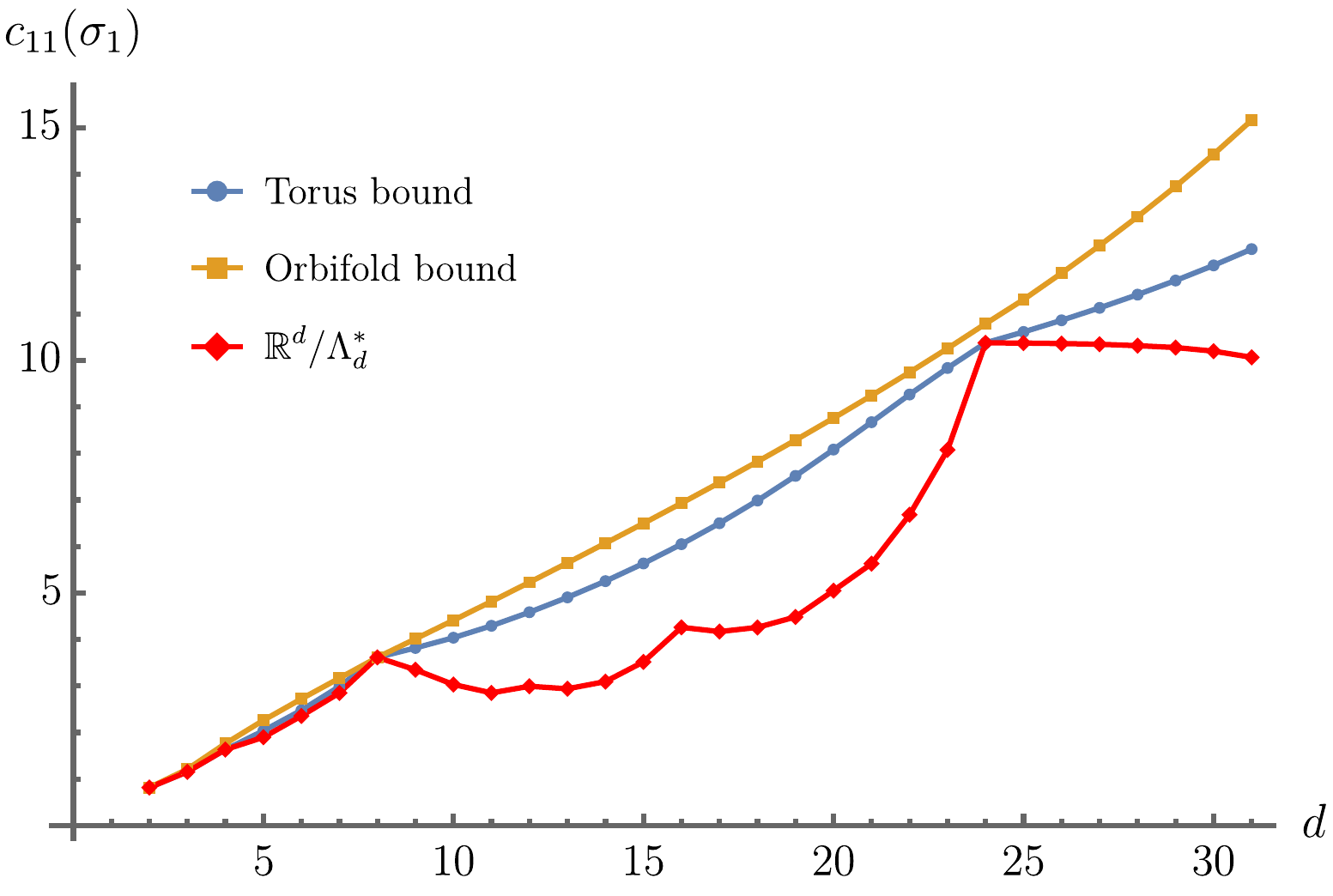,width=12cm}
\end{center}
\caption{This plot shows the bootstrap upper bounds on $c_{11}(\sigma_1)$ for flat tori and flat orbifolds in $d$ dimensions, where we take $\cutoff=42$ for $2 \leq d\leq 24$ and $\cutoff=122$ for $25 \leq d \leq 31$. The red diamonds show $ c_{11}(\sigma_1)$ for the tori $\mathbb{R}^d/\Lambda^*_d$ with the choice of eigenfunction \eqref{eq:phi_1}, so that $c_{11}(\sigma_1)= \nu_{1} (\Lambda_d)/ {\sqrt{N_1(\Lambda_d)}}$, where $\Lambda_d$ is the $d$-dimensional laminated lattice with the largest kissing number given in the catalog \cite{LatticeCatalog}.}
\label{fig:c111}
\end{figure}

We summarize the lower and upper bounds on $\nu^2_{1} (\Lambda)/ {N_1(\Lambda)}$ for lattices in various dimensions in Table~\ref{table:bounds}. The bound in $d=3$ is nearly saturated by $\mathbb{R}^d/\Lambda_3^*$, where $\Lambda_3$ is the $A_3$ (or fcc) lattice, but the numerical upper bound appears to converge slightly above $4/3$. This is reminiscent of the near saturation of hyperbolic orbifold bounds \cite{Bonifacio:2021aqf,Kravchuk:2021akc,Radcliffe:2024jcg}. There are additional spectral identities in 3 dimensions, as considered for the hyperbolic case in \cite{Bonifacio:2023ban}, which may help to bring the bound down to $4/3$. For sufficiently large $\cutoff$, the numerical bounds in $d=5$, $d=6$, and $d=9$ appear to converge to the rational numbers $4096/987$, $350/57$, and $204800/14061$, respectively. 
\begin{table}
\begin{center}
\def\arraystretch{1.125}%
\begin{tabular}{ c | c |c }
 $d$ & $ \nu^2_{1} (\Lambda_d)/ {N_1(\Lambda_d)}$ & Upper bound  \\ \hline
 1 & 0 & 0 \\
 2 & $\frac{2^2}{6}=\frac{2}{3}$ & $\frac{2}{3}$ \\  
 3 & $\frac{4^2}{12} \approx 1.3333$ & 1.3381  \\
 4 &  $\frac{8^2}{24}=\frac{8}{3}$  & $\frac{8}{3}$  \\
 5 & $\frac{12^2}{40} = 3.6 $ & 4.1500 \\
 6 & $\frac{20^2}{72}\approx 5.5556 $ & 6.1418 \\
 7& $\frac{32^2}{126} \approx 8.1270$ & 8.9302 \\
 8 & $\frac{56^2 }{240} =\frac{196}{15}$ & $\frac{196}{15}$ \\
 9 & $\frac{32 \times 28^2+128 \times 56^2+112 \times 60^2}{272^2}\approx 11.2145$ & $14.5751$ \\
16 & $\frac{280^2 }{4320} \approx 18.1481 $ & $36.5804$ \\
24 & $\frac{4600^2}{196560}=\frac{264500}{2457}$ & $\frac{264500}{2457}$\\
\end{tabular}
 \caption{Lower and upper bounds on $\nu^2_{1} (\Lambda)/ {N_1(\Lambda)}$ in different dimensions.}
 \label{table:bounds}
 \end{center}
 \end{table}

Note that the $d=12$ point in the sphere-packing plot in Fig.~\ref{fig:bestpackings} is the Coxeter--Todd lattice $K_{12}$, whereas the $d=12$ point in the bootstrap plot in Fig.~\ref{fig:c11_plot} or Fig.~\ref{fig:c111}  is the laminated lattice $\Lambda_{12}$. The Coxeter--Todd lattice has
\be
\frac{\nu^2_{1} (K_{12})}{ N_1( K_{12} ) }=\frac{82^2}{756} \approx 8.8942,
\ee
whereas the laminated lattice has the slightly larger value
\be
\frac{\nu^2_{1} ( \Lambda_{12} ) }{ N_1( \Lambda_{12} ) }=\frac{\frac{1}{648}(72^2\times 576 +104^2 \times 72)}{648} \approx 8.9657,
\ee
where the two terms in the numerator correspond to the two orbits of the automorphism group acting on the first shell.

\subsubsection{$c_{11}(\sigma_2)$}
Now we consider bounds on the quantity $ c_{11}(\sigma_2)$, which involves the overlaps of eigenfunctions from the first and second eigenspaces, using the spectral identities \eqref{eq:sumrulesFin}. We compute an upper bound on $ c_{11}(\sigma_2)$ for different values of the ratio of the smallest distinct nonzero scalar eigenvalues, $\sigma_2/\sigma_1$. This ratio is restricted to lie in the interval $(1,4]$ for Ricci-flat manifolds \cite{Bonifacio:2019ioc}.  
Let us fix $x^* \in [1, 4]$, an integer dimension $d\geq 2$, and a positive even integer $\cutoff$. Suppose that we have a vector $\vec\alpha\in\mathbb{R}^{n_\cutoff}$ such that the following conditions hold:
\begin{subequations} \label{eq:c112_conditions}
\begin{align}
		\vec{\alpha}\cdot\vec{F}_{0}(x^*) &=1,  \\
		\vec{\alpha}\cdot\vec{F}_0(x) &\geqslant0, \quad  \forall x \in\{1\}\cup[x^*,\infty), \\
		\vec{\alpha}\cdot \vec A_s &\geqslant0, \quad  s=2,4, \dots, 2 \lfloor \cutoff/4 \rfloor, \\
		\vec{\alpha}\cdot\vec{F}_s(x) &\geqslant0, \quad  \forall x\geq0, \quad  s =2, 4, \dots, 2 \lfloor \cutoff/4 \rfloor. \label{eq:tensor_gap_c112}
\end{align}
\end{subequations}
Taking the dot product of \eqref{eq:sumrulesFin} with $\vec\alpha$ and using \eqref{eq:c112_conditions} together with \eqref{eq:c_reality} then yields the inequality 
\be
	 c_{11}(x^* \sigma_1) \leqslant \sqrt{-\vec\alpha\cdot\vec{A}_0}.
\ee
We thus obtain an upper bound on $c_{11}(\sigma_2)$ for flat orbifolds with $\sigma_2/\sigma_1 =x^*$. If we replace \eqref{eq:tensor_gap_c112} by the more restrictive conditions
\be\label{eq:tensor_gap_c112_tori}
\vec{\alpha}\cdot\vec{F}_s(x) \geqslant0, \quad  \forall x \in\{1\}\cup[x^*,\infty), \quad  s =2, 4, \dots, 2 \lfloor \cutoff/4 \rfloor,
\ee
then we obtain potentially stronger bounds that apply to all flat tori. We show the resulting numerical bounds for $d=3$, $4$, and $5$ in Figs.~\ref{fig:c1123d}, \ref{fig:c1124d}, and \ref{fig:c1125d}. In these plots, we also include examples of flat tori $\mathbb{R}^d/\Lambda^*$ where $\Lambda$ ranges over: 1) all named lattices for $d=3,4, 5$ in the catalog \cite{LatticeCatalog}, and 2) all lattices defined by primitive positive-definite quadratic forms with absolute values of their discriminants bounded by 1000 for $d=3$, by 992 for $d=4$, and by 300 for $d=5$, which are enumerated in \cite{LatticeCatalog}.
For each of these examples, we numerically searched for the maximum of the quartic form \eqref{eq:cii}, rather than using the lower bound \eqref{eq:tori_lower_bound}. 
\begin{figure}[ht!]
	\begin{center}
		\epsfig{file=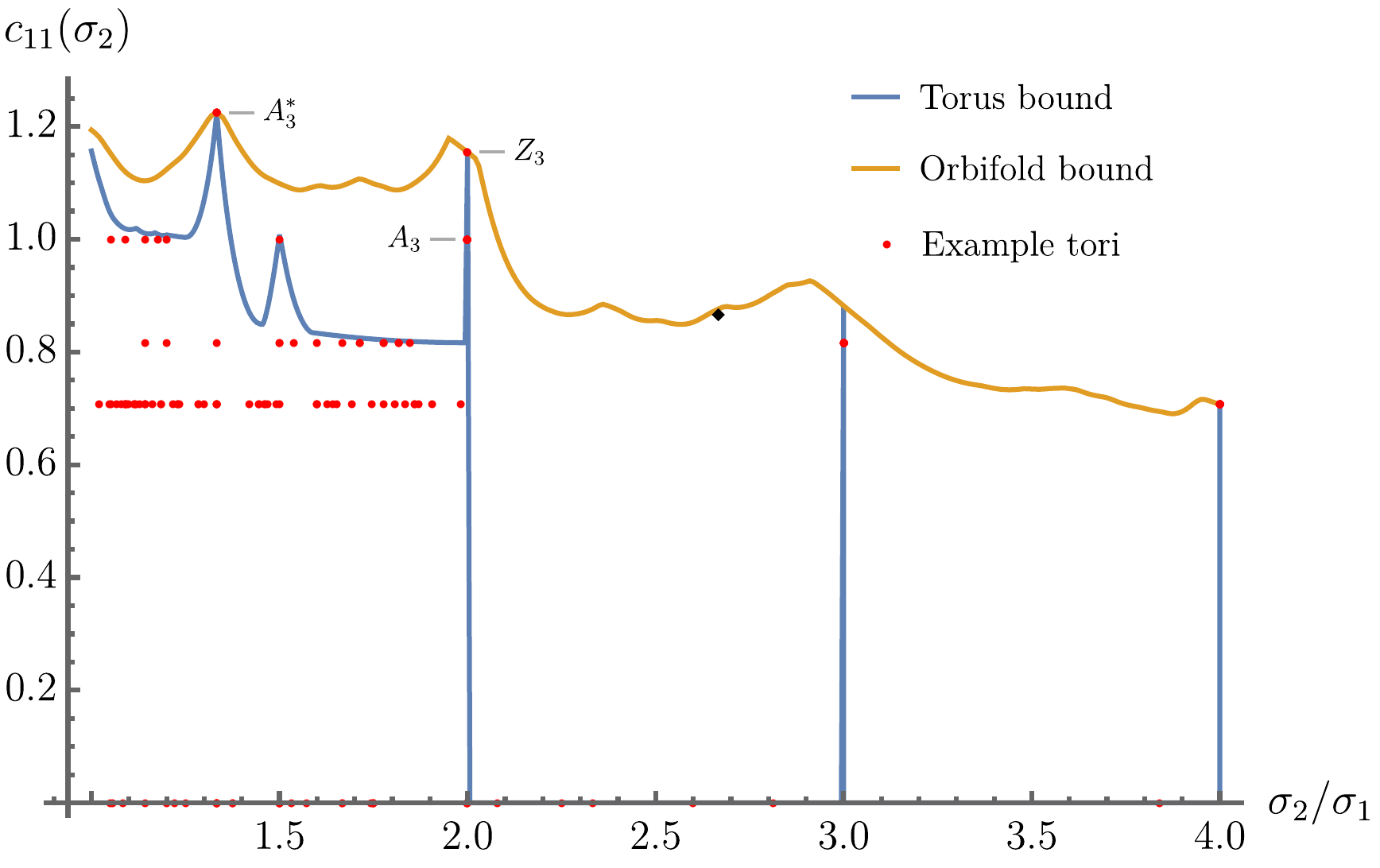,width=12cm}
	\end{center}
	\caption{This plot shows the bootstrap bounds on $ c_{11} (\sigma_2)$ for different values of $\sigma_2/\sigma_1$ for flat tori and orbifolds with $d=3$, where we take $\cutoff=50$. The red dots are examples of tori $\mathbb{R}^3/\Lambda^*$, labeled by $\Lambda$. The black diamond marker is the flat 3-orbifold defined by the space group Fd$\bar{3}$m, which is the symmetry group of the diamond cubic crystal structure.}
	\label{fig:c1123d}
\end{figure}

 We can see from these plots that the tori bounds vanish for $\sigma_2/\sigma_1\in (2,3)\cup (3,4)$, which is consistent with the fact that $c_{11}(\sigma_2)$ must vanish for tori with eigenvalue ratios in these ranges, as explained in Section~\ref{subsubsec:c112}. There are also several examples that are close to the bounds. Let us mention further details for some of these examples in $d=3$, taking $\cutoff=50$ unless stated otherwise:
 \begin{itemize}
 \item The torus $\mathbb{R}^3/\Lambda_3$, where $\Lambda^*_3$ is the $A^*_3$ (or bcc) lattice, has $\sigma_2/\sigma_1=4/3$ and $c^2_{11}(\sigma_2) =3/2$, which is very close to the numerical orbifold bound.  
 \item The lattice $\Lambda$ with Gram matrix 
 \be
	G =\begin{pmatrix}
		4 && 0 && 2 \\ 0 && 4 && 2 \\ 2 && 2 && 5
	\end{pmatrix}
\ee
defines a torus $\mathbb{R}^3/\Lambda$ with $\sigma_2/\sigma_1 =3/2$ and $ {c}_{11}(\sigma_2) =1$. It defines an odd ternary quadratic form $q_G$ with discriminant $-192$ \cite{LatticeCatalog}. The numerical upper bound on $ {c}_{11}(\sigma_2)$ for tori with $\sigma_2/\sigma_1 =3/2$ is $1.0083$, which forms a sharp kink that is nearly saturated by this example.
\item The orbifold $\mathbb{R}^3/\Gamma$ with $\Gamma$ the space group Fd$\bar{3}$m, which is the symmetry group of the diamond cubic crystal structure, has $\sigma_2/\sigma_1 =8/3$ and $ {c}^2_{11}(\sigma_2) =3/4$. The numerical orbifold bound for $\sigma_2/\sigma_1=8/3$ with $\cutoff =100$ is $c^2_{11}(\sigma_2)  \leq 0.7608$. If we impose that the tensors have a spectral gap at least as large as the scalar gap, by replacing \eqref{eq:tensor_gap_c112} with $\vec{\alpha}\cdot\vec{F}_s(x) \geqslant0$,  $\forall x\geq1$, then the numerical bound is very close to $3/4$. 
\item The value $ {c}^2_{11}(\sigma_2) =1/2$ with $\sigma_2/\sigma_1 =4$ is realized by any rectangular torus with one cycle at least twice as long as the others, e.g., the torus $\mathbb{R}^3/\Lambda$ with $\Lambda = \{(2n_1,n_2, n_3): \, n_i \in \mathbb{Z} \}$.
\end{itemize}
\begin{figure}[ht!]
\begin{center}
	\epsfig{file=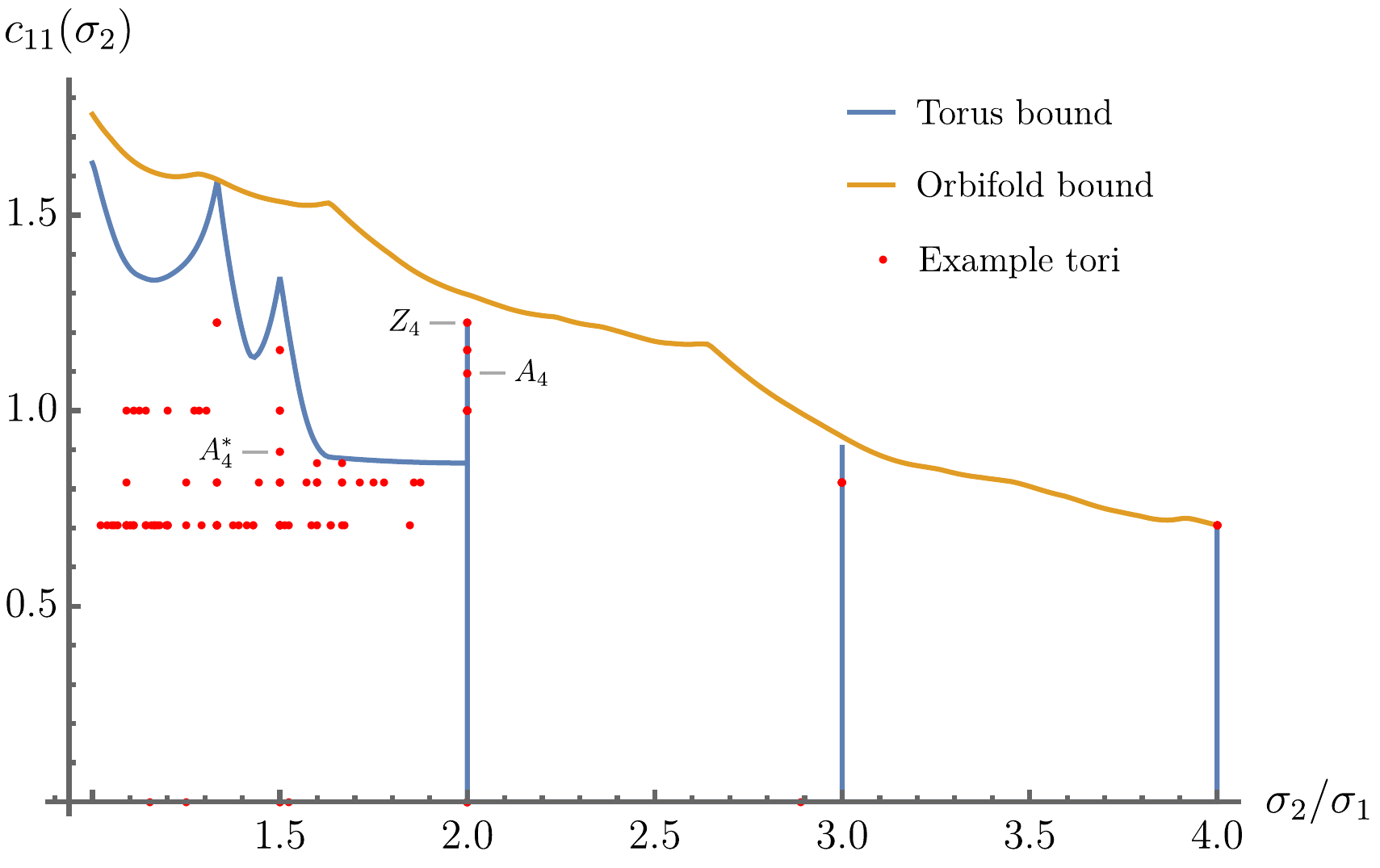,width=12cm}
\end{center}
	\caption{This plot shows the bootstrap bounds on $ c_{11} (\sigma_2)$ for different values of $\sigma_2/\sigma_1$ for flat tori and orbifolds with $d=4$, where we take $\cutoff=50$. The red dots are examples of tori $\mathbb{R}^4/\Lambda^*$, labeled by $\Lambda$.}
\label{fig:c1124d}
\end{figure}

\begin{figure}[ht!]
\begin{center}
	\epsfig{file=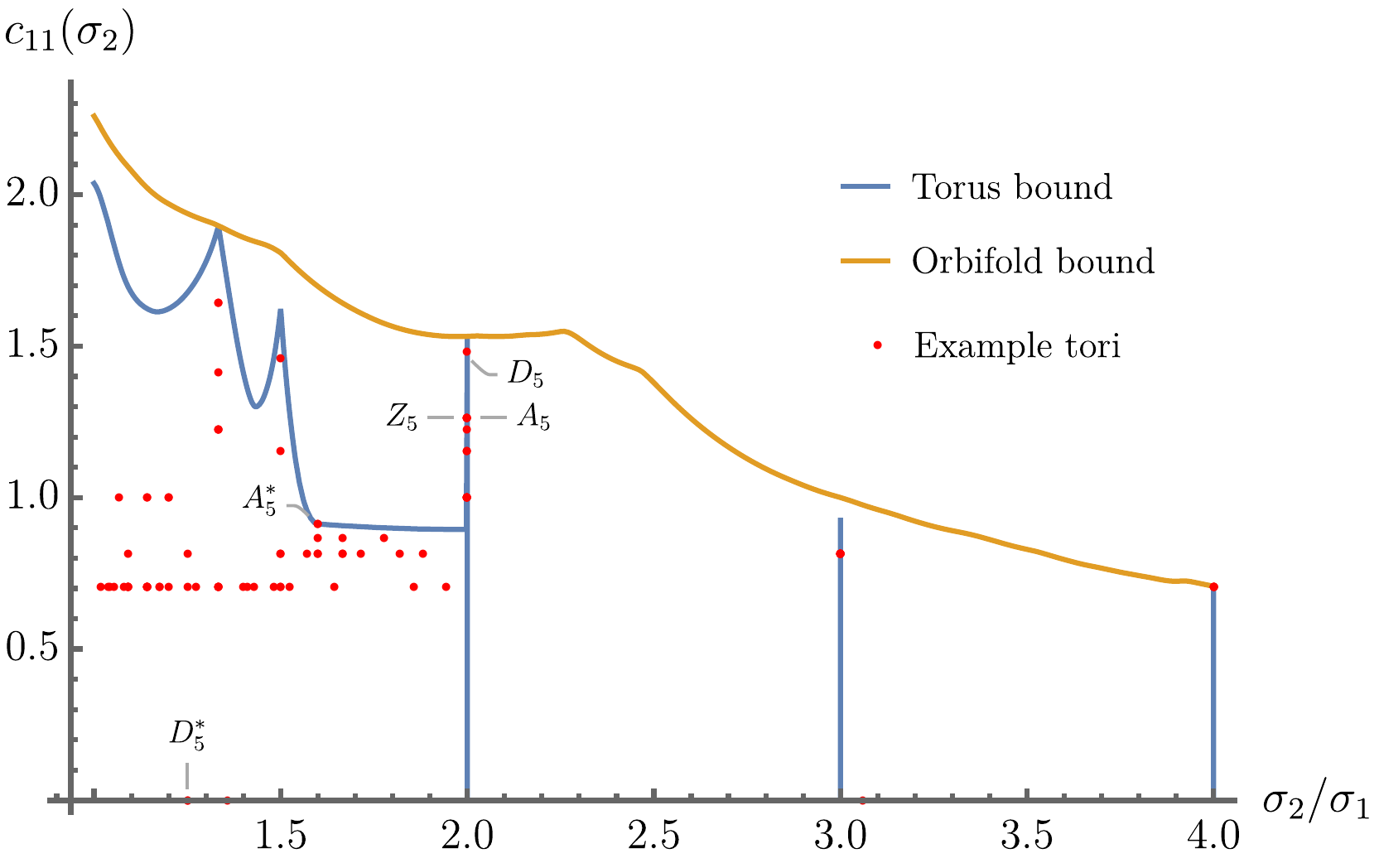,width=12cm}
\end{center}
	\caption{This plot shows the bootstrap bounds on $c_{11} (\sigma_2)$ for different values of $\sigma_2/\sigma_1$ for flat tori and orbifolds with $d=5$, where we take $\cutoff=62$. The red dots are examples of tori $\mathbb{R}^5/\Lambda^*$, labeled by $\Lambda$.}
\label{fig:c1125d}
\end{figure}

We do not include plots of the bounds on ${c}_{11}(\sigma_2)$ for $d>5$, but let us mention some results for $d=8$ and $d=24$. In 8 dimensions, the orbifold upper bound for ${c}_{11}(\sigma_2)$ with $\sigma_2/\sigma_1=2$ is given by 
\be \label{eq:E8_bound_2}
{c}^2_{11}(\sigma_2) \leq \frac{147}{20},  \quad \frac{\sigma_2}{\sigma_1}= 2.
\ee
This is saturated by the $E_8$ lattice\footnote{Here is some numerology: the number of distinct nonzero monomials in the fully expanded quartic form \eqref{eq:cii} for the $E_8$ lattice with $k=2$ is $\frac{240\times126}{2} (1 + (14-2))=196560$, i.e., the kissing number of the Leech lattice.} with $\phi_1$ given by \eqref{eq:phi_1}, since
\be \label{eq:E8_c112}
 \frac{N_2(E_8) \nu_2^2(E_8)}{N^2_1(E_8)}=  \frac{2160 \times 14^2}{240^2}=\frac{147}{20}   .
\ee
In 24 dimensions, the torus upper bound for ${c}_{11}(\sigma_2)$ with $\sigma_2/\sigma_1=3/2$ is given by 
\be \label{eq:Leech_bound_2}
{c}^2_{11}(\sigma_2) \leq \frac{541696}{4095}, \quad \frac{\sigma_2}{\sigma_1}=\frac{3}{2}.
\ee
This is saturated by the Leech lattice with $\phi_1$ given by \eqref{eq:phi_1}, since
\be \label{eq:Leech_c112}
 \frac{N_2(\Lambda_{24}) \nu_2^2(\Lambda_{24})}{N^2_1(\Lambda_{24})}=   \frac{16773120 \times 552^2}{196560^2} =\frac{541696}{4095} .
\ee

\subsubsection{$c_{11}^{(2)}(\sigma_1)$}
The final bootstrap bound we consider, shown in Fig.~\ref{fig:c111rank2}, is a numerical upper bound on  ${ c}^{(2)}_{11} (\sigma_1)$ for tori with $\cutoff = 50$ and $3 \leq d \leq 10$. This quantity comes from the spinning overlaps of the first non-constant scalar eigenfunction with the first non-constant divergence-free traceless rank-$2$ symmetric eigentensors. For these bounds, we maximize the objective function $\vec\alpha\cdot\vec{A}_0$ subject to the following conditions:
\begin{subequations}
\begin{align}
		\vec{\alpha}\cdot\vec{F}_{2}(1) &=1, \\
		\vec{\alpha}\cdot \vec A_s &\geqslant 0, \quad  s=2,4, \dots, 2 \lfloor \cutoff/4 \rfloor, \\
		\vec{\alpha}\cdot\vec{F}_s(x) &\geqslant0, \quad  \forall x\geqslant 1, \quad  s =0, 2, \dots, 2 \lfloor \cutoff/4 \rfloor,
\end{align}
\label{eq:spin2_conditions}
\end{subequations}
which gives the torus bound 
\be
{ c}_{11}^{(2)}\left(\sigma_1\right) \leq \sqrt{-\vec\alpha\cdot\vec{A}_0}.
\ee
We also include in Fig.~\ref{fig:c111rank2} the torus in each dimension with the largest numerically maximized value of ${c}^{(2)}_{11} (\sigma_1)$ that we found. The bound in $d=3$ is ${ c}^{(2)}_{11} (\sigma_1)\leq 1/\sqrt{3}$ and this is saturated by $\mathbb{R}^3/A_3^*$ with, e.g., the following choice for $\phi_1$:
\begin{align}
\phi_1 =&  \frac{2}{\sqrt{6}}\bigg[ \sin \left(\pi  \left(4 x_1-2 \sqrt{2} x_2\right)/\sqrt{3}\right)+\sin
   \left(2 \sqrt{2} \pi  x_3\right)+\cos \left(\sqrt{2} \pi  \left(x_3-\sqrt{3}
   x_2\right)\right)+\cos \left(\sqrt{2} \pi  \left(\sqrt{3} x_2+x_3\right)\right) \nonumber \\
   &-\cos
   \left(\pi/3  \left(4 \sqrt{3} x_1+\sqrt{6} x_2-3 \sqrt{2}
   x_3\right)\right)+\cos \left[ \pi/3  \left(4 \sqrt{3} x_1+\sqrt{6} x_2+3
   \sqrt{2} x_3\right) \right] \bigg].
\end{align}
\begin{figure}[ht!]
	\begin{center}
		\epsfig{file=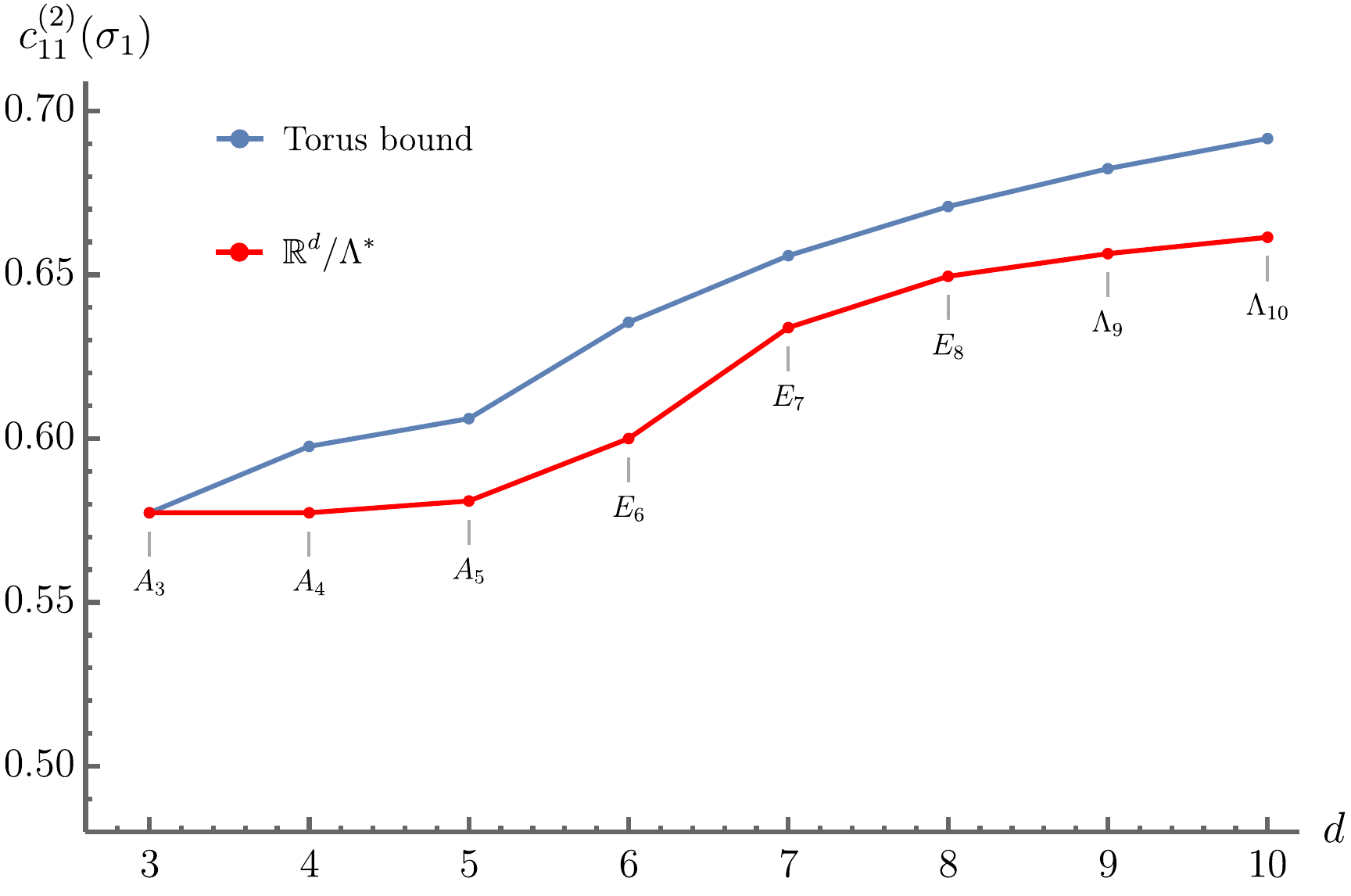, width=11.5cm}
	\end{center}
	\caption{This plot shows the bootstrap upper bounds on ${ c}^{(2)}_{11} (\sigma_1)$ for flat tori with $3 \leq d\leq 10$, taking  $\cutoff=50$. The red dots show $ {c}^{(2)}_{11}(\sigma_1)$ for tori $\mathbb{R}^d/\Lambda^*$, labeled by $\Lambda$.}
	\label{fig:c111rank2}
\end{figure}

\subsection{Exact functionals}

\label{sec:functionals}

Some of the numerical bounds discussed above are nearly saturated by particular tori defined by special Euclidean lattices. In certain dimensions, we can prove that the bounds are saturated at finite $\cutoff$ by finding an explicit functional, which is just a finite-dimensional rational vector. These functionals are thus finite-dimensional analogs
of the sphere packing magic functions. We show that the orbifold bound on $c_{11}(\sigma_1)$ is saturated by $\mathbb{R}^d/\Lambda_d^*$ for  $d=2$ and $d=8$, where $\Lambda_2$ is the $A_2$ lattice and $\Lambda_8$ is the $E_8$ lattice, while the torus bound on $c_{11}(\sigma_1)$ is saturated by $\mathbb{R}^d/\Lambda_d^*$ in $d=4$ and $d=24$, where $\Lambda_4$ is the $D_4$ lattice and $\Lambda_{24}$ is the Leech lattice. In each of these cases, the bound is saturated by taking $\phi_1$ as in \eqref{eq:phi_1}, and therefore Theorem~\ref{thm:main_theorem} follows as a consequence.
We also give functionals for: 1) the $c_{11}(\sigma_2)$ bounds that are saturated by $\mathbb{R}^d/\Lambda_d^*$ for $d=8$ and $d=24$, and 2) the $c^{(2)}_{11}(\sigma_1)$ bound for $d=3$, which is saturated by $\mathbb{R}^d/A_3^*$.

Let us focus on the $c_{11}(\sigma_1)$ bounds and discuss how we can find the exact functionals, which are needed to prove Theorem~\ref{thm:main_theorem}. In order for the inequality \eqref{eq:objineq} to be saturated, all terms on the RHS of \eqref{eq:expanded_sum_rule} must vanish. 
The polynomials $\vec{\alpha} \cdot  \vec{F}_0(x) $ must therefore have double zeros at $x=\sigma_k / \sigma_1$ for $k \geq 2$ whenever $c_{11}(\sigma_k) \neq 0$. 
The polynomials $\vec{\alpha} \cdot  \vec{F}_s(x) $ with even $s>0$ must have zeros at $x=\sigma^{(s)}_k / \sigma_1$ for $k \geq 1$ whenever $c_{11}^{(s)}(\sigma_k^{(s)}) \neq 0$, and these must be double zeros except possibly for $k=1$ when $x_0=1$ in \eqref{eq:tensor_gap}. Similarly, if  ${c}_{11}^{(s)}(0) \neq 0$ for $s>0$, then for saturation we must have $\vec{\alpha}\cdot \vec{A}_s = 0$.

To find an exact functional $\vec{\alpha}$, we find a system of constraints with a unique solution corresponding to a valid functional. 
For the bounds on $c_{11}(\sigma_1)$ for a given $d$ and with $n_\cutoff$ spectral identities, we fix an optimal functional $\vec{\alpha} \in \mathbb{Q}^{n_\cutoff}$ by imposing the following conditions: 
\begin{enumerate}
\item There are double zeros at certain points $x^{(s)}_i$, 
\begin{equation} \label{eq:double_zeros}
		\vec\alpha \cdot \vec F_s \left(x^{(s)}_i \right) =0, \quad 	\frac{d}{d x} \left( \vec\alpha \cdot \vec F_s(x) \right)\bigg|_{x=x^{(s)}_i} = 0.
\end{equation}
\item There are simple zeros $\vec\alpha \cdot \vec F_s \left(1\right) =0$ for certain values of $s$.
\item The polynomials $\vec\alpha\cdot \vec F_s (x)$ vanish for certain values of $s$. 
\item The constants $\vec\alpha\cdot \vec A_s$ vanish for certain values of $s$. 
\item The normalization condition  $\vec{\alpha}\cdot\vec{F}_{0}(1) =1$.
\item The bound \eqref{eq:objineq} is saturated by a particular lattice $\Lambda$, using the lower bound from \eqref{eq:tori_lower_bound_k=1},
\be
\vec\alpha \cdot \vec A_0 =- \frac{ \nu^2_{{1}} (\Lambda)  }{N_1(\Lambda)} .
\ee
\item If the above conditions do not fully fix a functional, then we also impose positivity of $\vec\alpha \cdot \vec F_s(x)$ between the double zeros.
\end{enumerate}
Once we have a candidate functional $\vec{\alpha} \in \mathbb{Q}^{n_\cutoff}$, we can rigorously check the non-negativity conditions in \eqref{eq:numconstralpha} with rational arithmetic using the algorithm given in Appendix B of \cite{Kravchuk:2021akc}. We include these $\vec{\alpha}$ as ancillary files, together with an implementation of the non-negativity checks in the \texttt{Mathematica} notebook \texttt{check\_polynomials.nb}. These verified functionals are sufficient to complete the proof of Theorem \ref{thm:main_theorem}:
\begin{proof}[Proof of Theorem \ref{thm:main_theorem} given exact functionals]
For $d=2,4,8,24$, we have an $\vec{\alpha} \in \mathbb{Q}^{n_\cutoff}$ satisfying conditions \eqref{eq:numconstralpha} with $x_0=1$ (or, more strongly, $x_0=0$) for some $\cutoff$ and with 
\be
\vec{\alpha} \cdot \vec{A}_0 = - \frac{ \nu^2_{{1}} (\Lambda_d)  }{N_1(\Lambda_d)} .
\ee
By Proposition~\ref{prop:c11_bound}, we thus have
\be
\sup_{\Lambda\subset \mathbb{R}^d} \max_{\mathbb{R}^d/\Lambda} c^2_{11}(\sigma_1) \leq \frac{ \nu^2_{{1}} (\Lambda_d)  }{N_1(\Lambda_d)},
\ee
where the supremum is over all full rank $d$-dimensional lattices $\Lambda$. Combining this with Proposition~\ref{prop:lower_bound} with $k=1$ applied to $\mathbb{R}^d/\Lambda^*_d$, we have
\be
\frac{ \nu^2_{{1}} (\Lambda_d)  }{N_1(\Lambda_d)}   \leq  \max_{\mathbb{R}^d/\Lambda^*_d} {c}^2_{11}(\sigma_1) \leq \sup_{\Lambda\subset \mathbb{R}^d} \max_{\mathbb{R}^d/\Lambda} c^2_{11}(\sigma_1) \leq \frac{ \nu^2_{{1}} (\Lambda_d)  }{N_1(\Lambda_d)},
\ee
which implies that the bootstrap bound is saturated for $d=2,4,8,24$, 
\be
\sup_{\Lambda\subset \mathbb{R}^d} \max_{\mathbb{R}^d/\Lambda} c^2_{11}(\sigma_1) = \frac{ \nu^2_{{1}} (\Lambda_d)  }{N_1(\Lambda_d)}.
\ee
Applying Proposition~\ref{prop:lower_bound} again, for any full rank lattice $\Lambda \subset \mathbb{R}^d$ with $d=2,4,8,24$, we have 
\be 
\frac{ \nu^2_{{1}} (\Lambda)  }{N_1(\Lambda)}   \leq  \max_{\mathbb{R}^d/\Lambda^*} {c}^2_{11}(\sigma_1) \leq \frac{ \nu^2_{{1}} (\Lambda_d)  }{N_1(\Lambda_d)},
\ee
which gives Theorem~\ref{thm:main_theorem}.
\end{proof}

We now discuss each of the functionals and their associated bounds.

\subsubsection{Hexagonal lattice}
The bootstrap upper bound on $ c_{11}(\sigma_1)$ for flat orbifolds in $d=2$ is
\be \label{eq:A2_bound}
c_{11}^2(\sigma_1) \leq \frac{2}{3} .
\ee
As shown in  \eqref{A2_example}, this is saturated by $\mathbb{R}^2/A^*_2$ with $\phi_1$ given by \eqref{eq:phi_1}.

To prove \eqref{eq:A2_bound}, we take $\cutoff =10$ and consider $n_{\cutoff} = 5$ spectral identities in 2D. These can be written in the form \eqref{eq:sumrulesFin} with
\begin{align}
\vec{A}_0  =& \left\{-1,-\frac{1}{2},-1,-\frac{5}{8},-1\right\}, \\
\vec{A}_2 =&\{0,1,0,1,0\}, \\
\vec{A}_4  =& \{0,0,0,1,0\}, \\
\vec{F}_0(x)  =&\Bigg\{\frac{(3 x-4)}{4} ,-\frac{x(x-4)}{8} ,\frac{3 x^3-18 x^2+36 x-16}{16},-\frac{x (x-4) \left(x^2-4
   x+8\right)}{32} , \nonumber \\
   & \, \, \frac{3 x^5-30 x^4+120 x^3-240 x^2+240 x-64}{64} \Bigg\},
\end{align}
and $\vec{F}_s=0$ for $s>0$.
This example is simple enough that we can obtain a rational functional by inspecting the \texttt{SDPB} output,
\be
\vec{\alpha} = \left\{-\frac{20}{27},-\frac{28}{9},\frac{4}{27},\frac{32}{9},\frac{16}{27}\right\}.
\ee
This satisfies
\be
\vec{\alpha} \cdot \vec{A}_0  = - \frac{2}{3}, \quad \vec{\alpha} \cdot \vec{A}_2 =  \frac{4}{9},  \quad \vec{\alpha} \cdot \vec{A}_4 =  \frac{32}{9}, \quad \vec{\alpha} \cdot \vec{F}_0(x)=  \frac{ x(x-4)^2 (x-3)^2}{36}, \label{eq:hexagonal_poly}
\ee
so the conditions \eqref{eq:numconstralpha} are satisfied and thus by Proposition~\ref{prop:c11_bound} we have the bound \eqref{eq:A2_bound}. 

We can understand the location of the double zeros of the polynomial $\vec{\alpha} \cdot \vec{F}_0(x)$ from the spectrum and overlaps of the hexagonal torus. If we choose the scaling of the lattice so that the minimal vectors have length squared equal to 2, then $\sigma_1=8 \pi^2$, $\sigma_2 =24 \pi^2$, and $\sigma_3 =32 \pi^2$.
For $\phi_1$ given by \eqref{eq:phi_1},
we have the non-vanishing overlaps
\begin{align}
c_{11}(\sigma_1) & = \sqrt{\frac{2}{3}}, \quad  c_{11}(\sigma_2) = \sqrt{\frac{2}{3}}, \quad  c_{11}(\sigma_3) = \sqrt{\frac{1}{6}}.
\end{align}
The double zeros of $ \vec{\alpha} \cdot \vec{F}_0(x)$ occur at $x=\sigma_k/\sigma_1$ when $x>1$ and the overlap $c_{11}(\sigma_k) $ is non-vanishing, as required for saturation of the bound. Similarly, since $\vec{\alpha} \cdot \vec{A}_2 \neq 0$ and $\vec{\alpha} \cdot \vec{A}_4 \neq 0$ in \eqref{eq:hexagonal_poly}, the zero mode overlaps must vanish for $s=2$ and $s=4$ for the hexagonal lattice to saturate the bound. Using \eqref{eq:2D_projector_contraction}, we have for $s>0$ 
\begin{align}
 \left({ c}_{11}^{(s)}(0)\right)^2  & = \frac{1+2\cos \left( \pi s/3 \right) }{ 2^{s-1} 3}   ,
\end{align}
which indeed vanishes when $s=2$ and $s=4$. This is a consequence of the fact that the minimal vectors of the $A_2$ lattice form a spherical $5$-design.

\subsubsection{$E_8$ lattice}
The orbifold bound on $ c_{11}(\sigma_1)$ for $d=8$ is 
\be \label{eq:E8_bound}
c_{11}^2(\sigma_1) \leq \frac{196}{15} .
\ee
This bound is saturated by the $E_8$ lattice with $\phi_1$ given by \eqref{eq:phi_1}, since we have
\be
 \frac{\nu_1^2(E_8)}{N_1(E_8)}= \frac{56^2 }{240} = \frac{196}{15} .
\ee

To prove \eqref{eq:E8_bound}, we take $\cutoff=22$ and consider $n_{\cutoff} = 26$ spectral identities. We can fix the functional $\vec{\alpha} \in \mathbb{Q}^{26}$ by imposing
\begin{align}
\vec{\alpha} \cdot \vec{A}_2 & = \vec{\alpha} \cdot \vec{A}_6= \vec{\alpha} \cdot \vec{A}_8= \vec{\alpha} \cdot \vec{A}_{10}=0, \label{eq:zero_mode_functional} \\
\vec\alpha\cdot \vec F_6 (x) & =\vec\alpha\cdot \vec F_8 (x)=\vec\alpha\cdot \vec F_{10} (x)=0,
\end{align}
together with the double zeros \eqref{eq:double_zeros} at  $x_1^{(0)}  =2$, $x_2^{(0)}  =3$, $x_3^{(0)} =4$, $x_1^{(2)}  =3$, $x_1^{(4)}  =2$, $x_2^{(4)}  =3$, the normalization condition $\vec{\alpha}\cdot\vec{F}_{0}(1) =1$, and the condition $\vec\alpha \cdot \vec A_0 =-196/15$. 
The resulting vector $\vec \alpha$ is included in the ancillary file \texttt{alpha\_E8.txt}. The nonzero polynomials $\vec{\alpha}\cdot \vec{F}_s (x)$ are
{\small
\begin{align}
\vec\alpha\cdot \vec{F}_0 (x) & = \frac{(x-4)^2 (x-3)^2 (x-2)^2 \left(48621 x^5-31938 x^4+82565 x^3+9696 x^2+816 x-49280\right)}{2177280}, \\
\begin{split}
\vec\alpha\cdot \vec{F}_2 (x) & =\frac{(x-3)^2}{142560} (290873 x^7-2148846 x^6+5538593 x^5-6165780 x^4 \\
& \qquad \qquad \qquad \qquad \qquad \qquad +4066020 x^3-2154016 x^2+960704 x+17920), 
\end{split} \\
\vec\alpha\cdot \vec{F}_4 (x) & =\frac{(x-3)^2 (x-2)^2 \left(28149 x^3-8520 x^2-910 x+1120\right)}{1620}.
\end{align}
}

\noindent We have $\vec\alpha\cdot \vec{F}_0 (x) \geq 0$  for $x \geq 1$ and $\vec\alpha\cdot \vec{F}_s (x) \geq 0$ for $s=2,4$ and $x \geq 0$, which can be checked rigorously using the algorithm from Appendix B of \cite{Kravchuk:2021akc} after factoring out the double roots. We also have $\vec{\alpha} \cdot \vec{A}_4=224/9 >0$. Thus all of the conditions in \eqref{eq:numconstralpha} with $x_0=0$ are satisfied and the orbifold bound \eqref{eq:E8_bound} follows by Proposition~\ref{prop:c11_bound}.

The positions of the double zeros can be understood from the nonzero values of $c^{(s)}_{11}(\sigma_k)$ for the $E_8$ lattice with $\phi_1$ given by \eqref{eq:phi_1}. We have
\begin{align}
 c_{11}(\sigma_1) & = \frac{14}{\sqrt{15}}, \quad c_{11}(\sigma_2) = \frac{7}{2} \sqrt{\frac{3}{5}}, \quad c_{11}(\sigma_3) = \sqrt{\frac{7}{15}}, \quad c_{11}(\sigma_4) = \frac{1}{4 \sqrt{15}}, \\
c_{11}^{(2)}(\sigma_1 ) & =c_{11}^{(2)}(\sigma_2) =c_{11}^{(2)}(\sigma_4) =0, \quad c_{11}^{(2)}(\sigma_3) = \frac{1}{2\sqrt{10}}, \\
c_{11}^{(4)}(\sigma_1 ) & =c_{11}^{(4)}(\sigma_4) =0, \quad c_{11}^{(4)}(\sigma_2) = \frac14 \sqrt{\frac7{10}}, \quad c_{11}^{(4)}(\sigma_3) = \frac1{12} \sqrt{\frac7{55}}, 
\end{align}
where $\sigma_1=8 \pi^2$, $\sigma_2 =16 \pi^2$,  $\sigma_3 =24 \pi^2$,  and $\sigma_4 =32 \pi^2$. We see that $\vec{\alpha}\cdot \vec{F}_s (\sigma_k/\sigma_1)$ vanishes when the overlap $c^{(s)}_{11}(\sigma_k)$ is nonzero (excluding the normalization condition), as is required for saturation.

Using \eqref{eq:zeromode_overlap}, the zero-mode overlaps for the $E_8$ lattice with $\phi_1$ given by \eqref{eq:phi_1} are given by
\begin{align}
\left( { c}_{11}^{(s)}(0)\right)^2 & =  \frac{ s!}{2^s   120  \left(   3 \right)_s }    \left[ C_s^{\left(  3 \right)} \! \left(1 \right)+56 C_s^{\left(  3 \right) \! }\left(1/2 \right)+63 C_s^{\left(  3 \right)}\!\left(0\right) \right],
\end{align}
where we recall that $C^{(3)}_s(x)$ is a Gegenbauer polynomial. These zero-mode overlaps vanish for $s\in \{ 2,4,6,10\}$, which is a consequence of the fact that the minimal vectors of the $E_8$ lattice form a spherical $7 \frac{1}{2}$-design. The spinning zero-mode overlaps thus vanish if there are no nonzero cusp forms for $SL(2, \mathbb{Z})$ with weight $4+s$. 

To prove the bound on $c_{11}(\sigma_2)$ in \eqref{eq:E8_bound_2}, which is also saturated by the $E_8$ lattice \eqref{eq:E8_c112}, we take $\cutoff=18$ and consider $n_{\cutoff} = 18$ spectral identities. We can fix a functional using the same approach as for $c_{11}(\sigma_1)$. The resulting $\vec{\alpha}\in \mathbb{Q}^{18}$ is given in the ancillary file \texttt{alpha\_E8\_2.txt}. With this $\vec{\alpha}$, the conditions \eqref{eq:c112_conditions} are satisfied with $x^*=2$ and $\vec{\alpha}\cdot \vec{A}_0=-147/20$, which proves \eqref{eq:E8_bound_2}.

\subsubsection{Leech lattice}
The torus bound on $ c_{11}(\sigma_1)$ in $d=24$ is
\be \label{eq:Leech_bound}
c_{11}^2(\sigma_1) \leq \frac{264500}{2457} .
\ee
This bound is saturated by the Leech lattice $\Lambda_{24}$ with $\phi_1$ given by \eqref{eq:phi_1}, since we have
\be
 \frac{\nu_1^2(\Lambda_{24})}{N_1(\Lambda_{24})}=\frac{4600^2}{196560}= \frac{264500}{2457}  .
\ee
For the Leech lattice, the ratios $\sigma_k/\sigma_1$ take the values $1, 3/2, 2, 5/2, \dots$. To prove \eqref{eq:Leech_bound}, we consider $82$ spectral identities with $\cutoff=42$.\footnote{Note that $n_{\cutoff} = 84$ for $\cutoff=42$. We use the first 82 spectral identities given in the ancillary files. We can pad $\vec{\alpha}$ with zeros to get an $\vec{\alpha} \in \mathbb{Q}^{84}$ complying with the conditions of Proposition~\ref{prop:c11_bound}.} 
We can fix a functional $\vec{\alpha} \in \mathbb{Q}^{82}$ by imposing
\begin{align}
\vec{\alpha} \cdot \vec{A}_s & =0, \quad s \in \{2, 6, 8, 10, 12, 14, 16, 18, 20\} ,  \label{eq:zero_mode_functional_Leech} \\
\vec\alpha\cdot \vec F_{s} (x) & =0, \quad s \in \{12, 14, 16, 18, 20\},
\end{align}
together with double zeros \eqref{eq:double_zeros} at 
$ x_i^{(0)}  \in \{ 3/2, 2, 5/2, 3, 4 \}$,  $x_i^{(2)}  \in \{ 5/2, 3 \}$,  $x_i^{(4)}  \in \{ 2, 5/2, 3 \}$,  $x_i^{(6)}  \in \{ 3/2, 2, 5/2, 3\}$,  $x_i^{(8)}  \in \{ 3/2, 2, 5/2, 3 \}$, and  $x_i^{(10)}  \in \{ 3/2, 2, 5/2, 3 \}$, 
the simple zeros $\vec\alpha \cdot \vec F_s \left(1\right) =0$ for $s \in \{4,8,10 \}$, the normalization condition $\vec{\alpha}\cdot\vec{F}_{0}(1) =1$, and the condition $\vec\alpha \cdot \vec A_0 =-264500/2457$.  
The resulting $\vec \alpha$ is included in the ancillary file \texttt{alpha\_Leech.txt}. 
The nonzero polynomials $\vec{\alpha}\cdot \vec{F}_s (x)$ for $s \in \{ 0, 2, 4, 6, 8, 10\}$ can be rigorously shown to be non-negative for $x \geq 1$ using rational arithmetic with the algorithm from Appendix B of \cite{Kravchuk:2021akc} after factoring out the double roots, and we also have $\vec{\alpha} \cdot \vec{A}_4 =3680/27>0$. Thus all of the conditions in \eqref{eq:numconstralpha} are satisfied with $x_0=1$ and the torus bound follows by Proposition~\ref{prop:c11_bound}.

Using \eqref{eq:zeromode_overlap}, the zero-mode overlaps for the Leech lattice with external state \eqref{eq:phi_1} are given by
\begin{align} 
\left({c}_{11}^{(s)}(0)\right)^2  & =  \frac{ 2^{-s} s!}{ 98280 \left(   11 \right)_s}  \left[ C_s^{\left( 11 \right)} \!\left(  1\right)+4600\, C_s^{\left( 11 \right)} \! \left( 1/2 \right)+47104 \, C_s^{\left(11 \right)} \! \left( 1/4 \right) +46575 C_s^{\left( 11 \right)} \! \left( 0 \right) \right].
\end{align}
These overlaps vanish for $s\in \{2, 4, 6, 8, 10, 14\}$, which is a consequence of the fact that the 196560 minimal vectors form a spherical $11 \frac{1}{2}$-design. The spinning zero-mode overlaps thus vanish if there are no nonzero cusp forms for $SL(2, \mathbb{Z})$ with weight $12+s$ that vanish with order at least $2$ at the cusp.

To prove the bound on $c_{11}(\sigma_2)$ in \eqref{eq:Leech_bound_2}, which is saturated by the Leech lattice \eqref{eq:Leech_c112}, we again take $\cutoff=42$ and consider $82$ spectral identities. To fix the functional, we needed to impose positivity at intermediate points between the double zeros. The resulting $\vec{\alpha}\in \mathbb{Q}^{82}$ is given in the ancillary file \texttt{alpha\_Leech\_2.txt}. With this $\vec{\alpha}$, we have
\be
\begin{split}
		\vec{\alpha}\cdot\vec{F}_{0}(3/2) &=1, \quad  \vec{\alpha}\cdot \vec A_0  =-\frac{541696}{4095}, \quad \vec{\alpha}\cdot \vec A_s \geqslant0, \quad  s=2,4, \dots, 20, \\
		\vec{\alpha}\cdot\vec{F}_s(x) &\geqslant 0, \quad  \forall x\geq 1, \quad  s =0, 2, 4, \dots, 20,
\end{split}
\ee
which proves \eqref{eq:Leech_bound_2} for tori.

\subsubsection{$D_4$ lattice}

The torus bound on $ c_{11}(\sigma_1)$ for tori in $d=4$ is
\be \label{eq:D4_bound}
c_{11}^2(\sigma_1) \leq \frac{8}{3} .
\ee
This bound is saturated by $\mathbb{R}^4/{\Lambda_4^*}$, where the laminated lattice $\Lambda_4$ is the $D_4 \cong D_4^*$ lattice, with $\phi_1$ given by \eqref{eq:phi_1}, since we have
\be \label{eq:D4_value}
\frac{\nu_1^2(D_{4})}{N_1(D_{4})}= \frac{8^2}{24} = \frac{8}{3}  .
\ee 

To prove \eqref{eq:D4_bound}, we take $\cutoff=30$ and consider $n_{\cutoff}=45$ spectral identities. To completely fix a functional, we needed to impose positivity at intermediate points between the double zeros. The resulting $\vec{\alpha} \in \mathbb{Q}^{45}$ is given in the ancillary file \texttt{alpha\_D4.txt}. Using this $\vec{\alpha}$, we can rigorously verify that the conditions in \eqref{eq:numconstralpha} are satisfied with $x_0=1$ and $\vec\alpha\cdot\vec{A}_0=-8/3$, which implies \eqref{eq:D4_bound}.

\subsubsection{$A_3$ lattice}

To prove the torus bound ${ c}^{(2)}_{11} (\sigma_1)\leq 1/\sqrt{3}$ in $d=3$, which is saturated by $\mathbb{R}^3/A_3^*$, we take $\cutoff= 26$ and consider $n_{\cutoff}=35$ spectral identities. In the ancillary file \texttt{alpha\_A3.txt}, we give a vector $\vec{\alpha} \in \mathbb{Q}^{35}$ that satisfies the conditions in \eqref{eq:spin2_conditions} with $\vec\alpha\cdot\vec{A}_0= -1/3$, which proves the bound.

\section{Conclusions}\label{sec:conclusion}

In this paper, we have used bootstrap methods to place upper bounds on the integrals of products of eigenfunctions of the Laplace--Beltrami operator on flat manifolds and orbifolds. We compared these numerical bounds to examples and constructed exact functionals to prove that some bounds are saturated by tori defined by special Euclidean lattices. 
As a consequence, we showed that the $A_2$, $D_4$, $E_8$, and Leech lattices have the largest values of $\nu_1^2(\Lambda)/N_1(\Lambda)$ amongst lattices $\Lambda$ in their respective dimensions, where $\nu_1(\Lambda)$ is the root mean square of the number of minimal vectors that are minimal distance from each minimal vector and $N_1(\Lambda)$ is the kissing number of $\Lambda$. 

We could find exact functionals with finitely many spectral identities because there are only finitely many non-vanishing overlaps for a given external state.  This is similar to other integrable examples in the bootstrap literature, such as conformal minimal models in two dimensions \cite{Belavin:1984vu} and integrable $S$-matrix theories in $(1+1)$ dimensions \cite{Zamolodchikov:1977nu,Paulos:2016but}. Minimal models are a subclass of rational CFTs, and thus their fusion rules have finitely many elements. Similarly, the $n$-particle amplitude for integrable $S$-matrix theories factorizes into a finite number of $2\to2$ scattering processes.

The lattices that saturate the bootstrap bounds also occur as optimal examples in other lattice optimization problems. What happens in higher dimensions? Natural candidates for saturating the bootstrap bounds would be even self-dual lattices with the largest minimal norm, which are called extremal. Following \cite{ElkiesLectureNotes}, extremal  even self-dual  lattices in $d=48$ have minimal norm 6 and
\be
 \frac{\nu_1^2(\Lambda)}{N_1(\Lambda)}=\frac{36848^2}{52416000} \approx 25.9,
\ee
whereas the bootstrap bound for $d=48$ at $\cutoff = 100$ only gives $c^2_{11}(\sigma_1) < 676$. 

There are several natural future directions that could be explored, such as generalizing the spectral identities to mixed correlators and considering a wider class of orbifold examples to find more saturating crystallographic groups. Other possible directions are adapting to flat orbifolds the representation theoretic derivation of the spectral identities \cite{Kravchuk:2021akc} and Adve's converse theorem \cite{Adve:2025sld}. 
It would also be interesting to see whether other problems can be usefully reformulated as bootstrap problems.  For example, the shell of the $E_8$ lattice of norm $2m$ is a spherical $8$-design if and only if $\tau(m)=0$, where $\tau(m)$ is Ramanujan's $\tau$-function \cite{Pache_designs}. There is a famous conjecture by Lehmer that $\tau(m) \neq 0$ for all $m \geq 1$. This conjecture is equivalent to the statement that, for every $m \geq 1$, we have $c_{ii}^{(8)}(0) \neq 0$ when $\phi_i$ is a uniform sum of cosines from the $m$\textsuperscript{th} nontrivial scalar eigenspace of $\mathbb{R}^8/E_8$. Another optimistic target is trying to combine the spectral identities with other constraints to find improved upper bounds on the Hermite constant in dimensions $10\leq d \leq 23$ and $d\geq 25$ (see \cite{9D_Hermite} for recent progress in $d=9$).

\paragraph{Acknowledgements:}

We would like to thank Anshul Adve, Noah Knutson, Jeremy Mann, Dalimil Maz\'{a}\v{c}, and Sridip Pal for helpful discussions. 
FB thanks the Yukawa Institute for Theoretical Physics and the organizers of the workshop ``Progress on Theoretical Bootstrap'' for hospitality during the completion of this work.

\renewcommand{\em}{}
\bibliographystyle{utphys}
\addcontentsline{toc}{section}{References}
\bibliography{hyperbolic-refs}

\providecommand{\href}[2]{#2}\begingroup\raggedright\begin{thebibliography}{10}

\bibitem{Belavin:1984vu}
A.~A. Belavin, A.~M. Polyakov, and A.~B. Zamolodchikov, ``{Infinite Conformal
  Symmetry in Two-Dimensional Quantum Field Theory},''
  \href{http://dx.doi.org/10.1016/0550-3213(84)90052-X}{{\em Nucl. Phys. B}
  {\bf 241} (1984)  333--380}.

\bibitem{Rattazzi:2008pe}
R.~Rattazzi, V.~S. Rychkov, E.~Tonni, and A.~Vichi, ``{Bounding scalar operator
  dimensions in 4D CFT},''
  \href{http://dx.doi.org/10.1088/1126-6708/2008/12/031}{{\em JHEP} {\bf 12}
  (2008)  031},
\href{http://arxiv.org/abs/0807.0004}{{\tt arXiv:0807.0004 [hep-th]}}.

\bibitem{Rychkov:2009ij}
V.~S. Rychkov and A.~Vichi, ``{Universal Constraints on Conformal Operator
  Dimensions},'' \href{http://dx.doi.org/10.1103/PhysRevD.80.045006}{{\em Phys.
  Rev. D} {\bf 80} (2009)  045006}, \href{http://arxiv.org/abs/0905.2211}{{\tt
  arXiv:0905.2211 [hep-th]}}.

\bibitem{Caracciolo:2009bx}
F.~Caracciolo and V.~S. Rychkov, ``{Rigorous Limits on the Interaction Strength
  in Quantum Field Theory},''
  \href{http://dx.doi.org/10.1103/PhysRevD.81.085037}{{\em Phys. Rev.} {\bf
  D81} (2010)  085037},
\href{http://arxiv.org/abs/0912.2726}{{\tt arXiv:0912.2726 [hep-th]}}.

\bibitem{Poland:2011ey}
D.~Poland, D.~Simmons-Duffin, and A.~Vichi, ``{Carving Out the Space of 4D
  CFTs},'' \href{http://dx.doi.org/10.1007/JHEP05(2012)110}{{\em JHEP} {\bf 05}
  (2012)  110},
\href{http://arxiv.org/abs/1109.5176}{{\tt arXiv:1109.5176 [hep-th]}}.

\bibitem{Poland:2018epd}
D.~Poland, S.~Rychkov, and A.~Vichi, ``{The Conformal Bootstrap: Theory,
  Numerical Techniques, and Applications},''
  \href{http://dx.doi.org/10.1103/RevModPhys.91.015002}{{\em Rev. Mod. Phys.}
  {\bf 91} (2019)  015002}, \href{http://arxiv.org/abs/1805.04405}{{\tt
  arXiv:1805.04405 [hep-th]}}.

\bibitem{ElShowk:2012ht}
S.~El-Showk, M.~F. Paulos, D.~Poland, S.~Rychkov, D.~Simmons-Duffin, and
  A.~Vichi, ``{Solving the 3D Ising Model with the Conformal Bootstrap},''
  \href{http://dx.doi.org/10.1103/PhysRevD.86.025022}{{\em Phys. Rev. D} {\bf
  86} (2012)  025022}, \href{http://arxiv.org/abs/1203.6064}{{\tt
  arXiv:1203.6064 [hep-th]}}.

\bibitem{El-Showk:2014dwa}
S.~El-Showk, M.~F. Paulos, D.~Poland, S.~Rychkov, D.~Simmons-Duffin, and
  A.~Vichi, ``{Solving the 3d Ising Model with the Conformal Bootstrap II.
  c-Minimization and Precise Critical Exponents},''
  \href{http://dx.doi.org/10.1007/s10955-014-1042-7}{{\em J. Stat. Phys.} {\bf
  157} (2014)  869}, \href{http://arxiv.org/abs/1403.4545}{{\tt arXiv:1403.4545
  [hep-th]}}.

\bibitem{Kos:2014bka}
F.~Kos, D.~Poland, and D.~Simmons-Duffin, ``{Bootstrapping Mixed Correlators in
  the 3D Ising Model},'' \href{http://dx.doi.org/10.1007/JHEP11(2014)109}{{\em
  JHEP} {\bf 11} (2014)  109},
\href{http://arxiv.org/abs/1406.4858}{{\tt arXiv:1406.4858 [hep-th]}}.

\bibitem{Kos:2016ysd}
F.~Kos, D.~Poland, D.~Simmons-Duffin, and A.~Vichi, ``{Precision Islands in the
  Ising and $O(N)$ Models},''
  \href{http://dx.doi.org/10.1007/JHEP08(2016)036}{{\em JHEP} {\bf 08} (2016)
  036}, \href{http://arxiv.org/abs/1603.04436}{{\tt arXiv:1603.04436
  [hep-th]}}.

\bibitem{Chang:2024whx}
C.-H. Chang, V.~Dommes, R.~S. Erramilli, A.~Homrich, P.~Kravchuk, A.~Liu, M.~S.
  Mitchell, D.~Poland, and D.~Simmons-Duffin, ``{Bootstrapping the 3d Ising
  stress tensor},'' \href{http://dx.doi.org/10.1007/JHEP03(2025)136}{{\em JHEP}
  {\bf 03} (2025)  136}, \href{http://arxiv.org/abs/2411.15300}{{\tt
  arXiv:2411.15300 [hep-th]}}.

\bibitem{Bonifacio:2020xoc}
J.~Bonifacio and K.~Hinterbichler, ``{Bootstrap Bounds on Closed Einstein
  Manifolds},'' \href{http://dx.doi.org/10.1007/JHEP10(2020)069}{{\em JHEP}
  {\bf 10} (2020)  069}, \href{http://arxiv.org/abs/2007.10337}{{\tt
  arXiv:2007.10337 [hep-th]}}.

\bibitem{Bonifacio:2021msa}
J.~Bonifacio, ``{Bootstrap Bounds on Closed Hyperbolic Manifolds},''
  \href{http://dx.doi.org/10.1007/JHEP02(2022)025}{{\em JHEP} {\bf 02} (2022)
  025}, \href{http://arxiv.org/abs/2107.09674}{{\tt arXiv:2107.09674
  [hep-th]}}.

\bibitem{Bonifacio:2021aqf}
J.~Bonifacio, ``{Bootstrapping closed hyperbolic surfaces},''
  \href{http://dx.doi.org/10.1007/JHEP03(2022)093}{{\em JHEP} {\bf 03} (2022)
  093}, \href{http://arxiv.org/abs/2111.13215}{{\tt arXiv:2111.13215
  [hep-th]}}.

\bibitem{Kravchuk:2021akc}
P.~Kravchuk, D.~Maz\'a\v{c}, and S.~Pal, ``{Automorphic spectra and the
  conformal bootstrap},'' \href{http://dx.doi.org/10.1090/cams/26}{{\em Commun.
  Am. Math. Soc.} {\bf 4} (2024) no.~1, 1--63},
  \href{http://arxiv.org/abs/2111.12716}{{\tt arXiv:2111.12716 [hep-th]}}.

\bibitem{Radcliffe:2024jcg}
A.~Radcliffe, ``{Non-saturation of bootstrap bounds by hyperbolic orbifolds},''
  \href{http://dx.doi.org/10.1007/JHEP12(2025)115}{{\em JHEP} {\bf 12} (2025)
  115}, \href{http://arxiv.org/abs/2404.14479}{{\tt arXiv:2404.14479
  [hep-th]}}.

\bibitem{Bonifacio:2023ban}
J.~Bonifacio, D.~Maz\'a\v{c}, and S.~Pal, ``{Spectral Bounds on Hyperbolic
  3-Manifolds: Associativity and the Trace Formula},''
  \href{http://dx.doi.org/10.1007/s00220-024-05222-0}{{\em Commun. Math. Phys.}
  {\bf 406} (2025)  51}, \href{http://arxiv.org/abs/2308.11174}{{\tt
  arXiv:2308.11174 [math.SP]}}.

\bibitem{Gesteau:2023brw}
E.~Gesteau, S.~Pal, D.~Simmons-Duffin, and Y.~Xu, ``{Bounds on spectral gaps of
  Hyperbolic spin surfaces},''
  \href{http://dx.doi.org/10.56994/JAMR.003.001.003}{{\em J. Assoc. Math. Res.}
  {\bf 3} (2025) no.~1, 72--139}, \href{http://arxiv.org/abs/2311.13330}{{\tt
  arXiv:2311.13330 [math.SP]}}.

\bibitem{Adve:2025rvf}
A.~Adve, J.~Bonifacio, P.~Kravchuk, D.~Mazac, S.~Pal, A.~Radcliffe, and
  G.~Rogelberg, ``{Weyl bound for trilinear periods via conformal bootstrap},''
  \href{http://arxiv.org/abs/2508.20576}{{\tt arXiv:2508.20576 [math.NT]}}.

\bibitem{Adve:2025sld}
A.~Adve, ``{A converse theorem for hyperbolic surface spectra and the conformal
  bootstrap},'' \href{http://arxiv.org/abs/2509.17935}{{\tt arXiv:2509.17935
  [math.SP]}}.

\bibitem{monk2026spectral}
L.~Monk and F.~Naud, ``Spectral gaps on large hyperbolic surfaces,''
  \href{http://arxiv.org/abs/2601.13988}{{\tt arXiv:2601.13988 [math.SP]}}.

\bibitem{bourque2019kissing}
M.~F. Bourque and B.~Petri, ``Kissing numbers of closed hyperbolic manifolds,''
  {\em American Journal of Mathematics} {\bf 144} (2022) no.~4, 1067--1085,
  \href{http://arxiv.org/abs/1905.11083}{{\tt arXiv:1905.11083 [math.GT]}}.

\bibitem{Bourque:2023woe}
M.~F. Bourque and B.~Petri, ``{Linear programming bounds for hyperbolic
  surfaces},'' \href{http://arxiv.org/abs/2302.02540}{{\tt arXiv:2302.02540
  [math.GT]}}.

\bibitem{cohn2003new}
H.~Cohn and N.~Elkies, ``{New upper bounds on sphere packings I},'' {\em Annals
  of mathematics} (2003)  689--714,
  \href{http://arxiv.org/abs/math/0110009}{{\tt arXiv:math/0110009 [math.MG]}}.

\bibitem{MR314545}
P.~Delsarte, ``Bounds for unrestricted codes, by linear programming,'' {\em
  Philips Res. Rep.} {\bf 27} (1972)  272--289.

\bibitem{Blichfeldt1935}
H.~Blichfeldt, ``The minimum values of positive quadratic forms in six, seven
  and eight variables,'' {\em Mathematische Zeitschrift} {\bf 39} (1935)
  1--15.

\bibitem{MR2600869}
H.~Cohn and A.~Kumar, ``Optimality and uniqueness of the {L}eech lattice among
  lattices,'' \href{http://dx.doi.org/10.4007/annals.2009.170.1003}{{\em Ann.
  of Math. (2)} {\bf 170} (2009) no.~3, 1003--1050}.

\bibitem{viazovska2017sphere}
M.~S. Viazovska, ``{The sphere packing problem in dimension 8},''
  \href{http://dx.doi.org/10.4007/annals.2017.185.3.7}{{\em Annals of
  mathematics} (2017)  991--1015}, \href{http://arxiv.org/abs/1603.04246}{{\tt
  arXiv:1603.04246 [math.NT]}}.

\bibitem{cohn2017sphere}
H.~Cohn, A.~Kumar, S.~Miller, D.~Radchenko, and M.~Viazovska, ``{The sphere
  packing problem in dimension 24},''
  \href{http://dx.doi.org/10.4007/annals.2017.185.3.8}{{\em Annals of
  mathematics} {\bf 185} (2017) no.~3, 1017--1033},
  \href{http://arxiv.org/abs/1603.06518}{{\tt arXiv:1603.06518 [math.NT]}}.

\bibitem{Conway:1988oqe}
J.~H. Conway and N.~J.~A. Sloane,
  \href{http://dx.doi.org/10.1007/978-1-4757-6568-7}{{\em {Sphere Packings,
  Lattices and Groups}}}.
\newblock Springer, 2013.

\bibitem{Hartman:2019pcd}
T.~Hartman, D.~Maz\'a\v{c}, and L.~Rastelli, ``{Sphere Packing and Quantum
  Gravity},'' \href{http://dx.doi.org/10.1007/JHEP12(2019)048}{{\em JHEP} {\bf
  12} (2019)  048}, \href{http://arxiv.org/abs/1905.01319}{{\tt
  arXiv:1905.01319 [hep-th]}}.

\bibitem{Afkhami-Jeddi:2020hde}
N.~Afkhami-Jeddi, H.~Cohn, T.~Hartman, D.~de~Laat, and A.~Tajdini,
  ``{High-dimensional sphere packing and the modular bootstrap},''
  \href{http://dx.doi.org/10.1007/JHEP12(2020)066}{{\em JHEP} {\bf 12} (2020)
  066}, \href{http://arxiv.org/abs/2006.02560}{{\tt arXiv:2006.02560
  [hep-th]}}.

\bibitem{Mazac:2016qev}
D.~Maz\'a\v{c}, ``{Analytic bounds and emergence of AdS$_{2}$ physics from the
  conformal bootstrap},'' \href{http://dx.doi.org/10.1007/JHEP04(2017)146}{{\em
  JHEP} {\bf 04} (2017)  146}, \href{http://arxiv.org/abs/1611.10060}{{\tt
  arXiv:1611.10060 [hep-th]}}.

\bibitem{Dolan:1989vr}
L.~Dolan, P.~Goddard, and P.~Montague, ``{Conformal Field Theory of Twisted
  Vertex Operators},''
  \href{http://dx.doi.org/10.1016/0550-3213(90)90644-S}{{\em Nucl. Phys. B}
  {\bf 338} (1990)  529--601}.

\bibitem{Dolan:1989kf}
L.~Dolan, P.~Goddard, and P.~Montague, ``{Conformal Field Theory, Triality and
  the Monster Group},''
  \href{http://dx.doi.org/10.1016/0370-2693(90)90821-M}{{\em Phys. Lett. B}
  {\bf 236} (1990)  165--172}.

\bibitem{Dolan:1994st}
L.~Dolan, P.~Goddard, and P.~Montague, ``{Conformal field theories,
  representations and lattice constructions},''
  \href{http://dx.doi.org/10.1007/BF02103716}{{\em Commun. Math. Phys.} {\bf
  179} (1996)  61--120}.

\bibitem{Frenkel:1988xz}
I.~Frenkel, J.~Lepowsky, and A.~Meurman, {\em Vertex operator algebras and the
  Monster}, vol.~134 of {\em Pure and Applied Mathematics}.
\newblock Academic Press, United States, 1988.

\bibitem{Dymarsky:2020qom}
A.~Dymarsky and A.~Shapere, ``{Quantum stabilizer codes, lattices, and CFTs},''
  \href{http://dx.doi.org/10.1007/JHEP03(2021)160}{{\em JHEP} {\bf 21} (2020)
  160}, \href{http://arxiv.org/abs/2009.01244}{{\tt arXiv:2009.01244
  [hep-th]}}.

\bibitem{Henriksson:2021qkt}
J.~Henriksson, A.~Kakkar, and B.~McPeak, ``{Classical codes and chiral CFTs at
  higher genus},'' \href{http://dx.doi.org/10.1007/JHEP05(2022)159}{{\em JHEP}
  {\bf 05} (2022)  159}, \href{http://arxiv.org/abs/2112.05168}{{\tt
  arXiv:2112.05168 [hep-th]}}.

\bibitem{Sarnak94}
P.~Sarnak, ``{Integrals of products of eigenfunctions},''
  \href{http://dx.doi.org/10.1155/S1073792894000280}{{\em International
  Mathematics Research Notices} {\bf 1994} (1994) no.~6, 251--260}.

\bibitem{bernstein2010subconvexity}
J.~Bernstein and A.~Reznikov, ``Subconvexity bounds for triple l-functions and
  representation theory,'' {\em Annals of mathematics} (2010)  1679--1718.

\bibitem{LU20193271}
J.~Lu, C.~D. Sogge, and S.~Steinerberger, ``Approximating pointwise products of
  laplacian eigenfunctions,''
  \href{http://dx.doi.org/https://doi.org/10.1016/j.jfa.2019.05.025}{{\em
  Journal of Functional Analysis} {\bf 277} (2019) no.~9, 3271--3282}.

\bibitem{WYMAN2022109404}
E.~L. Wyman, ``Triangles and triple products of laplace eigenfunctions,''
  \href{http://dx.doi.org/https://doi.org/10.1016/j.jfa.2022.109404}{{\em
  Journal of Functional Analysis} {\bf 282} (2022) no.~8, 109404},
  \href{http://arxiv.org/abs/2103.03336}{{\tt arXiv:2103.03336 [math.AP]}}.

\bibitem{Hinterbichler:2013kwa}
K.~Hinterbichler, J.~Levin, and C.~Zukowski, ``{Kaluza-Klein Towers on General
  Manifolds},'' \href{http://dx.doi.org/10.1103/PhysRevD.89.086007}{{\em Phys.
  Rev. D} {\bf 89} (2014) no.~8, 086007},
  \href{http://arxiv.org/abs/1310.6353}{{\tt arXiv:1310.6353 [hep-th]}}.

\bibitem{Bonifacio:2019ioc}
J.~Bonifacio and K.~Hinterbichler, ``{Unitarization from Geometry},''
  \href{http://dx.doi.org/10.1007/JHEP12(2019)165}{{\em JHEP} {\bf 12} (2019)
  165}, \href{http://arxiv.org/abs/1910.04767}{{\tt arXiv:1910.04767
  [hep-th]}}.

\bibitem{SekharChivukula:2019qih}
R.~Sekhar~Chivukula, D.~Foren, K.~A. Mohan, D.~Sengupta, and E.~H. Simmons,
  ``{Sum Rules for Massive Spin-2 Kaluza-Klein Elastic Scattering
  Amplitudes},'' \href{http://dx.doi.org/10.1103/PhysRevD.100.115033}{{\em
  Phys. Rev. D} {\bf 100} (2019) no.~11, 115033},
  \href{http://arxiv.org/abs/1910.06159}{{\tt arXiv:1910.06159 [hep-ph]}}.

\bibitem{LatticeCatalog}
G.~Nebe and N.~Sloane, ``A catalogue of lattices.''
\newblock \url{https://www.math.rwth-aachen.de/~Gabriele.Nebe/LATTICES/}.

\bibitem{Musin_D4_Kissing}
O.~R. {Musin}, ``{The problem of the twenty-five spheres},''
  \href{http://dx.doi.org/10.1070/RM2003v058n04ABEH000651}{{\em Russian
  Mathematical Surveys} {\bf 58} (2003) no.~4, 794--795}.

\bibitem{Levenstein1979}
V.~I. Leven\v{s}te\u{\i}n, ``Boundaries for packings in {$n$}-dimensional
  {E}uclidean space,'' {\em Dokl. Akad. Nauk SSSR} {\bf 245} (1979) no.~6,
  1299--1303.

\bibitem{ODLYZKO1979210}
A.~Odlyzko and N.~Sloane, ``New bounds on the number of unit spheres that can
  touch a unit sphere in n dimensions,''
  \href{http://dx.doi.org/https://doi.org/10.1016/0097-3165(79)90074-8}{{\em
  Journal of Combinatorial Theory, Series A} {\bf 26} (1979) no.~2, 210--214}.

\bibitem{laminatedLattices}
J.~H. Conway and N.~J.~A. Sloane, ``Laminated lattices,'' {\em Annals of
  Mathematics} {\bf 116} (1982) no.~3, 593--620.

\bibitem{Delsarte_Goethals_Seidel}
P.~Delsarte, J.~M. Goethals, and J.~J. Seidel, ``Spherical codes and designs,''
  \href{http://dx.doi.org/10.1007/BF03187604}{{\em Geometriae Dedicata} {\bf 6}
  (1977) no.~3, 363--388}.

\bibitem{MR752931}
B.~B. Venkov, ``Even unimodular extremal lattices,'' vol.~165, pp.~43--48.
\newblock 1984.
\newblock Algebraic geometry and its applications.

\bibitem{Pache_designs}
C.~Pache, ``Shells of selfdual lattices viewed as spherical designs,''
  \href{http://dx.doi.org/10.1142/S0218196705002797}{{\em International Journal
  of Algebra and Computation} {\bf 15} (2005) no.~05n06, 1085--1127}.

\bibitem{thurstonBook}
W.~P. Thurston, {\em Three-Dimensional Geometry and Topology, Volume 1}.
\newblock Princeton University Press, 1997.

\bibitem{Bettiol2018}
R.~G. Bettiol, A.~Derdzinski, and P.~Piccione, ``{Teichm{\"u}ller theory and
  collapse of flat manifolds},''
  \href{http://dx.doi.org/10.1007/s10231-017-0723-7}{{\em Annali di Matematica
  Pura ed Applicata (1923 -)} {\bf 197} (2018) no.~4, 1247--1268},
  \href{http://arxiv.org/abs/1705.08431}{{\tt arXiv:1705.08431 [math.DG]}}.

\bibitem{Charlap:1986bgfm}
L.~S. Charlap, \href{http://dx.doi.org/10.1007/978-1-4613-8687-2}{{\em
  {Bieberbach Groups and Flat Manifolds}}}.
\newblock Springer, 1986.

\bibitem{wolf2011spaces}
J.~A. Wolf, {\em Spaces of Constant Curvature}.
\newblock AMS Chelsea Pub., 2011.

\bibitem{10.1063/1.2735211}
L.~Nilse, ``Classification of 1d and 2d orbifolds,''
  \href{http://dx.doi.org/10.1063/1.2735211}{{\em AIP Conference Proceedings}
  {\bf 903} (2007) no.~1, 411--414},
  \href{http://arxiv.org/abs/hep-ph/0601015}{{\tt arXiv:hep-ph/0601015
  [hep-ph]}}.

\bibitem{kaye_orbifold}
A.~Kaye, ``Two-dimensional orbifolds,'' University of Chicago, VIGRE REU 2007.
\newblock
  \url{https://www.math.uchicago.edu/~may/VIGRE/VIGRE2007/REUPapers/FINALFULL/Kaye.pdf}.

\bibitem{Peng:2019xoj}
Z.-P. Peng, L.~Lindblom, and F.~Zhang, ``{Scalar, Vector and Tensor Harmonics
  on the Flat Compact Orientable Three-Manifolds},''
  \href{http://dx.doi.org/10.1088/1475-7516/2019/12/042}{{\em JCAP} {\bf 12}
  (2019)  042}, \href{http://arxiv.org/abs/1909.06721}{{\tt arXiv:1909.06721
  [gr-qc]}}.

\bibitem{COMPACT:2023rkp}
{\bf COMPACT} Collaboration, J.~R. Eskilt {\em et al.}, ``{Cosmic topology.
  Part~IIa. Eigenmodes, correlation matrices, and detectability of
  orientable~Euclidean manifolds},''
  \href{http://dx.doi.org/10.1088/1475-7516/2024/03/036}{{\em JCAP} {\bf 03}
  (2024)  036}, \href{http://arxiv.org/abs/2306.17112}{{\tt arXiv:2306.17112
  [astro-ph.CO]}}.

\bibitem{Taronna:2010qq}
M.~Taronna, ``{Higher Spins and String Interactions},''
  \href{http://arxiv.org/abs/1005.3061}{{\tt arXiv:1005.3061 [hep-th]}}.

\bibitem{Dolan:2011dv}
F.~A. Dolan and H.~Osborn, ``{Conformal Partial Waves: Further Mathematical
  Results},'' \href{http://arxiv.org/abs/1108.6194}{{\tt arXiv:1108.6194
  [hep-th]}}.

\bibitem{Costa_2016}
M.~S. Costa, T.~Hansen, J.~Penedones, and E.~Trevisani, ``{Projectors and seed
  conformal blocks for traceless mixed-symmetry tensors},''
  \href{http://dx.doi.org/10.1007/jhep07(2016)018}{{\em JHEP} {\bf 2016} (2016)
   018}, \href{http://arxiv.org/abs/1603.05551}{{\tt arXiv:1603.05551
  [hep-th]}}.

\bibitem{Csaki:2003dt}
C.~Csaki, C.~Grojean, H.~Murayama, L.~Pilo, and J.~Terning, ``{Gauge theories
  on an interval: Unitarity without a Higgs},''
  \href{http://dx.doi.org/10.1103/PhysRevD.69.055006}{{\em Phys. Rev. D} {\bf
  69} (2004)  055006}, \href{http://arxiv.org/abs/hep-ph/0305237}{{\tt
  arXiv:hep-ph/0305237}}.

\bibitem{Simmons-Duffin:2015qma}
D.~Simmons-Duffin, ``{A Semidefinite Program Solver for the Conformal
  Bootstrap},'' \href{http://dx.doi.org/10.1007/JHEP06(2015)174}{{\em JHEP}
  {\bf 06} (2015)  174}, \href{http://arxiv.org/abs/1502.02033}{{\tt
  arXiv:1502.02033 [hep-th]}}.

\bibitem{Landry:2019qug}
W.~Landry and D.~Simmons-Duffin, ``{Scaling the semidefinite program solver
  SDPB},'' \href{http://arxiv.org/abs/1909.09745}{{\tt arXiv:1909.09745
  [hep-th]}}.

\bibitem{NESTEROV}
Y.~Nesterov, ``Random walk in a simplex and quadratic optimization over convex
  polytopes,''. \url{https://ideas.repec.org/p/cor/louvco/2003071.html}.

\bibitem{ElkiesLectureNotes}
N.~Elkies. \url{https://people.math.harvard.edu/~elkies/M272.19/nov04.pdf},
  2019.

\bibitem{MR1484478}
W.~Bosma, J.~Cannon, and C.~Playoust, ``The {M}agma algebra system. {I}. {T}he
  user language,'' \href{http://dx.doi.org/10.1006/jsco.1996.0125}{{\em J.
  Symbolic Comput.} {\bf 24} (1997) no.~3-4, 235--265}.

\bibitem{Zamolodchikov:1977nu}
A.~B. Zamolodchikov and A.~B. Zamolodchikov, ``{Relativistic Factorized
  S-Matrix in Two-Dimensions Having O(N) Isotopic Symmetry},''
  \href{http://dx.doi.org/10.1016/0550-3213(78)90239-0}{{\em JETP Lett.} {\bf
  26} (1977)  457}.

\bibitem{Paulos:2016but}
M.~F. Paulos, J.~Penedones, J.~Toledo, B.~C. van Rees, and P.~Vieira, ``{The
  S-matrix bootstrap II: two dimensional amplitudes},''
  \href{http://dx.doi.org/10.1007/JHEP11(2017)143}{{\em JHEP} {\bf 11} (2017)
  143}, \href{http://arxiv.org/abs/1607.06110}{{\tt arXiv:1607.06110
  [hep-th]}}.

\bibitem{9D_Hermite}
M.~{Dutour Sikiri{\'c}} and W.~{van Woerden}, ``{The lattice packing problem in
  dimension 9 by Voronoi's algorithm},''
  \href{http://arxiv.org/abs/2508.20719}{{\tt arXiv:2508.20719 [math.NT]}}.

\end{thebibliography}\endgroup

\end{document}